\DeclareMathOperator{\vol}{vol}
\newcommand{\Z}{\ensuremath{\mathbb{Z}}}
\newcommand{\N}{\ensuremath{\mathbb{N}}}
\newcommand{\R}{\ensuremath{\mathbb{R}}}
\title{%
    On lattice constructions D and D' from $q$-ary linear codes
    % Please, capitalize only the first word
    }
\author{%
    Franciele do Carmo Silva, Ana Paula de Souza, Eleonesio Strey and Sueli Irene Rodrigues Costa
    % Please, use "Firstname Lastname" format, without abreviations
    }
\abstract{%
    Multilevel lattice codes, such as those associated with Constructions $C$, $\overline{D}$, D and D', have relevant applications in communications. In this paper, we investigate some properties of lattices obtained via Constructions D and D' from $q$-ary linear codes. Connections with Construction A, generator matrices, expressions and bounds for the lattice volume and minimum distances are derived. Extensions of previous results regarding construction and decoding of binary and $p$-ary  linear codes ($p$ prime) are also presented.
    }
\keywords{%
    Lattices, Codes over rings, Constructions D and D', Coding gain, Decoding of Construction D'.
    }
\begin{document}
\VOLUME{31}
\YEAR{2023}
\NUMBER{2}
\firstpage{173}
\DOI{https://doi.org/10.46298/cm.11146}

% Here is where the main text should be typed.

% Please, consider the following suggestions while preparing your text:
% (Following these suggestions will speed up the editorial process.)
% * Avoid starting a new sentence with a mathematical formula;
% * Try to separate adjacent formulas with words;
% * Avoid inline formulas longer than half of a line. You can use math
%   displays (\[...\]) instead;
% * Consider the use the enumerate and itemize environments for lists;
% * Consider the use of \dots, \ldots, \dotsc, \cdot, etc, instead of "..."
%   or ".";
% * Instead of numbering or citing an article by hand (using parenthesis or
%   brackets), consider the use of \cite, \ref and \eqref for citations and
%   cross-references;
% * Try to avoid inserting horizontal or vertical spacing, such as \hskip,
%   \vskip and \bigskip;
% * Try to avoid inserting line or page brakes, such as \\, \newpage and
%   \clearpage.

\section{Introduction}

Lattices are discrete additive subgroups of $\R^{n}$ that have attracted attention, due to several applications in coding for reliable and secure communications. Through their rich algebraic and geometric structures, they can achieve the capacity of the additive white Gaussian channel (AWGN) \cite{erez2004achieving}. Regarding security, lattices have been also used in coding for wiretap channels \cite{damir2021well} and currently compose one of the main approaches in the so-called Post-Quantum Cryptography \cite{peikert2016decade}.

The association of lattices with codes is natural \cite{conway2013sphere}, however, lattice code construction with good performance and practical decoding is still a hard problem. In order to reduce the decoding complexity, a possible direction is the construction of multilevel lattices from a family of nested codes, which allows multistage decoding. Similar techniques are also applied in a more general sense, as introduced in \cite{imai1977new}, to obtain multilevel lattice codes, even when the constructions do not necessarily form a lattice, what include the so-called Constructions $\overline{D}$ \cite{forney2000sphere}, C \cite{leech1971sphere}, $C^{\ast}$ \cite{bollauf2019multilevel}, D and D' \cite{leech1964some, bos1982further, barnes1983new, conway2013sphere, forney2000sphere}. These constructions are extensively studied, especially for the binary case, and appear in papers such as \cite{sadeghi2010, silva2020multilevel, bollauf2019multilevel, zhou2021encoding} and references therein. Recent works deal with generalizations of Constructions D, D' and $\overline{\mbox{D}}$ to linear codes over finite fields \cite{feng2011lattice}, codes over the ring $\Z_q$ of integers modulo $q$ \cite{strey2017lattices, strey2018bounds} and for cyclic codes over finite fields (Construction $D^{(cyc)}$) \cite{hu2020strongly}. It is well-known that some remarkable lattices with higher coding gain can be described via binary code Constructions D  and D', as turbo lattices \cite{sakzad2010construction}, the Barnes-Wall lattices \cite{conway2013sphere} and LDPC lattices \cite{sadeghi2006low}. Several proposals and analyses of multistage decoding have been presented in \cite{silva2020multilevel, sadeghi2006low, zhou2022construction, matsumine2018construction}.

Regarding codes over finite rings, a great interest came from the discovery of good nonlinear binary codes connected via the Gray map to linear codes over $\Z_{4}$ \cite{calderbank1994z4}. This study motivated several works to consider codes over more general finite rings, such as $\Z_{2k}$ and $\Z_{2^{k}}$, and their respective Gray maps \cite{wan1997quaternary, bannai1999type, dougherty2011codes}. In particular, self-dual codes over $\Z_{2k}$ have attracted interest because of their connection with even unimodular lattices \cite{bonnecaze1995quaternary, dougherty2011codes}. Under these motivations, in this paper, we focus on Constructions D, D' and A from nested linear codes over $\Z_q$. Our objective is to study some general properties of these constructions, such as volume, $L_{\mathrm{P}}$-minimum distance, with $1 \leq \mathrm{P} \leq \infty$, and bounds for coding gain. For this, we establish some relations between Construction D' and A and present bounds for these parameters in terms of their underlying codes or their duals. We also extend a multistage decoding method with re-encoding to Construction D' from $q$-ary linear codes under specific conditions.

This paper is organized as follows. Concepts and preliminary results are presented in Section 2. In Section 3, it is pointed out some known properties of Constructions D and D' and by the association of Construction D' with Construction A (Corollary \ref{teomatrizgeradoraD'}), expressions for a generator matrix (Corollary \ref{cormatrizgeradoraD'} and Corollary \ref{CorGeneralizaZhou}), volume (Corollary \ref{CorVolumeDlinhaviaA} and Remark \ref{volviaConstA}) and minimum distance (Corollary \ref{cormindistance}) of this construction are derived. In Section 4, we obtain a lower and an upper bound, respectively, for the volume of the lattices obtained by Constructions D and D' (Theorem \ref{num_max_pontos}, \ref{num_max_pontosDlinha}) and discuss specific conditions such that they can be achieved (Theorem \ref{T2} and Corollaries \ref{ConstDlinhaLI} and \ref{CorVolumeDlinhaCondTeorema}). Also, it is characterized by the $L_{\mathrm{P}}$-minimum distance and coding gain of lattices obtained via these constructions under certain conditions by using the minimum distance of the nested codes or their duals (Theorems \ref{TeoDistLpDbarra}, Corollaries \ref{cordistLpConstD}, \ref{distLpDlinhaigualdade}, \ref{PropCodinggain}). Specific minimum distance bounds for lattices from binary codes are derived (Theorem \ref{TeodistLpDlinha}). In Section 5, a known multistage decoding method \cite{zhou2022construction} with re-encoding for Construction D' over binary codes is extended to $q$-ary codes under specific conditions. Concluding remarks are included in Section \ref{SecConclusao}.

\section{Preliminaries}

This section is devoted to presenting some concepts, notations and results to be used in the next sections. We may quote \cite{conway2013sphere} and \cite{zamir2014lattice} as general references.

Our notations follow the convention for vectors in $\R^{n}$, as well as $n$-tuples in $\Z_{q}^{n}$, in bold letters and $\boldsymbol{0}$ denotes the null vector. The mapping $\rho: \Z \rightarrow \Z_{q}$ is the natural reduction ring homomorphism and $\sigma: \Z_q \rightarrow \Z$ is the standard inclusion map, extended to vectors and matrices in a component-wise way. For simplicity, we abuse the notation, using $\sigma$ and $\rho$ for $\Z_q$ and $\Z_{q}^{n}$ and omitting them in the numerical examples. When these maps are associated with $\Z_{q^{a}}$ or $\Z_{q^{a}}^{n}$, with $a > 1$, we will refer to them as $\sigma_{q^{a}}$ and $\rho_{q^{a}}$, respectively. The order of an element $\boldsymbol{h} \in \Z_q^{n}$, denoted by $\mathcal{O}(\boldsymbol{h})$, is defined as the smallest natural $m$ such that $m\boldsymbol{h} = \boldsymbol{0}$ in $\Z_{q}^{n}$ ( i.e., $m \sigma(\boldsymbol{h}) \equiv \boldsymbol{0} \mod q$).

A $q$-ary linear code $\mathcal{C}$ of length $n$ over $\Z_q$ is a $\Z_q$-module of $\Z_{q}^{n}$, that is, an additive subgroup of $\Z_{q}^{n}$. The terminology of $q$-ary codes is also applied in the study of codes over finite fields $\mathbb{F}_{q}$, however, in this work we use $q$-ary code to refer to a code over $\Z_{q}$. The code generated by the $n$-tuples $\boldsymbol{b}_1, \ldots, \boldsymbol{b}_k \in \Z_q^{n}$ is denoted by $\mathcal{C} = \langle \boldsymbol{b}_1, \ldots, \boldsymbol{b}_k\rangle$. We say that a set $\left\{\boldsymbol{b}_1, \ldots, \boldsymbol{b}_k\right\}$ is a basis for $\mathcal{C}$ if they are linearly independent over $\Z_q$ and they generate $\mathcal{C}$. In contrast to codes over finite fields, when $q$ is not a prime number there are $q$-ary linear codes that do not admit a basis. Despite this, every $q$-ary code $\mathcal{C}$ can be characterized by a minimal set of generators, due to its finitely generated module structure \cite{rotman2015advanced, jorge2012reticulados}. For a code $\mathcal{C}$, two different minimal sets of generators always have the same cardinality \cite{rotman2015advanced}. A generator matrix for a $q$-ary code $\mathcal{C}$ is a matrix whose rows constitute a minimal set of generators for $\mathcal{C}$.

The usual inner product of two vectors $\boldsymbol{x}$ and $\boldsymbol{y}$ in $\R^{n}$ is denoted by $\boldsymbol{x} \cdot \boldsymbol{y}$. For each pair of $n$-tuples $\boldsymbol{x} = (x_1, \ldots, x_n)$ and $\boldsymbol{y} = (y_1, \ldots, y_n)$ in $\Z_{q}^{n}$, we define the (Euclidean) semi-inner product between $\boldsymbol{x}$ and $\boldsymbol{y}$ as $\boldsymbol{x} \cdot \boldsymbol{y} := x_1y_1 + \cdots + x_ny_n \in \Z_q$, where $x_iy_i$ denote the usual product over the ring $\Z_q$ for each $i = 1, \ldots, n$. When $q$ is not a prime number, this is not an inner product, since there exist nonzero elements whose product is zero. Given a $q$-ary linear code $\mathcal{C}$, the set $\mathcal{C}^{\perp} := \left\{\boldsymbol{x} \in \Z_q^{n}: \boldsymbol{x} \cdot \boldsymbol{y} = \boldsymbol{0} \hspace{0.1cm} ,\forall \boldsymbol{y} \in \mathcal{C} \right\}$ is always a linear code over $\Z_q$, which is called the dual code of $\mathcal{C}$. If $\mathcal{C}_1$ and $\mathcal{C}_2$ are $q$-ary linear codes such that $\mathcal{C}_2 \subseteq \mathcal{C}_1$, then $\mathcal{C}_1^{\perp} \subseteq \mathcal{C}_2^{\perp}$.

A lattice $\Lambda \subset \R^{n}$ is a discrete additive subgroup of $\R^{n}$. Equivalently, $\Lambda \subset \R^{n}$ is a lattice if, and only if, there exists a set of linearly independent vectors $\boldsymbol{v}_1, \ldots, \boldsymbol{v}_m \in \R^{n}$ such that $\Lambda$ is given by all integer linear combinations of these vectors \cite{cassels2012introduction}.

Under this description, we call the set $\left\{\boldsymbol{v}_1, \ldots, \boldsymbol{v}_{m}\right\}$ a basis of $\Lambda$ and the number $m$, the rank of $\Lambda$. When $m = n$, we say that $\Lambda$ is a full-rank lattice. The matrix $\boldsymbol{M}$ whose columns are the vectors $\boldsymbol{v}_1, \ldots, \boldsymbol{v}_m$ is a generator matrix of $\Lambda$. Two matrices $\boldsymbol{M}_1$ and $\boldsymbol{M}_2$ are generator matrices of the same lattice $\Lambda$ if, and only if, there is a unimodular matrix $\boldsymbol{U}$ (i.e., a matrix with integer entries and $\det \boldsymbol{U} = \pm 1$) such that $\boldsymbol{M}_2 = \boldsymbol{M}_1 \boldsymbol{U}$. Given a generator matrix $\boldsymbol{M}$ for $\Lambda$, we define the associated Gram matrix as $\boldsymbol{\mathcal{G}} = \boldsymbol{M}^{T} \boldsymbol{M}$. The volume of $\Lambda$ is defined as $\vol\Lambda = \sqrt{\det \boldsymbol{\mathcal{G}}}$, where $\boldsymbol{\mathcal{G}}$ is a Gram matrix for $\Lambda$. In this paper, we deal only with full-rank lattices ($m = n$) and, in this case, $ \vol\Lambda = |\det \boldsymbol{M}|$, where $\boldsymbol{M}$ is a generator matrix for $\Lambda$. For a full-rank lattice $\Lambda$, the dual is defined as $\Lambda^{\ast} = \left\{\boldsymbol{y} \in \R^{n} : \boldsymbol{y} \cdot \boldsymbol{x} \in \Z  \hspace{0.1cm}, \forall \boldsymbol{x} \in \Lambda\right\}$. It can be shown that $\boldsymbol{M}$ is a generator matrix for $\Lambda$ if, and only if, $(\boldsymbol{M}^{T})^{-1}$ is a generator matrix for $\Lambda^{\ast}$.

Considering a distance $d$ in $\R^{n}$, we say that two lattices $\Lambda_1, \Lambda_2 \subset \R^{n}$ are $d$-equivalent with respect to a distance $d$ if there exist a number $k \in \R^{\ast}$ and an isometry $\phi$ in $\R^{n}$ with respect to $d$ such that $\Lambda_{2} = k \phi(\Lambda_1)$. Also, the minimum distance of $\Lambda$ with respect to distance $d$ is defined as $d_{d}(\Lambda) := \min \left\{d(\boldsymbol{x}, \boldsymbol{y}): \boldsymbol{x}, \boldsymbol{y} \in \Lambda \text{ and } \boldsymbol{x} \neq \boldsymbol{y}\right\}$. The packing radius $r_{\text{pack}, d}$ of a lattice $\Lambda$, with respect to a distance $d$, is half of the minimum distance of $\Lambda$ relative to this same distance. We consider here the usual $L_{\mathrm{P}}$-distances in $\R^{n}$ and in $\Z_{q}^{n}$ associated with the $L_{\mathrm{P}}$-norm.
The  $L_{\mathrm{P}}$-distance, with $1 \leq \mathrm{P} \leq \infty$, between two elements $\boldsymbol{x}$ and $\boldsymbol{y}$ in $\R^{n}$ is defined as:
\begin{equation*}
    d_{\mathrm{P}}(\boldsymbol{x}, \boldsymbol{y}) :=\left(\displaystyle \sum_{i = 1}^{n} |x_i - y_i|^{\mathrm{P}}\right)^{1/\mathrm{P}} \text{ for } 1 \leq \mathrm{P} < \infty \hspace{0.3cm}\text{ and } \hspace{0.3cm}
    d_{\infty}(\boldsymbol{x}, \boldsymbol{y}) := \max \big\{|x_i - y_i|: \ i = 1, \ldots, n\big\}.
\end{equation*}
Given a lattice $\Lambda \subset \R^{n}$, the minimum $L_{\mathrm{P}}$-distance of $\Lambda$ is defined as
$$d_{\mathrm{P}}(\Lambda) = \min \big\{d_{\mathrm{P}}(\boldsymbol{x}, \boldsymbol{y}): \ \boldsymbol{x}, \boldsymbol{y} \in \Lambda \text{ and } \boldsymbol{x} \neq \boldsymbol{y}\big\}.$$ The Lee distance, introduced by \cite{lee1958some} and \cite{ulrich1957non}, is the induced $L_1$-distance from $\Z$ in $\Z_{q}$ and it is defined as $d_{Lee}(x,y) = \min \left\{\sigma(x - y), q - \sigma(x - y)\right\}$ and for two $n$-tuples $\boldsymbol{x}, \boldsymbol{y} \in \Z_q^{n}$ is given by
\begin{equation*}
    d_{Lee}(\boldsymbol{x}, \boldsymbol{y}) := \displaystyle \sum_{i = 1}^{n} d_{Lee}(x_i, y_i).
\end{equation*}
In addition, the correspondent induced $L_{\mathrm{P}}$-distance from $\Z^{n}$ in $\Z_q^{n}$ (also called \emph{$\mathrm{P}$-Lee distance}) \cite{jorge2013q}, for $\boldsymbol{x}, \boldsymbol{y} \in \Z_q^{n}$ is given by
\begin{equation*}
    d_{\mathrm{P}} (\boldsymbol{x}, \boldsymbol{y}) := \left(\displaystyle \sum_{i = 1}^{n} d_{Lee}(x_i, y_i)^{\mathrm{P}}\right)^{1/\mathrm{P}} \text{ for } 1 \leq \mathrm{P} < \infty  \text{ and }
    d_{\infty}(\boldsymbol{x}, \boldsymbol{y}) := \max \left\{d_{Lee}(x_i,y_i): \ i = 1, \ldots, n\right\}.
\end{equation*}

We denote the $L_{\mathrm{P}}$-norm of a vector $\boldsymbol{x} \in \Z^{n}$ as $||\boldsymbol{x}||_{\mathrm{P}} = d_{\mathrm{P}}(\boldsymbol{x}, \boldsymbol{0})$ and, similarly, the $\mathrm{P}$-Lee norm of $\boldsymbol{y} \in \Z_q^{n}$ as $||\boldsymbol{y}||_{\mathrm{P}} = d_{\mathrm{P}}(\boldsymbol{y}, \boldsymbol{0})$. The minimum $L_{\mathrm{P}}$-distance of a linear code $\mathcal{C} \subseteq \Z_q^{n}$ is defined as $d_{\mathrm{P}}(\mathcal{C}) := \min \left\{d_{\mathrm{P}}(\boldsymbol{x}, \boldsymbol{y}): \boldsymbol{x}, \boldsymbol{y} \in \mathcal{C} \text{ and } \boldsymbol{x} \neq \boldsymbol{y}\right\}$.

For $P = 2$ (Euclidean distance), we use $r_{\text{pack}, d_2}$, $\Delta(\Lambda)$ and $\delta(\Lambda)$ to denote the packing radius, density and center density, respectively. The coding gain and the center density of a full-rank lattice $\Lambda \subset \R^{n}$ are defined, respectively, as
\begin{equation*}
    \gamma(\Lambda) := \dfrac{d_2^{2}(\Lambda)}{(\vol\Lambda)^{2/n}} \hspace{0.3cm} \text{ and } \hspace{0.3cm} \delta(\Lambda) := \dfrac{r_{\text{pack}, d_2}^{n}(\Lambda)}{\vol \Lambda} = 2^{-n}\gamma(\Lambda)^{n/2}.
\end{equation*}

The strong association between lattices in $\Z^{n}$ and linear codes in $\Z_{q}^{n}$ comes from the fact that given a subset $S \subseteq \Z_{q}^{n}$, $\rho^{-1}(S)$ is a lattice if, and only if, $S$ is a $q$-ary linear code \cite{costa2017lattices}. This leads to Construction A  definition \cite{conway2013sphere, zamir2014lattice, costa2017lattices}. Given a linear code $\mathcal{C} \subseteq \Z_q^{n}$, the \emph{Construction A lattice associated with $\mathcal{C}$}, denoted by $\Lambda_A(\mathcal{C})$, is defined as $\Lambda_A(\mathcal{C}) = \rho^{-1}(\mathcal{C}) = \sigma(\mathcal{C}) + q\Z^{n}$. It is shown in \cite{zamir2014lattice} that $\Lambda_A(\mathcal{C})$ is always a full-rank lattice.

\section{Construction D and D': general properties} \label{SectionPropGerais}
In this section, we present some general properties of Construction D and D'. For this, we first need to establish connections between Constructions D and D' and,  subsequently, with Construction A. More details about these connections can be seen in \cite{strey2017lattices, strey2018bounds}.

In the following, the results and definitions cited are adapted versions of \cite{strey2017lattices} using the scaled version of Construction D presented next.

\begin{definition}[Construction D]\label{DefConstD}
 Let $\mathds{Z}^n_q \supseteq \mathcal{C}_1 \supseteq \mathcal{C}_2  \supseteq \cdots  \supseteq \mathcal{C}_a$ be a family of nested linear codes such that $\mathcal{C}_{\ell} = \langle \boldsymbol{b}_1, \ldots,\boldsymbol{b}_{k_\ell}  \rangle$ with $\ell = 1,2,\ldots,a$ for a set of $n$-tuples $\{\boldsymbol{b}_1, \ldots,\boldsymbol{b}_{k_1} \}$ in $\mathds{Z}^n_q$, with integers $k_1 \geq k_2 \geq \cdots \geq k_a \geq 0=:k_{a + 1}$. The lattice $\Lambda_D$ is defined as
$$\Lambda_D = \left\{ q^a\boldsymbol{z}+\sum_{s=1}^{a}\sum_{i=k_{s+1}+1}^{k_s}{\alpha_i^{(s)}q^{a-s}\sigma(\boldsymbol{b}_i)}: \  \boldsymbol{z} \in \mathbb{Z}^n \ \mbox{and} \ 0 \leq \alpha_i^{(s)} < q^s \right\}.$$
Equivalently, we can write
\begin{equation*}
    \Lambda_D = \left\lbrace q^a\boldsymbol{z}+\sum_{s=1}^{a}\sum_{i=k_{s+1}+1}^{k_s}{\alpha_i^{(s)}q^{a-s}\sigma(\boldsymbol{b}_i)}: \  \boldsymbol{z} \in \mathbb{Z}^n \ \mbox{and} \ 0 \leq \alpha_i^{(s)} < \mathcal{O}(\boldsymbol{b}_i)q^{s-1} \right\rbrace,
\end{equation*}
where $\mathcal{O}(\boldsymbol{b}_i)$ is the order of $\boldsymbol{b}_i$ over $\Z_q$ for each $i = 1, \ldots, k_1$.
\end{definition}

\begin{remark}
    The set $\Lambda_D$ is a full-rank lattice in $\R^n$ \cite{strey2017lattices}. Also, when $a=1$, the Construction $D$ coincides with the Construction A. If $q$ is prime, each linear code $\mathcal{C}_{\ell}$ is a vector subspace of $\Z_q^n$ and we can always choose as parameters $k_\ell= \dim{\mathcal{C}_{\ell}}$ $(\ell=1, \ldots, a)$ and $n$-tuples linearly independent $\boldsymbol{b}_1, \ldots,  \boldsymbol{b}_{k_1} \in \Z_q^n$ such that $\mathcal{C}_\ell=\left\langle \boldsymbol{b}_1, \ldots, \boldsymbol{b}_{k_\ell}\right\rangle$. When $q = 2$, the Definition \ref{DefConstD} restricted to these parameters coincides with the original version of the Construction $D$ presented in \cite{barnes1983new,conway2013sphere} without the restriction on the minimum distance.
\end{remark}

\begin{remark}
    It is important to observe that Construction D depends not only on the nested codes as a whole but also on their generators chosen \cite{strey2017lattices}. As an example, consider the chains of nested linear codes
 $\mathcal{C}_2 \subseteq \mathcal{C}_{1} \subseteq \Z_5^{3}$ and $\Hat{\mathcal{C}}_2 \subseteq \Hat{\mathcal{C}}_1 \subseteq \Z_5^{3}$, where $\mathcal{C}_2 = \langle (1,2,0)\rangle$, $\mathcal{C}_1 = \langle (1,2,0),(0,0,1)\rangle$, $\Hat{\mathcal{C}}_2 = \langle (3,1,0)\rangle$ and $\Hat{\mathcal{C}}_1 = \langle (3,1,0), (0,0,1) \rangle$. Note that $\mathcal{C}_1 = \Hat{\mathcal{C}}_1$ and $\mathcal{C}_2 = \Hat{\mathcal{C}}_2$, but the associated Construction D provides different lattices as can be seen from next Theorem \ref{TeoconstDcomoA}, once it guarantees that this construction can be seen as a Construction $A$ where the generator matrices for the associated codes in $\Z_{25}^{3}$ are, respectively,
    \begin{equation*}
    \boldsymbol{G} = \left[ \begin{array}{ccc}
        1 & 2 & 0 \\
        0 & 0 & 5
    \end{array}\right] \hspace{0.5cm} \text{ and} \hspace{0.5cm}
    \boldsymbol{\Hat{G}} = \left[\begin{array}{ccc}
         3 & 1 & 0 \\
         0 & 0 & 5
    \end{array}\right].
\end{equation*}
As described in \cite{costa2017lattices},  generator matrices for the associated Construction $A$ lattices are obtained by the Hermite Normal Form of matrices which have the code generators for the two first columns added by the three columns which vectors $(25,0,0), (0,25,0),(0,0,25)$ are the following:
    \begin{equation*}
    \boldsymbol{M} = \left[ \begin{array}{ccc}
        1 & 0 & 0 \\
        2 & 25 & 0 \\
        0 & 0 & 5
    \end{array}\right] \hspace{0.5cm} \text{ and} \hspace{0.5cm}
    \boldsymbol{\Hat{M}} = \left[\begin{array}{ccc}
         1 & 0 & 0 \\
         17 & 25 & 0 \\
         0 & 0 & 5
    \end{array}\right],
\end{equation*}
That is, $\boldsymbol{M}$ and $\boldsymbol{\Hat{M}}$ are generators matrices of the Constructions $D$ lattices $\Lambda_{D}$ and $\Hat{\Lambda}_D$ obtained from the chains $\mathcal{C}_2 \subseteq \mathcal{C}_{1} \subseteq \Z_5^{3}$ and $\Hat{\mathcal{C}}_2 \subseteq \Hat{\mathcal{C}}_1 \subseteq \Z_5^{3}$ with the above chosen generators, respectively.
Note that $\Lambda_D \neq  \Hat{\Lambda}_D$ since $\boldsymbol{U} = \boldsymbol{\Hat{M}}^{-1}\boldsymbol{M}$ is not unimodular. Moreover, as $d_2(\Lambda_D) = \sqrt{5}$, $d_2(\Hat{\Lambda}_D) = \sqrt{10}$ and $\vol\Lambda_D = 125 = \vol\Hat{\Lambda}_D$, we have that center packing densities of these lattices are $ \delta(\Lambda_D) \approx  0.011$ and  $ \delta(\Hat{\Lambda}_D) \approx  0.032$ , so these lattices are not equivalent. In this case, they have the same volume, but in general, it is not always true. If we consider $\tilde{\mathcal{C}}_2 = \langle (3,1,0),(4,3,0)\rangle$ and $\tilde{\mathcal{C}}_1 = \langle (3,1,0),(4,3,0),(0,0,1) \rangle$ over $\Z_{5}$, the chain remains the same with different choice of generators. Now $d_2(\Lambda_D)= \sqrt{5} = d_2(\tilde{\Lambda}_D)$, but $\vol\tilde{\Lambda}_D = 25$.
\end{remark}

A more natural construction from nested codes, but which does not always produce a lattice, is the Construction $\overline{D}$ also known as Construction by Code Formula \cite{forney1988coset}.

\begin{definition}[Construction $\overline{D}$]  Let $\mathds{Z}^n_q \supseteq \mathcal{C}_1 \supseteq \mathcal{C}_2  \supseteq \cdots  \supseteq \mathcal{C}_a$ be a family of nested linear codes, the set $\Gamma_{\overline{D}}$ is defined as follows
$$\Gamma_{\overline{D}} = q^a\mathds{Z}^n + q^{a-1}\sigma(\mathcal{C}_{1}) +  \cdots + q^{a-i}\sigma(\mathcal{C}_i) + \cdots + q^1\sigma(\mathcal{C}_{a-1}) + \sigma(\mathcal{C}_a).$$
\end{definition}

When $a = 1$, the Construction $\overline{D}$ coincides with the Construction A for linear codes over $\mathds{Z}_q$ and therefore produces a lattice. {\color{black} The set $\Gamma_{\overline{D}} \subseteq \mathds{Z}^n$ is not always a lattice, what leads to define $\Lambda_{\overline{D}}$ as the smallest lattice with respect to the natural inclusion. In this sense, $\Gamma_{\overline{D}} \subseteq \Lambda_{\overline{D}}$ and if $\Lambda \supseteq \Gamma_{\overline{D}}$ is a lattice, then $\Lambda_{\overline{D}} \subseteq \Lambda$. An equivalent description of $\Lambda_{\overline{D}}$ can also be found in {\cite[Thm $8$]{strey2017lattices}}.}

The following theorem states a necessary and sufficient condition for the Construction $\overline{D}$ to be a lattice. For this, in \cite{strey2017lattices} it is proposed an operation in $\mathds{Z}_q^n$ called zero-one addition and denoted by $\ast$, which is defined for each pair of tuples $\boldsymbol{x} = (x_1, \ldots,x_n)$ and $\boldsymbol{y} = (y_1, \ldots,y_n)$ in $\mathds{Z}_q^n$ as
$$ \boldsymbol{x} \ast \boldsymbol{y} = (x_1 \ast y_1, \ldots, x_n \ast y_n),$$
where
$$x_i \ast y_i = \begin{cases}
	0, & \mbox{if } 0 \leq \sigma(x_i) + \sigma(y_i) < q \\
	1, & \mbox{if } q \leq \sigma(x_i) + \sigma(y_i) \leq 2(q-1)	
	\end{cases}$$
for each $i \in \{1, \ldots,n\}$. When $q = 2$ the zero-one addition coincides with the Schur product \cite{kositwattanarerk2014connections}. We say that a family of nested linear codes $\mathds{Z}^n_q \supseteq \mathcal{C}_1 \supseteq \mathcal{C}_2  \supseteq \cdots  \supseteq \mathcal{C}_a$ is closed under the zero-one addition if and only if for any $\boldsymbol{c}_1,\boldsymbol{c}_2 \in \mathcal{C}_\ell$, then $\boldsymbol{c}_1 \ast \boldsymbol{c}_2 \in \mathcal{C}_{\ell-1}$ for all $\ell = 2, \ldots,a$.

\begin{theorem}[{\cite[Thm $4.3$]{strey2017lattices}}]\label{TeoCadeiafechada}
Given a family of nested  linear codes $\mathds{Z}^n_q \supseteq \mathcal{C}_1 \supseteq \mathcal{C}_2  \supseteq \cdots  \supseteq \mathcal{C}_a$, the following statements are equivalent:
\begin{enumerate}
    \item[1.] $\Gamma_{\overline{D}}$ is a lattice.
    %\item[2.] $\Gamma_{\overline{D}} = \Lambda_{\overline{D}}$.
    \item[2.] $\mathds{Z}^n_q \supseteq \mathcal{C}_1 \supseteq \mathcal{C}_2  \supseteq \cdots  \supseteq \mathcal{C}_a$ is closed under the zero-one addition.
    \item[3.] $\Gamma_{\overline{D}} = \Lambda_{D} = \Lambda_{\overline{D}}$.
\end{enumerate}
\end{theorem}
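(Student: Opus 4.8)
The plan is to prove the cycle of implications $(2)\Rightarrow(3)\Rightarrow(1)\Rightarrow(2)$. The single computational fact driving everything is a \emph{carry identity}: for any $\boldsymbol{x},\boldsymbol{y}\in\Z_q^n$ one has, in $\Z^n$,
\begin{equation*}
\sigma(\boldsymbol{x})+\sigma(\boldsymbol{y})=\sigma(\boldsymbol{x}+\boldsymbol{y})+q\,(\boldsymbol{x}\ast\boldsymbol{y}),
\end{equation*}
where the $0$--$1$ vector $\boldsymbol{x}\ast\boldsymbol{y}$, viewed in $\Z^n$, records exactly the coordinates where the representatives overflow $q$. This is verified coordinatewise directly from the definition of $\ast$, splitting into the cases $\sigma(x_i)+\sigma(y_i)<q$ and $\sigma(x_i)+\sigma(y_i)\geq q$. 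All the real content of the theorem is extracted by feeding this identity into the defining sum of $\Gamma_{\overline{D}}$.

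For $(2)\Rightarrow(3)$ I would assume the chain is closed under $\ast$ and first show that $\Gamma_{\overline{D}}$ is closed under addition. Adding $q^a\boldsymbol{z}+\sum_i q^{a-i}\sigma(\boldsymbol{c}_i)$ and $q^a\boldsymbol{w}+\sum_i q^{a-i}\sigma(\boldsymbol{d}_i)$, the carry identity rewrites each level-$i$ term $q^{a-i}(\sigma(\boldsymbol{c}_i)+\sigma(\boldsymbol{d}_i))$ as $q^{a-i}\sigma(\boldsymbol{c}_i+\boldsymbol{d}_i)+q^{a-(i-1)}(\boldsymbol{c}_i\ast\boldsymbol{d}_i)$: the first summand stays at level $i$ since $\mathcal{C}_i$ is a code, and the carry $\boldsymbol{c}_i\ast\boldsymbol{d}_i$ is pushed up to level $i-1$, where closure places it in $\mathcal{C}_{i-1}$ (carries out of level $1$ simply fall into $q^a\Z^n$). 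Hence the sum again lies in $\Gamma_{\overline{D}}$. Since $\Gamma_{\overline{D}}$ is a finite union of cosets of the full-rank sublattice $q^a\Z^n$, its image in the finite group $(\Z_{q^a})^n$ is a finite subset containing $\boldsymbol{0}$ and closed under addition, hence a subgroup; pulling back, $\Gamma_{\overline{D}}$ is an additive subgroup of $\Z^n$, therefore discrete, therefore a lattice, so $\Lambda_{\overline{D}}=\Gamma_{\overline{D}}$. The remaining equality $\Gamma_{\overline{D}}=\Lambda_D$ I would obtain from two inclusions: $\Lambda_D\subseteq\Gamma_{\overline{D}}$ because every generator term $\alpha\, q^{a-s}\sigma(\boldsymbol{b}_i)$ lies in the now-additive group $\Gamma_{\overline{D}}$, and $\Gamma_{\overline{D}}\subseteq\Lambda_D$ by expanding each $\sigma(\boldsymbol{c}_i)$ over the chosen generators and repeatedly applying the carry identity, using closure to absorb every carry term into $\Lambda_D$.

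The implication $(3)\Rightarrow(1)$ is immediate, since $\Lambda_{\overline{D}}$ is a lattice by definition and $(3)$ identifies $\Gamma_{\overline{D}}$ with it. For $(1)\Rightarrow(2)$ I would argue by contraposition: if closure fails, there is a level $\ell$ and codewords $\boldsymbol{c}_1,\boldsymbol{c}_2\in\mathcal{C}_\ell$ with $\boldsymbol{c}_1\ast\boldsymbol{c}_2\notin\mathcal{C}_{\ell-1}$. The elements $q^{a-\ell}\sigma(\boldsymbol{c}_1)$ and $q^{a-\ell}\sigma(\boldsymbol{c}_2)$ both belong to $\Gamma_{\overline{D}}$, and by the carry identity their sum equals $q^{a-\ell}\sigma(\boldsymbol{c}_1+\boldsymbol{c}_2)+q^{a-(\ell-1)}(\boldsymbol{c}_1\ast\boldsymbol{c}_2)$. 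I would then show that this sum escapes $\Gamma_{\overline{D}}$, witnessing that $\Gamma_{\overline{D}}$ is not closed under addition and hence cannot be a lattice.

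The main obstacle is the same in both nontrivial directions: certifying membership in, or exclusion from, $\Gamma_{\overline{D}}$. This requires a uniqueness statement for the base-$q$ ``digit'' representation of elements of $\Gamma_{\overline{D}}$ — roughly, that a suitable reduction modulo a power of $q$ isolates the level-$(\ell-1)$ contribution and forces its reduction mod $q$ to lie in $\mathcal{C}_{\ell-1}$. This reduction-modulo-higher-powers-of-$q$ bookkeeping is precisely what makes the escaping witness in $(1)\Rightarrow(2)$ rigorous and, simultaneously, yields the delicate inclusion $\Gamma_{\overline{D}}\subseteq\Lambda_D$ in $(2)\Rightarrow(3)$; everything else reduces to repeated application of the carry identity.
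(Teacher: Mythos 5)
The paper never proves Theorem \ref{TeoCadeiafechada}: it is imported verbatim from \cite{strey2017lattices} (Theorem 9 there), so there is no in-paper argument to compare yours against. Judged on its own merits, your plan is sound and is essentially the standard argument for code-formula constructions. The carry identity $\sigma(\boldsymbol{x})+\sigma(\boldsymbol{y})=\sigma(\boldsymbol{x}+\boldsymbol{y})+q\,(\boldsymbol{x}\ast\boldsymbol{y})$ is correct (two coordinatewise cases), the cycle $(2)\Rightarrow(3)\Rightarrow(1)\Rightarrow(2)$ is the right decomposition, and you correctly isolate the technical lemma everything hinges on: membership in $\Gamma_{\overline{D}}$ is equivalent to the greedy base-$q$ digit conditions ($\boldsymbol{x}\bmod q$ must reduce into $\mathcal{C}_a$, then $(\boldsymbol{x}-\sigma(\boldsymbol{d}_a))/q \bmod q$ into $\mathcal{C}_{a-1}$, and so on), which is what certifies the escaping witness in $(1)\Rightarrow(2)$ and is provable by a short induction on $a$.

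Two points need shoring up before this is a complete proof. First, in $(2)\Rightarrow(3)$, a single round of carries does not finish the closure-under-addition argument: after rewriting each level-$i$ term, level $i-1$ hosts \emph{two} codeword lifts (the sum $\sigma(\boldsymbol{c}_{i-1}+\boldsymbol{d}_{i-1})$ already there and the incoming carry $\sigma(\boldsymbol{c}_i\ast\boldsymbol{d}_i)$), and a sum of two lifts is not itself a lift, so ``hence the sum again lies in $\Gamma_{\overline{D}}$'' is premature. The clean fix is induction on $a$: write $\Gamma_{\overline{D}} = q\,\Gamma' + \sigma(\mathcal{C}_a)$, where $\Gamma'$ is the construction for the truncated chain $\mathcal{C}_1\supseteq\cdots\supseteq\mathcal{C}_{a-1}$, push the single carry $\boldsymbol{c}_a\ast\boldsymbol{d}_a\in\mathcal{C}_{a-1}$ into $\Gamma'$, and invoke the inductive hypothesis that $\Gamma'$ is closed under addition. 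Second, your inclusion $\Gamma_{\overline{D}}\subseteq\Lambda_D$ ``absorbs'' carry terms into $\Lambda_D$, which tacitly requires $\Lambda_D$ to be closed under addition; this is not visible from Definition \ref{DefConstD}, whose coefficients $\alpha_i^{(s)}$ are bounded, but the paper's remark following that definition (citing \cite{strey2017lattices}) states that $\Lambda_D$ is always a lattice, so you may invoke it — just do so explicitly, and then run your expansion of each $\sigma(\boldsymbol{c}_\ell)$ as an iterated carry computation by descending induction on the level, exactly as you sketch. With these repairs the proposal goes through.
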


\begin{remark}
An immediate consequence of the previous theorem is that if $\mathds{Z}^n_q \supseteq \mathcal{C}_1 \supseteq \mathcal{C}_2  \supseteq \cdots  \supseteq \mathcal{C}_a$  is closed under the zero-one addition then Construction D is the same as Construction $\overline{D}$ and, therefore, it depends only on the codes $\mathcal{C}_1, \mathcal{C}_2,  \ldots,  \mathcal{C}_a$ (and not on their generators).
\end{remark}
%The next theorem, proposed in {\color{black}\cite{strey2017lattices}}, establishes a relation between Constructions D and A.
The next theorem, proposed in {\color{black}\cite{strey2017lattices}}, establishes a relation between Constructions D and A. A version of this result for Construction D from a family of linear codes over $\Z_p$, with $p$ prime, can be also found in {\cite[Prop $2$]{feng2011lattice}}.

%{\color{black}\noindent * acrescentar  que trata-se de uma generalização do trabalho do Danilo...}

\begin{theorem}[{\cite[Thm $3.5$]{strey2017lattices}}]\label{TeoconstDcomoA}
Let $\boldsymbol{G}_1$ be a matrix whose rows are the vectors $\sigma(\boldsymbol{b}_1), \ldots, \sigma(\boldsymbol{b}_{k_1})$ and $\mathcal{C} \subseteq \Z_{q^{a}}^{n}$ the linear $q^a$-ary code generated by the rows of the matrix $\rho_{q^{a}}(\boldsymbol{G})$, where $\boldsymbol{G} =  \boldsymbol{D}\boldsymbol{G}_1$, with $\boldsymbol{D}$ the diagonal matrix given by
	$$d_{jj} = \begin{cases}
	1, & \mbox{for } 1 \leq j \leq k_a; \\
	q, & \mbox{for } k_a < j \leq k_{a-1}; \\
	\vdots  & \\
	q^{a-1}, & \mbox{for } k_2 < j \leq k_1;	
	\end{cases}$$
	Then $\Lambda_D = \Lambda_A(\mathcal{C})$, i.e, $\Lambda_D$ is a $q^a$-ary lattice.
\end{theorem}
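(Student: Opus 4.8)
The plan is to prove the set equality $\Lambda_D = \Lambda_A(\mathcal{C})$ by a double inclusion, after first rewriting $\Lambda_A(\mathcal{C})$ in a form directly comparable with the defining expression for $\Lambda_D$. Since $\mathcal{C} \subseteq \Z_{q^a}^n$ is generated as a $\Z_{q^a}$-module by the rows of $\rho_{q^a}(\boldsymbol{G})$, and the row of $\boldsymbol{G} = \boldsymbol{D}\boldsymbol{G}_1$ indexed by $i$ with $k_{s+1} < i \le k_s$ is precisely $q^{a-s}\sigma(\boldsymbol{b}_i)$, the definition of Construction A for the $q^a$-ary code $\mathcal{C}$ gives
\begin{equation*}
\Lambda_A(\mathcal{C}) = \rho_{q^a}(\mathcal{C}) + q^a\Z^n = \left\{ q^a\boldsymbol{w} + \sum_{s=1}^{a}\sum_{i=k_{s+1}+1}^{k_s} \beta_i\, q^{a-s}\sigma(\boldsymbol{b}_i) : \boldsymbol{w}\in\Z^n,\ \beta_i \in \Z \right\}.
\end{equation*}
Comparing with Definition \ref{DefConstD}, the only discrepancy between this description and that of $\Lambda_D$ is the range of the coefficients: $\Lambda_D$ restricts $0 \le \alpha_i^{(s)} < q^s$, whereas here $\beta_i$ ranges over all of $\Z$. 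So the whole proof reduces to showing that this difference in ranges produces no difference in the resulting sets.

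The inclusion $\Lambda_D \subseteq \Lambda_A(\mathcal{C})$ I would dispatch immediately: any vector of $\Lambda_D$ reduces modulo $q^a$ to $\sum_{s,i}\alpha_i^{(s)} q^{a-s}\sigma(\boldsymbol{b}_i)\bmod q^a$, which is a $\Z_{q^a}$-combination of the generating rows of $\mathcal{C}$ and hence lies in $\mathcal{C}$; thus the vector lies in $\rho_{q^a}(\mathcal{C}) + q^a\Z^n = \Lambda_A(\mathcal{C})$. For the converse $\Lambda_A(\mathcal{C}) \subseteq \Lambda_D$, take $\boldsymbol{x} = q^a\boldsymbol{w} + \sum_{s,i}\beta_i q^{a-s}\sigma(\boldsymbol{b}_i)$. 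The key step is Euclidean division of each coefficient by $q^s$ (with $s$ the block index of $i$): write $\beta_i = \gamma_i q^{s} + \alpha_i^{(s)}$ with $0 \le \alpha_i^{(s)} < q^s$. Then
\begin{equation*}
\beta_i\, q^{a-s}\sigma(\boldsymbol{b}_i) = \gamma_i\, q^{a}\sigma(\boldsymbol{b}_i) + \alpha_i^{(s)}\, q^{a-s}\sigma(\boldsymbol{b}_i),
\end{equation*}
and the first summand lies in $q^a\Z^n$. Collecting all such overflow terms into the free part, $\boldsymbol{x} = q^a\boldsymbol{z} + \sum_{s,i}\alpha_i^{(s)} q^{a-s}\sigma(\boldsymbol{b}_i)$ with $\boldsymbol{z} = \boldsymbol{w} + \sum_{s,i}\gamma_i\sigma(\boldsymbol{b}_i) \in \Z^n$, which is exactly a Construction D vector with coefficients in the required range, so $\boldsymbol{x}\in\Lambda_D$.

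The single genuine observation, and thus the only place any care is needed, is that the exponent $a-s$ multiplying $\sigma(\boldsymbol{b}_i)$ is exactly complementary to the admissible coefficient width $q^s$, so that $q^s \cdot q^{a-s} = q^a$ and any excess beyond $q^s$ is automatically a multiple of $q^a$ that is absorbed by the sublattice $q^a\Z^n$. Beyond this, the argument is bookkeeping with the double index $(s,i)$ and the two reduction maps $\sigma_{q^a}$, $\rho_{q^a}$. I would emphasize that one never needs to know the individual $\Z_{q^a}$-orders of the generating rows $q^{a-s}\sigma(\boldsymbol{b}_i)$ nor whether they are $\Z_{q^a}$-independent: because we let the $\beta_i$ range over all of $\Z$ when describing $\mathcal{C}$ and its lift, the division algorithm applies uniformly, which is what makes the two coefficient ranges interchangeable. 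Full rankness of both lattices is already recorded in the preliminaries, so no separate verification is required.
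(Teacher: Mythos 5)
Your proof is correct and complete. Note that the paper states this theorem as a quoted result (Theorem 5 of \cite{strey2017lattices}) and provides no proof of its own, so there is nothing in-paper to compare against; your argument---rewriting $\Lambda_A(\mathcal{C})$ as the set of all integer combinations of the lifted rows $q^{a-s}\sigma(\boldsymbol{b}_i)$ plus $q^a\Z^n$, then trading arbitrary coefficients $\beta_i$ for the restricted ones via the Euclidean division $\beta_i=\gamma_i q^{s}+\alpha_i^{(s)}$ and absorbing the overflow $\gamma_i q^{a}\sigma(\boldsymbol{b}_i)$ into $q^a\Z^n$---is the standard proof of the cited result. A small bonus of your route worth keeping explicit: since the set equality is proved without ever assuming $\Lambda_D$ is closed under addition, it simultaneously establishes that $\Lambda_D$ is in fact a lattice, which is not evident from Definition \ref{DefConstD} alone.
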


The next definition of Construction D' for $q$-ary linear codes \cite{strey2018bounds,strey2017lattices} is an extension of the one presented in \cite{barnes1983new, conway2013sphere}.

\begin{definition}[Construction D']\label{defiD'eleonesio} Let $\mathds{Z}^n_q \supseteq \mathcal{C}_1 \supseteq \mathcal{C}_2  \supseteq \cdots  \supseteq \mathcal{C}_a$ be a family of nested linear codes. Given integers $r_1, r_2, \ldots, r_a$ satisfying $0 \leq r_1 \leq r_2 \leq \cdots \leq r_a$ and a set $\{\boldsymbol{h}_1, \ldots,\boldsymbol{h}_{r_a} \}$ in $\mathds{Z}^n_q$ such that $\mathcal{C}^{\perp}_\ell = \langle \boldsymbol{h}_1, \ldots,\boldsymbol{h}_{r_\ell}  \rangle$ for $\ell = 1,2,\ldots,a$ where $\mathcal{C}^{\perp}_\ell$ is the dual code of $\mathcal{C}_\ell$, the set $\Lambda_{D'}$ consists of all vectors $\boldsymbol{x} \in \mathds{Z}^n$ such that
$$ \boldsymbol{x} \cdot \sigma(\boldsymbol{h}_j) \equiv 0 \mod  q^{i+1}$$
for each pair of integers $(i,j)$ satisfying $0 \leq i < a$ and $r_{a-i-1} < j \leq r_{a-i}$, where $r_0 := 0$.
\end{definition}

\begin{remark}\label{lemamatrizanivelq}
The congruence equations in Definition \ref{defiD'eleonesio} can be rewritten via a check matrix denoted by $\boldsymbol{H}$, providing an equivalent characterization to Construction D' which is presented in \cite{sadeghi2006low, strey2017lattices, zhou2022construction}. Indeed, let $\boldsymbol{H}_a$ be the matrix whose rows are the vectors $\sigma(\boldsymbol{h}_1), \ldots, \sigma(\boldsymbol{h}_{r_a})$ and  $\mathcal{C}$ the $q^a$-ary linear code  generated by the rows of $\rho_{q^{a}}(\boldsymbol{H})$, where $ \boldsymbol{H} = \boldsymbol{D}\boldsymbol{H}_a$, with $\boldsymbol{D}$ the diagonal matrix defined as in Theorem \ref{TeoconstDcomoA}, with $k_{i} = r_{a - i + 1}$ for each $i = 1, \ldots, a$.
Then, $ \Lambda_{D'} = \big\{ \boldsymbol{x} \in \mathbb{Z}^n: \ \boldsymbol{H} \boldsymbol{x} \equiv \boldsymbol{0} \mod q^a \big\}$. Throughout this text, we call a matrix $\boldsymbol{H}$ a check matrix (also known as an $a$-level matrix) associated to $\Lambda_{D'}$ as in \cite{sadeghi2006low, silva2020multilevel}. We point out that there exists another definition of a check matrix associated with low-density lattice codes, which is presented in \cite{sommer2009shaping} and used in \cite{zhou2022construction}.
\end{remark}

In general, Construction D' depends on the choice of code generators, as can be seen in the example below.

\begin{example}\label{Exemplodo42e24}
Consider $\mathbb{Z}^2_6 \supseteq \mathcal{C}_1 = \Hat{\mathcal{C}}_1 \supseteq \mathcal{C}_2 = \Hat{\mathcal{C}}_2$ a family of nested linear codes such that $\mathcal{C}^{\perp}_1 = \langle(4,2) \rangle \subseteq \mathcal{C}^{\perp}_2 = \langle(4,2),  (0,1) \rangle$ and $\Hat{\mathcal{C}}^{\perp}_1 = \langle(2,4)\rangle \subseteq \Hat{\mathcal{C}}^{\perp}_2 = \langle(2,4), (0,1)\rangle.$
As in Remark \ref{lemamatrizanivelq}, we have that
    \begin{equation*}
    \boldsymbol{H} = \left[ \begin{array}{cc}
        4 & 2 \\
        0 & 6
    \end{array}\right] \hspace{0.5cm} \text{and} \hspace{0.5cm}
    \boldsymbol{\Hat{H}} = \left[\begin{array}{cc}
         2 & 4 \\
         0 & 6
    \end{array}\right]
\end{equation*}
are check matrices of $\Lambda_{D'}$ and ${\color{black}\hat{\Lambda}_{D'}}$, respectively.
Equivalently, we have
\begin{eqnarray*}
    \Lambda_{D'} &=&  \left\{(x, y) \in \mathbb{Z}^2: \ 4x+2y \equiv 0 \!\!\mod36 \ \ \mbox{and} \ \ 6y \equiv 0 \!\! \mod 36 \right\} \    \mbox{and} \\
  \Hat{\Lambda}_{D'} &=&  \left\{(x, y) \in \mathbb{Z}^2: \ 2x+4y  \equiv 0 \!\!\mod36 \ \ \mbox{and} \ \ 6y \equiv 0\!\! \mod 36 \right\}.
\end{eqnarray*}
Solving each system of equations, we get generator matrices for $\Lambda_{D'}$ and $\Hat{\Lambda}_{D'}$, respectively, given by
    \begin{equation*}
     \boldsymbol{M} = \left[ \begin{array}{cc}
       9 & -3  \\
        0 & 6
    \end{array}\right] \hspace{0.5cm} \text{ and} \hspace{0.5cm}
    \Hat{\boldsymbol{M}} = \left[\begin{array}{cc}
         18 & -12 \\
         0 & 6
    \end{array}\right].
\end{equation*}
Since $\vol\Lambda_{D'} =  |\det \boldsymbol{M}| = 54 \neq 108 =|\det \hat{\boldsymbol{M}}| = \text{vol }\Hat{\Lambda}_{D'}$, these lattices are different (and also non equivalents) and then in this case the Construction D' depends on the choice of the generators.
\end{example}

The Construction D' is connected with the Construction D using codes which are dual of the original ones \cite{strey2017lattices}.

\begin{definition}\label{DefDdual}
Let $\mathds{Z}^n_q \supseteq \mathcal{C}_1 \supseteq \mathcal{C}_2  \supseteq \cdots  \supseteq \mathcal{C}_a$ be a family of nested linear codes, parameters $r_1, r_2, \ldots, r_a$ satisfying $0 \leq r_1 \leq r_2 \leq \cdots \leq r_a$ and $n$-tuples $\boldsymbol{h}_1, \ldots,\boldsymbol{h}_{r_a}$ in $\mathds{Z}^n_q$ such that $\mathcal{C}^{\perp}_\ell = \langle \boldsymbol{h}_1, \ldots,\boldsymbol{h}_{r_\ell}  \rangle$ for $\ell = 1,2,\ldots,a$. We define $\Lambda_{D^{\perp}}$ as the lattice obtained  via Construction $D$ from a family of nested linear codes $ \mathcal{C}_1^{\perp} \subseteq \mathcal{C}_2^{\perp} \subseteq \cdots  \subseteq \mathcal{C}_a^{\perp} \subseteq \mathds{Z}^n_q$.
\end{definition}

\begin{theorem}[{\cite[Thm $1$]{strey2018bounds}} ]\label{T42} Let $\mathds{Z}^n_q \supseteq \mathcal{C}_1 \supseteq \mathcal{C}_2  \supseteq \cdots  \supseteq \mathcal{C}_a$ be a family of nested linear codes. Then, $\Lambda_{D'} = q^a\Lambda_{D^{\perp}}^{*}$.
\end{theorem}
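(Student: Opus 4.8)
The plan is to route everything through the single matrix $\boldsymbol{H}=\boldsymbol{D}\boldsymbol{H}_a$ of Remark \ref{lemamatrizanivelq} and then read off the duality directly from the definition of the dual lattice, so that no generator matrix of $\Lambda_{D'}$ needs to be computed explicitly. First I would unwind Definition \ref{DefDdual}. Since Construction $D$ is stated for a descending chain, applying it to the ascending chain $\mathcal{C}_1^{\perp}\subseteq\cdots\subseteq\mathcal{C}_a^{\perp}$ amounts to reading it in reverse: the $s$-th layer of $\Lambda_{D^{\perp}}$ uses the generators $\boldsymbol{h}_j$ with $r_{a-s}<j\le r_{a-s+1}$, each scaled by $q^{a-s}$. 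By Definition \ref{DefConstD}, $\Lambda_{D^{\perp}}$ is then the $\Z$-span of the standard vectors $q^a\boldsymbol{e}_1,\ldots,q^a\boldsymbol{e}_n$ together with the vectors $q^{a-s}\sigma(\boldsymbol{h}_j)$ over all admissible pairs $(s,j)$: reducing a coefficient modulo $q^s$ changes a vector by $q^s\cdot q^{a-s}\sigma(\boldsymbol{h}_j)=q^a\sigma(\boldsymbol{h}_j)\in q^a\Z^n$, so the set in Definition \ref{DefConstD} coincides with this $\Z$-span. The key observation is that, with $\boldsymbol{D}$ chosen exactly as in Theorem \ref{TeoconstDcomoA} under the identification $k_i=r_{a-i+1}$, one has $d_{jj}=q^{a-s}$ precisely when $r_{a-s}<j\le r_{a-s+1}$, so $q^{a-s}\sigma(\boldsymbol{h}_j)$ is nothing but the $j$-th row $d_{jj}\sigma(\boldsymbol{h}_j)$ of $\boldsymbol{H}$. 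Hence $\Lambda_{D^{\perp}}$ is generated as a lattice by $q^a\Z^n$ and the rows of $\boldsymbol{H}$.

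Next I would compute $q^a\Lambda_{D^{\perp}}^{*}$ straight from the definition of the dual. Because $q^a\Z^n\subseteq\Lambda_{D^{\perp}}$, every $\boldsymbol{y}\in\Lambda_{D^{\perp}}^{*}$ satisfies $q^ay_j=\boldsymbol{y}\cdot(q^a\boldsymbol{e}_j)\in\Z$, which forces $q^a\Lambda_{D^{\perp}}^{*}\subseteq\Z^n$. Writing $\boldsymbol{x}=q^a\boldsymbol{y}$, the condition $\boldsymbol{y}\in\Lambda_{D^{\perp}}^{*}$ is equivalent to $\boldsymbol{x}\cdot\boldsymbol{v}\equiv 0\pmod{q^a}$ for every $\boldsymbol{v}\in\Lambda_{D^{\perp}}$, and since this congruence is $\Z$-linear in $\boldsymbol{v}$ it suffices to test it on the generators found above. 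The conditions from $q^a\boldsymbol{e}_j$ hold automatically for integer $\boldsymbol{x}$, while those from the rows $d_{jj}\sigma(\boldsymbol{h}_j)$ of $\boldsymbol{H}$ amount exactly to $\boldsymbol{H}\boldsymbol{x}\equiv\boldsymbol{0}\pmod{q^a}$. By Remark \ref{lemamatrizanivelq} this is the defining system of $\Lambda_{D'}$, yielding $q^a\Lambda_{D^{\perp}}^{*}=\Lambda_{D'}$.

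I expect the main obstacle to be the index bookkeeping rather than any deep idea, namely keeping the reversal of the chain in Definition \ref{DefDdual} consistent and verifying that the scalar $q^{a-s}$ attached to $\boldsymbol{h}_j$ matches the modulus demanded by Construction D'. Concretely, a generator scaled by $q^{a-s}$ produces $q^{a-s}(\boldsymbol{x}\cdot\sigma(\boldsymbol{h}_j))\equiv 0\pmod{q^a}$, that is $\boldsymbol{x}\cdot\sigma(\boldsymbol{h}_j)\equiv 0\pmod{q^{s}}$, and the substitution $s=i+1$ turns the range $r_{a-s}<j\le r_{a-s+1}$ into $r_{a-i-1}<j\le r_{a-i}$ with $0\le i<a$, which is precisely the family of pairs appearing in Definition \ref{defiD'eleonesio}. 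Once this correspondence between layers and congruence levels is pinned down, the equality of the two descriptions is immediate, and it is worth remarking that the argument never requires $\Z_q$ to be a chain ring, so it covers composite $q$ as well.
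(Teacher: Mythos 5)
Your proof is correct. Note that there is no in-paper proof to compare against: the paper imports Theorem \ref{T42} from \cite{strey2018bounds} (Thm 1) without proof, so your argument genuinely fills a gap rather than duplicating one. Your route is also the natural one within the paper's own framework. The first half of your argument — that reducing each coefficient $\alpha_i^{(s)}$ modulo $q^s$ changes a point of the set in Definition \ref{DefConstD} only by an element of $q^a\Z^n$, so that $\Lambda_{D^{\perp}}$ is exactly the $\Z$-span of $q^a\boldsymbol{e}_1,\ldots,q^a\boldsymbol{e}_n$ and the rows $d_{jj}\sigma(\boldsymbol{h}_j)$ of $\boldsymbol{H}=\boldsymbol{D}\boldsymbol{H}_a$ — is in effect a re-derivation of Theorem \ref{TeoconstDcomoA} applied to the dual chain, i.e.\ of the statement $\Lambda_{D^{\perp}}=\Lambda_A(\mathcal{C})$ with $\mathcal{C}$ generated by the rows of $\rho_{q^{a}}(\boldsymbol{H})$; as a bonus it shows directly that the set in Definition \ref{DefConstD} is a lattice. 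The second half, dualizing against this finite generating set, is carried out correctly: $q^a\Z^n\subseteq\Lambda_{D^{\perp}}$ forces $q^a\Lambda_{D^{\perp}}^{*}\subseteq\Z^n$, testing the divisibility condition on generators suffices by $\Z$-linearity, and the conditions coming from the rows of $\boldsymbol{H}$ are exactly $\boldsymbol{H}\boldsymbol{x}\equiv\boldsymbol{0}\pmod{q^a}$. Your index bookkeeping is the crux and it checks out: $d_{jj}=q^{a-s}$ precisely for $r_{a-s}<j\le r_{a-s+1}$, and the substitution $s=i+1$ turns the induced congruences $\boldsymbol{x}\cdot\sigma(\boldsymbol{h}_j)\equiv 0\pmod{q^{s}}$ into exactly the system of Definition \ref{defiD'eleonesio}, consistently with Remark \ref{lemamatrizanivelq}. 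In effect your argument establishes Theorem \ref{T42} and Corollary \ref{teomatrizgeradoraD'} simultaneously, and, as you observe, it needs neither linear independence of the $\boldsymbol{h}_j$ nor primality of $q$.
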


\begin{remark}
     If we consider $\Lambda_{D^{\perp}}$ the lattice obtained  via Construction $D$ as in Definition \ref{DefDdual}, we have an analogous result of Theorem \ref{TeoCadeiafechada} {\cite[Cor $1$]{strey2018bounds}}. In other words, if the chain of dual codes is closed under the zero-one addition, then Construction D' does not depend on the choice of the code generators. One can observe that lattices obtained in Example \ref{Exemplodo42e24} are distinct and the chain of dual codes is not closed under the zero-one addition since $(4,2) \ast (4,2) = (1,0) \notin \mathcal{C}_2^{\perp}$.
\end{remark}

For the next result, we connect Construction D' with Construction A and, from this connection, we describe how to obtain a generator matrix for lattice $\Lambda_{D'}$ in the general case, calculate its volume, and show an expression for its minimum distance. From Theorem \ref{T42}, we have that $\boldsymbol{M}$ is a generator matrix of $\Lambda_{D^{\perp}}$ if and only if $q^{a}(\boldsymbol{M}^{-1})^{T}$ is a generator matrix for $\Lambda_{D'}$. By applying Theorem \ref{TeoconstDcomoA}, we get the following results.

\begin{corollary}\label{teomatrizgeradoraD'}
Let $\Lambda_{D'}$ be the lattice obtained via Construction $D'$ from Definition \ref{defiD'eleonesio}. Then, $\Lambda_{D'} = q^a\Lambda_{A}^{*}(\mathcal{C})$, where $\mathcal{C} \subseteq \Z_{q^a}^n$ is the linear code generated by the rows of matrix $\rho_{q^{a}}(\boldsymbol{H})$, with $\boldsymbol{H}$ as in Remark \ref{lemamatrizanivelq}.
\end{corollary}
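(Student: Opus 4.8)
The plan is to combine the two structural results already at our disposal: Theorem~\ref{T42}, which realizes $\Lambda_{D'}$ as a scaled dual of the Construction~D lattice $\Lambda_{D^{\perp}}$ built from the ascending chain of dual codes, and Theorem~\ref{TeoconstDcomoA}, which identifies any Construction~D lattice with a Construction~A lattice. Chaining these two identifications should produce exactly the claimed formula $\Lambda_{D'} = q^a \Lambda_A^*(\mathcal{C})$, so the real content is to verify that the Construction~A code delivered by Theorem~\ref{TeoconstDcomoA} coincides with the code $\mathcal{C}$ generated by the rows of $\rho_{q^a}(\boldsymbol{H})$ from Remark~\ref{lemamatrizanivelq}.

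First I would recall from Definition~\ref{DefDdual} that $\Lambda_{D^{\perp}}$ is the Construction~D lattice of the ascending chain $\mathcal{C}_1^{\perp} \subseteq \mathcal{C}_2^{\perp} \subseteq \cdots \subseteq \mathcal{C}_a^{\perp}$. Since Construction~D (Definition~\ref{DefConstD}) is stated for a descending chain, I would reindex by setting $\mathcal{D}_\ell := \mathcal{C}_{a-\ell+1}^{\perp}$, so that $\mathcal{D}_1 \supseteq \mathcal{D}_2 \supseteq \cdots \supseteq \mathcal{D}_a$ with $\mathcal{D}_\ell = \langle \boldsymbol{h}_1, \ldots, \boldsymbol{h}_{r_{a-\ell+1}} \rangle$. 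The number of generators of $\mathcal{D}_\ell$ is therefore $k_\ell = r_{a-\ell+1}$, which is precisely the substitution prescribed in Remark~\ref{lemamatrizanivelq}.

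Next I would apply Theorem~\ref{TeoconstDcomoA} to the descending chain $(\mathcal{D}_\ell)$. The matrix $\boldsymbol{G}_1$ of that theorem has rows $\sigma(\boldsymbol{h}_1), \ldots, \sigma(\boldsymbol{h}_{r_a})$, since $k_1 = r_a$; this is exactly the matrix $\boldsymbol{H}_a$ of Remark~\ref{lemamatrizanivelq}. With $k_\ell = r_{a-\ell+1}$ (and $r_0 := 0$), the diagonal matrix $\boldsymbol{D}$ of Theorem~\ref{TeoconstDcomoA} has entries $d_{jj} = q^{t}$ on the block $r_t < j \leq r_{t+1}$ for $t = 0, \ldots, a-1$, which is the same $\boldsymbol{D}$ appearing in Remark~\ref{lemamatrizanivelq}. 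Hence $\boldsymbol{G} = \boldsymbol{D}\boldsymbol{G}_1 = \boldsymbol{D}\boldsymbol{H}_a = \boldsymbol{H}$, so the $q^a$-ary code produced by Theorem~\ref{TeoconstDcomoA} is exactly the code $\mathcal{C}$ generated by the rows of $\rho_{q^a}(\boldsymbol{H})$, giving $\Lambda_{D^{\perp}} = \Lambda_A(\mathcal{C})$.

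Finally, substituting this into Theorem~\ref{T42} yields $\Lambda_{D'} = q^a \Lambda_{D^{\perp}}^* = q^a \Lambda_A^*(\mathcal{C})$, as required. I expect the only delicate point to be the bookkeeping in the reindexing step: one must check that reversing the order of the dual chain turns the ascending indices $r_1 \leq \cdots \leq r_a$ into the descending generator counts $k_1 \geq \cdots \geq k_a$ demanded by Construction~D, and that the blockwise powers of $q$ in $\boldsymbol{D}$ realign correctly after this reversal. Everything else is a direct composition of the two cited theorems.
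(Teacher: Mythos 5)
Your proposal is correct and follows essentially the same route as the paper, which obtains this corollary directly by combining Theorem~\ref{T42} ($\Lambda_{D'} = q^a\Lambda_{D^{\perp}}^{*}$) with Theorem~\ref{TeoconstDcomoA} applied to the dual chain, exactly as prescribed in Remark~\ref{lemamatrizanivelq} via the substitution $k_i = r_{a-i+1}$. Your explicit reindexing check (that $\boldsymbol{G}_1 = \boldsymbol{H}_a$ and $\boldsymbol{G} = \boldsymbol{D}\boldsymbol{H}_a = \boldsymbol{H}$) is precisely the bookkeeping the paper leaves implicit.
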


\begin{corollary}[Generator matrix for $\boldsymbol{\Lambda}_{D'}$]\label{cormatrizgeradoraD'}
A generator matrix for $\Lambda_{D'}$ is given by
\begin{equation*}
    \boldsymbol{M} = q^{a} \left(\boldsymbol{B}^{-1}\right)^{T},
\end{equation*}
where $\boldsymbol{B}$ is the Hermite Normal Form (HNF) of
$\left[\begin{array}{cccc}
         \boldsymbol{H}^{T} & q^{a}\boldsymbol{e_1} & \ldots & q^a\boldsymbol{e_n}  \\
    \end{array}\right]$ {\color{black} and $\{\boldsymbol{e}_1, \ldots, \boldsymbol{e}_{n}\}$ is the} canonical basis of $\R^{n}$.
\end{corollary}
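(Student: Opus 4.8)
The plan is to reduce everything to the identification $\Lambda_{D'} = q^a\Lambda_A^*(\mathcal{C})$ already furnished by Corollary \ref{teomatrizgeradoraD'}, and then to produce an explicit integral basis for $\Lambda_A(\mathcal{C})$ from which a generator matrix of its dual, and hence of $\Lambda_{D'}$, can be read off directly.

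First I would observe that, by the definition of Construction A applied to the $q^a$-ary code $\mathcal{C} \subseteq \Z_{q^a}^n$ generated by the rows of $\rho_{q^a}(\boldsymbol{H})$, one has $\Lambda_A(\mathcal{C}) = \rho_{q^a}(\mathcal{C}) + q^a\Z^n$. Writing lattice vectors as columns, this lattice is precisely the set of all integer linear combinations of the transposed rows of $\boldsymbol{H}$ (that is, the columns of $\boldsymbol{H}^T$) together with the vectors $q^a\boldsymbol{e}_1, \ldots, q^a\boldsymbol{e}_n$. In other words, $\Lambda_A(\mathcal{C})$ is exactly the lattice spanned by the columns of the augmented matrix $\left[\boldsymbol{H}^T \mid q^a\boldsymbol{e}_1 \mid \cdots \mid q^a\boldsymbol{e}_n\right]$. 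Pinning down this spanning claim as an equality (and not merely an inclusion) is the one place that deserves explicit justification: the codeword part contributes $\rho_{q^a}(\mathcal{C})$ while the scaled canonical vectors contribute $q^a\Z^n$, and every element of $\Lambda_A(\mathcal{C})$ decomposes accordingly.

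Next I would invoke the standard basis property of the Hermite Normal Form: since this $n \times (r_a + n)$ integer matrix has full row rank $n$ (guaranteed by the block $q^a\boldsymbol{e}_1 \mid \cdots \mid q^a\boldsymbol{e}_n$, which is also what makes $\Lambda_A(\mathcal{C})$ full rank, as recalled in Section 2), its HNF $\boldsymbol{B}$ is a square $n \times n$ triangular integer matrix whose columns form a $\Z$-basis of the same column lattice. Hence $\boldsymbol{B}$ is a generator matrix for $\Lambda_A(\mathcal{C})$, and in particular $\det \boldsymbol{B} \neq 0$, so $\boldsymbol{B}^{-1}$ exists. Then, applying the duality fact recalled in the preliminaries — that $\boldsymbol{B}$ generates a full rank lattice if and only if $(\boldsymbol{B}^T)^{-1} = (\boldsymbol{B}^{-1})^T$ generates its dual — gives that $(\boldsymbol{B}^{-1})^T$ is a generator matrix for $\Lambda_A^*(\mathcal{C})$. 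Scaling by $q^a$ and using Corollary \ref{teomatrizgeradoraD'} yields that $\boldsymbol{M} = q^a(\boldsymbol{B}^{-1})^T$ generates $q^a\Lambda_A^*(\mathcal{C}) = \Lambda_{D'}$, as claimed.

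The main obstacle is thus not any hard estimate but the bookkeeping around conventions: one must take the HNF with respect to the column lattice, consistent with the paper's convention that a generator matrix stores basis vectors as its columns, and one must confirm that the augmented column span equals $\Lambda_A(\mathcal{C})$ exactly. Once the spanning identity in the first step is settled, the remaining steps are purely formal applications of the HNF basis property and of the generator-matrix duality already established in the text.
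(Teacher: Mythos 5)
Your proposal is correct and follows essentially the same route as the paper's proof: identify $\Lambda_{D'} = q^a\Lambda_A^{\ast}(\mathcal{C})$ via Corollary \ref{teomatrizgeradoraD'}, take the HNF $\boldsymbol{B}$ of the augmented matrix $\left[\boldsymbol{H}^{T} \; q^a\boldsymbol{e}_1 \; \cdots \; q^a\boldsymbol{e}_n\right]$ as a generator matrix of $\Lambda_A(\mathcal{C})$, pass to the dual via $(\boldsymbol{B}^{-1})^{T}$, and scale by $q^a$. The only difference is that where the paper cites Proposition $3.3$ of \cite{costa2017lattices} for the fact that this HNF generates $\Lambda_A(\mathcal{C})$, you prove the underlying spanning identity (column span of the augmented matrix equals $\rho_{q^a}(\mathcal{C}) + q^a\Z^n$) inline, which makes the argument self-contained but is not a different approach.
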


\begin{proof}
From Corollary \ref{teomatrizgeradoraD'}, we have that $\Lambda_{D'} = q^a \Lambda_{A}^{\ast}(\mathcal{C})$, where $\mathcal{C} \subseteq \Z_{q^a}^n$ is the linear code obtained for the rows of the matrix $\rho_{q^{a}}(\boldsymbol{H})$. Since $\mathcal{C}$ is a linear code in $\Z_{q^a}^n$, from {\cite[Prop. $3.3$]{costa2017lattices}}, it follows that $\Lambda_{A}(\mathcal{C})$ has a generator matrix (in column form) given by the Hermite Normal Form of
\begin{eqnarray*}
    \left[\begin{array}{cccc}
         \boldsymbol{H}^{T} & q^a \boldsymbol{e}_1 & \ldots & q^a\boldsymbol{e}_n \\
    \end{array}\right].
\end{eqnarray*}
Denote $\boldsymbol{B}$ the Hermite Normal Form from the previous matrix. Then, a generator matrix for $\Lambda_{A}^{\ast}(\mathcal{C})$ is $\left(\boldsymbol{B}^{-1}\right)^{T}$ \cite{costa2017lattices}. Finally, since $\Lambda_{D'} = q^{a} \Lambda_A^{\ast}(\mathcal{C})$, we conclude that $q^a \left(\boldsymbol{B}^{-1}\right)^{T}$ is a generator matrix for $\Lambda_{D'}$.
\end{proof}

\begin{corollary}[Volume of $\boldsymbol{\Lambda}_{D'}$]\label{CorVolumeDlinhaviaA}
The volume of $\Lambda_{D'}$ is given by
\begin{eqnarray*}
    {\color{black}\vol \ } \Lambda_{D'} &=& |\det \boldsymbol{M}| = |\mathcal{C}|,
\end{eqnarray*}
where $\mathcal{C}$ is the $q^{a}$-ary linear code  of Corollary \ref{teomatrizgeradoraD'} and $|\mathcal{C}|$ is the cardinality  of $\mathcal{C}$. In particular, an upper bound for the volume of $\Lambda_{D'}$ is ${\color{black}\vol \ } \Lambda_{D'} \leq q^{ar_a}$. Furthermore, if the rows of $\rho_{q^{a}}(\boldsymbol{H})$ are linearly independent in $\Z_{q^{a}}^{n}$, then
we have equality. \end{corollary}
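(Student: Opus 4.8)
The plan is to leverage the identity $\Lambda_{D'} = q^a \Lambda_A^{*}(\mathcal{C})$ from Corollary \ref{teomatrizgeradoraD'} together with the explicit generator matrix $\boldsymbol{M} = q^a(\boldsymbol{B}^{-1})^{T}$ produced in Corollary \ref{cormatrizgeradoraD'}, where $\boldsymbol{B}$ is the HNF generator matrix of $\Lambda_A(\mathcal{C})$. First I would compute directly
\[
\text{vol } \Lambda_{D'} = |\det \boldsymbol{M}| = (q^a)^n\, |\det \boldsymbol{B}|^{-1} = \frac{(q^a)^n}{\text{vol } \Lambda_A(\mathcal{C})},
\]
so the entire first assertion reduces to knowing $\text{vol } \Lambda_A(\mathcal{C})$ for the $q^a$-ary code $\mathcal{C}$.

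The key step is the standard Construction A volume formula $\text{vol } \Lambda_A(\mathcal{C}) = (q^a)^n / |\mathcal{C}|$. I would justify it through the index $[\Z^n : \Lambda_A(\mathcal{C})]$: since $\Lambda_A(\mathcal{C}) = \rho_{q^a}(\mathcal{C}) + q^a\Z^n = \{ \boldsymbol{v} \in \Z^n : \boldsymbol{v} \bmod q^a \in \mathcal{C}\}$, reduction modulo $q^a$ gives a group isomorphism $\Lambda_A(\mathcal{C})/q^a\Z^n \cong \mathcal{C}$, whence $[\Lambda_A(\mathcal{C}) : q^a\Z^n] = |\mathcal{C}|$ and therefore $[\Z^n : \Lambda_A(\mathcal{C})] = (q^a)^n/|\mathcal{C}|$. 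As $\text{vol } \Z^n = 1$, this index equals $\text{vol } \Lambda_A(\mathcal{C})$. Substituting into the display above immediately gives $\text{vol } \Lambda_{D'} = |\mathcal{C}|$, which is the main equality.

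For the inequality, I would observe that by construction $\mathcal{C}$ is generated by the $r_a$ rows $\boldsymbol{r}_1, \ldots, \boldsymbol{r}_{r_a}$ of $\rho_{q^a}(\boldsymbol{H})$, so the coefficient map $\Z_{q^a}^{r_a} \to \mathcal{C}$ sending $(\alpha_1, \ldots, \alpha_{r_a})$ to $\sum_i \alpha_i \boldsymbol{r}_i$ is surjective; counting then yields $|\mathcal{C}| \le |\Z_{q^a}^{r_a}| = q^{a r_a}$, hence $\text{vol } \Lambda_{D'} \le q^{a r_a}$. For the equality case, if those rows are linearly independent over $\Z_{q^a}$, the same coefficient map is also injective, so it becomes a bijection and $|\mathcal{C}| = q^{a r_a}$.

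The computation is essentially mechanical once these building blocks are in place; the only point that demands care is the meaning of linear independence over the ring $\Z_{q^a}$, which is not a field when $q^a$ is composite. Here it must be read as: $\sum_i \alpha_i \boldsymbol{r}_i \equiv \boldsymbol{0} \pmod{q^a}$ with $\alpha_i \in \Z_{q^a}$ forces every $\alpha_i = 0$, which is precisely what makes the coefficient map injective in the equality case. I expect this — rather than any of the volume manipulations — to be the subtle place to state precisely.
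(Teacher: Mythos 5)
Your proof is correct and follows essentially the same route as the paper: both reduce the claim, via $\Lambda_{D'} = q^{a}\Lambda_{A}^{\ast}(\mathcal{C})$ from Corollary \ref{teomatrizgeradoraD'}, to the Construction A volume formula $\text{vol } \Lambda_A(\mathcal{C}) = q^{an}/|\mathcal{C}|$, and then bound $|\mathcal{C}| \leq q^{ar_a}$ by counting coefficient combinations of the $r_a$ generators. The only differences are cosmetic: you derive the Construction A volume formula from scratch via the index argument $[\Z^n : \Lambda_A(\mathcal{C})] = q^{an}/|\mathcal{C}|$ (the paper cites it from \cite{costa2017lattices}), and you make explicit the injectivity argument for the equality case under linear independence over $\Z_{q^a}$, which the paper's proof leaves implicit.
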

%{\color{black} Quem é M?}
\begin{proof}
For the description of $\Lambda_{D'}$ in Corollary \ref{teomatrizgeradoraD'} and as $\Lambda_{D'}$ is a full-rank lattice in $\R^n$, we have
\begin{eqnarray*}
    \text{vol } \Lambda_{D'} &=& q^{an}\text{ vol } \Lambda_{A}^{\ast}(\mathcal{C}) = \dfrac{q^{an}}{\text{vol } \Lambda_{A}(\mathcal{C})} = \dfrac{q^{an}}{{q^{an}}/{|\mathcal{C}|}} = |\mathcal{C}|.
\end{eqnarray*}
{\color{black} To finish the proof,  it is enough to observe that $\rho_{q^{a}}(\boldsymbol{H})$ is a matrix $r_a\times n$ and  $\mathcal{C}$ is the $q^a$-ary linear code generated by the rows of this matrix (Remark \ref{lemamatrizanivelq}), which can be linearly independent or not.}
\end{proof}

Given a {\color{black} linear code} $\mathcal{C} \subseteq \Z_q^{n}$, we can see that $\Lambda_{A}(\mathcal{C}^{\perp}) = q \Lambda_{A}^{\ast}(\mathcal{C})$ \cite{micciancio2009lattice, jorge2012reticulados}. In fact, we have
{\color{black}
\begin{eqnarray*}
    \boldsymbol{x} \in \rho^{-1}(\mathcal{C}^{\perp}) & \ \Leftrightarrow \ & \rho(\boldsymbol{x}) \cdot \boldsymbol{y} = 0, \ \forall \boldsymbol{y} \in \mathcal{C} \ \Leftrightarrow  \ \rho(\boldsymbol{x}) \cdot \rho(\boldsymbol{h})  = 0, \ \forall \boldsymbol{h} \in \rho^{-1}(\mathcal{C})\\
    & \ \Leftrightarrow \ &  \boldsymbol{x} \cdot \boldsymbol{h}  = q k, \ \mbox{for some} \ k \in \Z \ \Leftrightarrow \ \boldsymbol{x} \in q \Lambda_A^{\ast}(\mathcal{C}).
\end{eqnarray*}}

{\color{black} The next theorem is straightforward} from Corollary \ref{teomatrizgeradoraD'} replacing $q^{a}\Lambda_A^{\ast}(\mathcal{C})$ by $\Lambda_{A}(\mathcal{C}^{\perp})$, and describes $\Lambda_{D'}$ as a Construction A.

\begin{theorem}
\label{propD'viaconstrucaoA}
Under the notation of Definition \ref{defiD'eleonesio}, we can express the lattice $\Lambda_{D'}$ as
\begin{equation*}
    \Lambda_{D'} = \sigma_{q^{a}}(\mathcal{C}^{\perp}) + q^a \Z^n = \Lambda_A(\mathcal{C}^{\perp}),
\end{equation*}
where $\mathcal{C}^{\perp} = \Lambda_{D'} \cap [0,q^a)^n$ is the dual code $q^a$-ary with check matrix $\rho_{q^{a}}(\boldsymbol{H})$,
where $\boldsymbol{H}$ is as in Remark \ref{lemamatrizanivelq}.
\end{theorem}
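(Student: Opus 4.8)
The plan is to chain together Corollary~\ref{teomatrizgeradoraD'} with the Construction~A duality identity displayed just above the theorem, the only new ingredient being that this identity must be read at modulus $q^{a}$ rather than $q$. First I would invoke Corollary~\ref{teomatrizgeradoraD'} to write $\Lambda_{D'} = q^{a}\Lambda_{A}^{\ast}(\mathcal{C})$, where $\mathcal{C} \subseteq \Z_{q^{a}}^{n}$ is the $q^{a}$-ary code generated by the rows of $\rho_{q^{a}}(\boldsymbol{H})$, with $\boldsymbol{H}$ as in Remark~\ref{lemamatrizanivelq}. Next I would observe that the biconditional chain preceding the theorem, which establishes $\Lambda_{A}(\mathcal{D}^{\perp}) = q\Lambda_{A}^{\ast}(\mathcal{D})$ for a $q$-ary code $\mathcal{D}$, is insensitive to the particular modulus: replacing $q$ by $q^{a}$ throughout the chain $\boldsymbol{x} \in \rho_{q^{a}}^{-1}(\mathcal{D}) \Leftrightarrow \cdots \Leftrightarrow \boldsymbol{x} \in q^{a}\Lambda_{A}^{\ast}(\mathcal{D})$ yields $\Lambda_{A}(\mathcal{C}^{\perp}) = q^{a}\Lambda_{A}^{\ast}(\mathcal{C})$, where $\mathcal{C}^{\perp}$ now denotes the dual taken inside $\Z_{q^{a}}^{n}$. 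Substituting gives $\Lambda_{D'} = q^{a}\Lambda_{A}^{\ast}(\mathcal{C}) = \Lambda_{A}(\mathcal{C}^{\perp})$, and unwinding the definition of Construction~A for the $q^{a}$-ary code $\mathcal{C}^{\perp}$ produces $\Lambda_{A}(\mathcal{C}^{\perp}) = \sigma_{q^{a}}(\mathcal{C}^{\perp}) + q^{a}\Z^{n}$. This establishes the displayed chain of equalities.

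It then remains to justify the two descriptions of $\mathcal{C}^{\perp}$. For the check-matrix claim, I would note that $\rho_{q^{a}}(\boldsymbol{H})$ is by construction a generator matrix of $\mathcal{C}$ (Remark~\ref{lemamatrizanivelq}); since $\mathcal{C}^{\perp} = \{\boldsymbol{x} : \boldsymbol{x}\cdot\boldsymbol{y} = \boldsymbol{0}\ \forall \boldsymbol{y}\in\mathcal{C}\}$ and orthogonality to all of $\mathcal{C}$ is equivalent to orthogonality to each row of its generator matrix, $\rho_{q^{a}}(\boldsymbol{H})$ is a check matrix for $\mathcal{C}^{\perp}$ directly from the definition of the dual over $\Z_{q^{a}}$. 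For the identity $\mathcal{C}^{\perp} = \Lambda_{D'} \cap [0,q^{a})^{n}$, I would use the already-proven equality $\Lambda_{D'} = \sigma_{q^{a}}(\mathcal{C}^{\perp}) + q^{a}\Z^{n}$: because $q^{a}\Z^{n} \subseteq \Lambda_{D'}$, the lattice is $q^{a}$-periodic, so intersecting with the fundamental cell $[0,q^{a})^{n}$ selects exactly one representative from each coset, namely the standard representatives $\sigma_{q^{a}}(\mathcal{C}^{\perp})$. Under the usual identification of these representatives with the elements of $\mathcal{C}^{\perp} \subseteq \Z_{q^{a}}^{n}$, this intersection is precisely $\mathcal{C}^{\perp}$.

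The argument is essentially a bookkeeping exercise, so I do not anticipate a genuine obstacle; the one point demanding care is the change of modulus together with the overloading of the symbol $\perp$. I would state explicitly that the duality identity proved above for $q$-ary codes transfers verbatim to $q^{a}$-ary codes, since its proof uses only that $\Z_{q^{a}} = \Z/q^{a}\Z$ and that orthogonality is read modulo the ring order, and I would emphasize that the $\mathcal{C}^{\perp}$ appearing in the statement is the $q^{a}$-ary dual of the single code $\mathcal{C} \subseteq \Z_{q^{a}}^{n}$, \emph{not} any dual of the original $q$-ary chain $\mathcal{C}_{1} \supseteq \cdots \supseteq \mathcal{C}_{a}$. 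Keeping these two meanings of ``dual'' separate is the only place where a careless reading could go astray.
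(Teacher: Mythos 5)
Your proposal is correct and takes essentially the same route as the paper, which presents this theorem as ``straightforward from Corollary \ref{teomatrizgeradoraD'} replacing $q^{a}\Lambda_A^{\ast}(\mathcal{C})$ by $\Lambda_{A}(\mathcal{C}^{\perp})$,'' i.e., by combining that corollary with the Construction A duality identity $\Lambda_A(\mathcal{C}^{\perp}) = q\Lambda_A^{\ast}(\mathcal{C})$ applied at modulus $q^{a}$. Your extra bookkeeping (transferring the duality identity to $\Z_{q^{a}}$, identifying $\rho_{q^{a}}(\boldsymbol{H})$ as a check matrix for $\mathcal{C}^{\perp}$, and recovering $\mathcal{C}^{\perp}$ as $\Lambda_{D'} \cap [0,q^{a})^{n}$) simply makes explicit what the paper leaves implicit.
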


{\color{black}\begin{remark}\label{volviaConstA}
It follows directly from the above theorem and from {\cite[Prop. $3.2$]{costa2017lattices}} that $\text{vol } \Lambda_{D'} = q^{an}/|\mathcal{C}^{\perp}|$. It should be noticed that this result is also presented in {\cite[Eq. $9$]{silva2020multilevel}} for Construction D' from a chain of binary codes.
\end{remark}}

We can calculate the $L_{\mathrm{P}}$-distances and volume of the lattice $\Lambda_{D'}$ by using its association with Construction A and results of \cite{costa2017lattices}.

\begin{corollary}[Minimum distance of $\Lambda_{D'}$]\label{cormindistance}
Consider the distance $L_{\mathrm{P}}$, with $1 \leq \mathrm{P} \leq \infty$. Then, the minimum distance of $\Lambda_{D'}$ is
\begin{equation*}
    d_{\mathrm{P}} (\Lambda_{D'}) = \min \left\{d_{\mathrm{P}}(\mathcal{C}^{\perp}), q^a\right\}.
\end{equation*}
%In particular, if $\mathrm{P} = 2$ (Euclidean distance), we have that$    d_{2}(\mathcal{C}^{\perp}) = d$ if and only if the parity-check matrix $\rho_{q^{a}}(\boldsymbol{H})$ has at least $d$ columns linearly dependent.
\end{corollary}

Motivated by the work of \cite{zhou2022construction}, we investigate an alternative way to obtain a generator matrix for the lattice $\Lambda_{D'}$. We finish this section with a generator matrix for this lattice without using the Hermite Normal Form, under specific conditions. This result is a direct consequence of Theorem \ref{propD'viaconstrucaoA} and extends, for $q$-ary linear codes and a larger number of lattices, the Proposition $1$, proposed and demonstrated in \cite{zhou2022construction} to binary linear codes, with appropriate notation adjustments.

\begin{corollary}[Generator matrix for $\boldsymbol{\Lambda}_{D'}$]\label{CorGeneralizaZhou}
Let $\mathcal{C}_a \subseteq \mathcal{C}_{a - 1} \subseteq \cdots \subseteq \mathcal{C}_{1} \subseteq \mathbb{Z}_{q}^{n}$ be a family of nested linear codes. Given $r_1, \ldots, r_{a} \in \mathbb{N}$ satisfying  $r_0 : = 0 \leq r_1 \leq r_2 \leq \cdots \leq r_{a} = n$ and  $\left\{\boldsymbol{h}_1, \ldots, \boldsymbol{h}_{n}\right\} \subseteq\mathbb{Z}_q^n$ such that $\mathcal{C}^{\perp}_{\ell} = \langle  \boldsymbol{h}_{1}, \ldots, \boldsymbol{h}_{r_{\ell}}\rangle$ for $1 \leq \ell \leq a$, where $\mathcal{C}_\ell^{\perp}$ is the dual of $\mathcal{C}_\ell$. Consider the lattice obtained via Construction $D'$ from that chain using the above parameters. Let $\boldsymbol{H} = \boldsymbol{D}\boldsymbol{H}_a$ as  in the Corollary \ref{teomatrizgeradoraD'}. Suppose that $\sigma\left(\boldsymbol{h}_1\right), \ldots, \sigma\left(\boldsymbol{h}_{n}\right)$ are linearly independent.
Then, $q^a\boldsymbol{H}^{-1}$ is a matrix of integer entries if and only if  $q^{a} {\boldsymbol{H}^{-1}}$ is a generator matrix of the lattice $\Lambda_{D'}$.
\end{corollary}
\begin{proof}
%Observe que
%\begin{equation*}
%    \Lambda_{D'} = \left\{\boldsymbol{x} \in \Z^n: \ \boldsymbol{x}\boldsymbol{H}^t \equiv 0 \mod q^{a}\right\}.
%\end{equation*}
$(\Rightarrow)$ Since $r_a = n$ and $\sigma\left(\boldsymbol{h}_{1}\right), \ldots, \sigma\left(\boldsymbol{h}_{n}\right)$ are linearly independent,  we have that $\boldsymbol{D}$ and $\boldsymbol{H}_{a}$ are invertible matrices of order $n$. Then, $\boldsymbol{H} = \boldsymbol{D} \boldsymbol{H}_a$ is also an invertible matrix of order $n$.
So,
\begin{eqnarray*}
    \boldsymbol{x} \in \Lambda_{D'} &\Leftrightarrow& \boldsymbol{x} \in \Z^n \hspace{0.2cm} \text{ and } \hspace{0.2cm} \boldsymbol{H}\boldsymbol{x} \equiv \boldsymbol{0} \mod q^{a} \\
    %&\Leftrightarrow& \boldsymbol{x} \in \Z^n \hspace{0.2cm} \text{ and } \hspace{0.2cm} \boldsymbol{H}\boldsymbol{x}   = q^{a} \boldsymbol{z} \text{ for some } \boldsymbol{z} \in \Z^n\\
    &\Leftrightarrow& \boldsymbol{x} \in \Z^n \hspace{0.2cm} \text{ and } \hspace{0.2cm} \boldsymbol{x} = q^{a} \boldsymbol{H}^{-1}\boldsymbol{z} \text{ for some } \boldsymbol{z}\in \Z^n\\
   %  &\Leftrightarrow& \boldsymbol{x} =  \boldsymbol{z}\left(q^{a}{\boldsymbol{H}}^{-t}\right) \text{ para algum } \boldsymbol{z} \in \Z^n \ \mbox{e} \ q^a\boldsymbol{H}^{-t} \ \mbox{é uma matriz de entradas inteiras.}\\
     &\Leftrightarrow& \boldsymbol{x} \in \left\{\left(q^{a}{\boldsymbol{H}}^{-1}\right) \boldsymbol{z}: \ \boldsymbol{z} \in \Z^n\right\},
\end{eqnarray*}
since all entries of the matrix $q^a\boldsymbol{H}^{-1}$ are integers. Therefore, $q^{a} {\boldsymbol{H}^{-1}}$ is a generator matrix for $\Lambda_{D'}$.\\
$(\Leftarrow)$  The reciprocal is immediate, since $\Lambda_{D'}$ is an integer lattice.
\end{proof}

\begin{remark}{\color{black}
    The condition} that the matrix  $q^{a} {\boldsymbol{H}^{-1}}$ has integer entries is not always satisfied. Consider $\Z^2_3 \supseteq \mathcal{C}_1 \supseteq \mathcal{C}_2$ such that $\mathcal{C}_1^{\perp}  = \left\langle (1,0) \right\rangle \ \ \mbox{and} \ \
\mathcal{C}_2^{\perp}  = \left\langle (1,0), (0,2) \right\rangle.$ {\color{black} In this example, we have}
    \begin{equation*}
    \boldsymbol{D} = \left[ \begin{array}{cc}
        1 & 0 \\
        0 & 3
    \end{array}\right] \hspace{0.5cm} \text{ and} \hspace{0.5cm}
    \boldsymbol{H}_2 = \left[\begin{array}{cc}
         1 & 0 \\
         0 & 2
    \end{array}\right],
\end{equation*}
but
$$3^2\left(\boldsymbol{D}\boldsymbol{H}_2\right)^{-1}=  9 \left[\begin{array}{cc}
         1 & 0 \\
         0 & 6
    \end{array}\right]^{-1}
    = \left[\begin{array}{cc}
         9 & 0\\
         0 & 3/2
    \end{array}\right]$$
does not have integer entries and, therefore, cannot generate the lattice $\Lambda_{D'}$.
\end{remark}

\begin{example}
Let $\mathbb{Z}^2_6 \supseteq \mathcal{C}_1 \supseteq \mathcal{C}_2$ be a family of nested linear codes such that
$\mathcal{C}_1  =  \left\langle (1, 2) \right\rangle$ and
$\mathcal{C}_2  =  \left\langle (2, 4) \right\rangle.$
Note that $\mathcal{C}_1^{\perp} = \langle (4,1) \rangle$ and $\mathcal{C}_2^{\perp} = \langle (4,1), (3,0)\rangle$. Applying Construction D', we get
$$\Lambda_{D^{'}} =  \bigg\{(x, y) \in \mathbb{Z}^2: \ 4x+y \equiv 0 \!\!\mod36 \ \ \mbox{and} \ \ 3x \equiv 0 \!\! \mod 6 \bigg\}.$$
Since
    \begin{equation*}
    \boldsymbol{D} = \left[ \begin{array}{cc}
         1 & 0 \\
        0 & 6
    \end{array}\right] \hspace{0.5cm} \text{ and } \hspace{0.5cm}
    \boldsymbol{H}_2 = \left[\begin{array}{cc}
         4 & 1 \\
         3 & 0
    \end{array}\right],
\end{equation*}
$$\boldsymbol{B} := 6^2\left(\boldsymbol{D}\boldsymbol{H}_2\right)^{-1}= 36  \left[\begin{array}{cc}
         4 & 1 \\
         18 & 0
    \end{array}\right]^{-1} = 36 \left[\begin{array}{cc}
         0 & 1/18\\
         1 & -2/9
    \end{array}\right] = \left[\begin{array}{cc}
         0 & 2\\
         36 & -8
    \end{array}\right],$$
has integer entries, Corollary \ref{CorGeneralizaZhou} guarantees that $\boldsymbol{B}$ is a generator matrix of $\Lambda_{D^{'}}$.
\end{example}

\begin{remark}\label{CorCasoParticularZhou}
Consider the notation of Corollary \ref{CorGeneralizaZhou}, with $r_a = n$ and $\boldsymbol{h}_1, \ldots, \boldsymbol{h}_n$ linearly independent over $\Z_q$. Suppose that $\rho_{q^{a}}(\boldsymbol{H}_a)$ is a unimodular matrix over $\Z_q$. Then $q^{a} \boldsymbol{H}^{-1}$ is a generator matrix for the lattice $\Lambda_{D'}$ and
\begin{equation*}
    \text{vol } \Lambda_{D'} = |\det \boldsymbol{G}| = q^{an} |\det \boldsymbol{H}^{-1}| = q^{an} |\det \boldsymbol{H}_{a}^{-1}| \cdot |\det \boldsymbol{D}^{-1}| = q^{an} |\det \boldsymbol{D}^{-1}| = \displaystyle \prod_{i = 0}^{a - 1} \left(q^{an -i}\right)^{r_{i + 1} - r_{i}}.
\end{equation*}
We highlight that when $q=2$ this result corresponds exactly to  Proposition  $1$ of \cite{zhou2022construction}. In \cite{zhou2022construction}, the set of vectors $\{\boldsymbol{h}_1, \ldots, \boldsymbol{h}_{r_a}\}$ is completed in such a way that $\boldsymbol{H}_a$ is a unimodular matrix.
\end{remark}

\section{Volume and Minimum Distance of Construction \texorpdfstring{$D$}{D} and \texorpdfstring{$D'$}{D'}}

In Section \ref{SectionPropGerais}, we relate Constructions D and D' with Construction A in such a way that the volume and distance of these lattices can be described through this association. Other descriptions of these lattice parameters under special conditions will be presented in this section.

\subsection{Volume}

The next theorem provides an upper bound for the cardinality of a linear code over $\Z_q$ whose generator matrix can be written as in Theorem \ref{TeoconstDcomoA}.

\begin{theorem}\label{num_max_pontos}
Let $\Lambda_D = \Lambda_A(\mathcal{C})$ be the lattice obtained from Construction D as in Definition \ref{DefConstD}, where $\mathcal{C}$ is the $q^{a}$-ary code generated by the rows of the matrix $\rho_{q^{a}}(\boldsymbol{G})$ as in Theorem \ref{TeoconstDcomoA}. Then, the cardinality of $\mathcal{C}$ satisfies
$$ |\mathcal{C}| = \left|\Lambda_D \cap [0,q^a)^n\right|  \leq \prod\limits_{s=1}^{a} \left(\prod\limits_{i=k_{s+1}+1}^{k_s}{\mathcal{O}(\boldsymbol{b}_i)q^{s-1}}\right) =  \dfrac{q^{\sum\limits_{\ell=1}^{a} k_{\ell}} }{\prod\limits_{i=1}^{k_1}{\dfrac{q}{\mathcal{O}(\boldsymbol{b}_i)}}}= q^{\sum\limits_{\ell=2}^{a} k_{\ell}} \prod\limits_{i=1}^{k_1}{\mathcal{O}(\boldsymbol{b}_i)},$$
and, hence, the volume of Construction $D$ satisfies
\begin{equation*}%
     \vol\Lambda_D {\ \color{black} \geq  \ } q^{an - \sum\limits_{\ell=1}^{a} k_{\ell}} \left(\prod\limits_{i=1}^{k_1}{\dfrac{q}{\mathcal{O}(\boldsymbol{b}_i)}} \right).
\end{equation*}
%In particular, if $q$ is prime, this upper bound is equivalent to $q^{{\sum\limits_{\ell=1}^{a} k_{\ell}}}$.
 {\color{black}  Furthermore, if $\boldsymbol{b}_1, \ldots,\boldsymbol{b}_{k_1}$ are linearly independent over $\Z_{q}$, then
$$|\mathcal{C}| = \left| \Lambda_D \cap [0,q^a)^n\right| = q^{\sum\limits_{\ell=1}^{a} k_{\ell}}  \hspace{0.15cm}  \mbox{and} \hspace{0.15cm} \vol\Lambda_D = q^{an - \sum\limits_{\ell=1}^{a} k_{\ell}}.$$}
\end{theorem}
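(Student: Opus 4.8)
The plan is to express $|\mathcal{C}|$ as a point count in a fundamental region of $\Lambda_D$, bound that count by the number of free coefficients appearing in Definition \ref{DefConstD}, and then tidy the resulting product by reindexing. First I invoke Theorem \ref{TeoconstDcomoA} to write $\Lambda_D = \Lambda_A(\mathcal{C})$ for the $q^a$-ary code $\mathcal{C}$. Since $\Lambda_A(\mathcal{C}) = \sigma_{q^a}(\mathcal{C}) + q^a\Z^n$, reduction modulo $q^a$ is a bijection between $\mathcal{C}$ and the set of residues $\Lambda_D \cap [0,q^a)^n$, which gives the first equality $|\mathcal{C}| = |\Lambda_D \cap [0,q^a)^n|$ at once.

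For the inequality I read off the second description in Definition \ref{DefConstD}: modulo $q^a\Z^n$ every element of $\Lambda_D$ equals $\sum_{s=1}^{a}\sum_{i=k_{s+1}+1}^{k_s}\alpha_i^{(s)}q^{a-s}\sigma(\boldsymbol{b}_i)$ for some integers with $0 \le \alpha_i^{(s)} < \mathcal{O}(\boldsymbol{b}_i)q^{s-1}$. Thus the assignment $(\alpha_i^{(s)}) \mapsto \sum_{s,i}\alpha_i^{(s)}q^{a-s}\sigma(\boldsymbol{b}_i) \bmod q^a$ is a \emph{surjection} from the finite set of admissible coefficient tuples onto $\Lambda_D \cap [0,q^a)^n$. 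A surjection cannot increase cardinality, so $|\mathcal{C}|$ is at most the number of tuples, namely $\prod_{s=1}^{a}\prod_{i=k_{s+1}+1}^{k_s}\mathcal{O}(\boldsymbol{b}_i)q^{s-1}$; equality holds exactly when this map is injective.

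The two remaining equalities are bookkeeping. Splitting the product into its order part and its power-of-$q$ part, I note that as $s$ decreases from $a$ to $1$ the blocks $k_{s+1}+1 \le i \le k_s$ partition $\{1,\dots,k_1\}$ (using $k_{a+1}=0$), so $\prod_{s}\prod_i \mathcal{O}(\boldsymbol{b}_i) = \prod_{i=1}^{k_1}\mathcal{O}(\boldsymbol{b}_i)$. The power-of-$q$ part is $q^{E}$ with $E = \sum_{s=1}^{a}(s-1)(k_s-k_{s+1})$, and an Abel summation (telescoping) collapses this to $E = \sum_{\ell=2}^{a}k_\ell$, yielding the rightmost form $q^{\sum_{\ell=2}^{a}k_\ell}\prod_{i=1}^{k_1}\mathcal{O}(\boldsymbol{b}_i)$. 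The middle form then follows by writing $\prod_{i=1}^{k_1}(q/\mathcal{O}(\boldsymbol{b}_i)) = q^{k_1}/\prod_{i=1}^{k_1}\mathcal{O}(\boldsymbol{b}_i)$ together with $\sum_{\ell=2}^{a}k_\ell = \sum_{\ell=1}^{a}k_\ell - k_1$.

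For the volume I use the Construction A identity $\mathrm{vol}\,\Lambda_D = q^{an}/|\mathcal{C}|$ (Proposition $3.2$ of \cite{costa2017lattices}, exactly as in Corollary \ref{volviaConstA}); inserting the cardinality bound into this identity produces the displayed volume estimate $q^{an-\sum_\ell k_\ell}\prod_i(q/\mathcal{O}(\boldsymbol{b}_i))$, and one should note that, because the volume decreases as $|\mathcal{C}|$ grows, the volume bound runs in the opposite direction to the cardinality bound. For $q=p$ prime each generator $\boldsymbol{b}_i$ is nonzero (being part of a linearly independent set over the field $\Z_p$), hence of additive order $p$, so $q/\mathcal{O}(\boldsymbol{b}_i)=1$ for all $i$ and the cardinality bound collapses to $p^{\sum_{\ell=1}^{a}k_\ell}$. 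I expect the only genuine content to be the surjectivity step, together with the observation that it may fail to be injective --- precisely when the $\sigma(\boldsymbol{b}_i)$ satisfy nontrivial $\Z$-linear relations modulo $q^a$, which is what forces the statement to be an inequality --- and the careful evaluation of the telescoping exponent $E$ under the boundary convention $k_{a+1}=0$.
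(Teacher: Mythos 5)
Your proof of the cardinality bound is correct and essentially identical to the paper's: your surjection from admissible coefficient tuples onto $\Lambda_D \cap [0,q^a)^n$ is what the paper phrases as the points of the box being ``completely determined by the choices of $\alpha_i^{(s)}$'' followed by the Fundamental Counting Principle, and your telescoping evaluation of the exponent of $q$ is the same computation carried out in a slightly different order (you reach the rightmost expression first, the paper reaches the middle one first). Your surjectivity phrasing is, if anything, a bit sharper, since it makes explicit that the inequality is strict exactly when distinct tuples collide modulo $q^a$.

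Where you depart from the paper is your closing remark on the volume, and there you are right in a way that matters: from $\text{vol}\,\Lambda_D = q^{an}/|\mathcal{C}|$ and $|\mathcal{C}| \le B$ one can only conclude $\text{vol}\,\Lambda_D \ge q^{an}/B = q^{an - \sum_{\ell=1}^{a} k_\ell}\prod_{i=1}^{k_1}\left(q/\mathcal{O}(\boldsymbol{b}_i)\right)$, a \emph{lower} bound, whereas the theorem displays this quantity with $\le$. That direction cannot follow from the cardinality estimate and is in fact false in general: take $q = 4$, $n = 1$, $a = 1$, $\boldsymbol{b}_1 = 1$, $\boldsymbol{b}_2 = 2$ (so $k_1 = 2$); then $\Lambda_D = \Z$ has volume $1$, while the displayed right-hand side equals $4^{-1}(4/4)(4/2) = 1/2$. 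The paper's own proof obscures this by dismissing the volume claim as ``a direct consequence'' of Proposition $3.2$ of \cite{costa2017lattices}, which is precisely the step at which the inequality reverses; your derivation is the correct one, and the statement's volume inequality should read $\ge$ (with equality under the hypotheses of Corollary \ref{CorCardCLIAb2} or Theorem \ref{T2}). One last small point: in the prime case you justify $\mathcal{O}(\boldsymbol{b}_i) = p$ by appealing to the generators forming a linearly independent set over $\Z_p$, but linear independence is not a hypothesis of this theorem; all that is needed (and all the paper uses) is that each $\boldsymbol{b}_i$ is nonzero, so that its additive order in $\Z_p^n$ is $p$.
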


\begin{proof}
It is enough to prove the first upper bound since the second is a direct consequence of {\color{black} {\cite[Prop $3.2$]{costa2017lattices}} and Theorem \ref{TeoconstDcomoA}}. {\color{black} By Definition \ref{DefConstD},
$$\Lambda_D \cap [0,q^{a})^{n} = \left\lbrace \sum_{s=1}^{a}\sum_{i=k_{s+1}+1}^{k_s}{\alpha_i^{(s)}q^{a-s}\sigma(\boldsymbol{b}_i)} \mod q^a: \  0 \leq \alpha_i^{(s)} < \mathcal{O}(\boldsymbol{b}_i)q^{s-1} \right\rbrace.$$
In other words, the vectors of $\Lambda_D$ inside the box $[0,q^{a})^{n}$} are completely determined by the choices of $\alpha_i^{(s)}$, where $k_{s + 1} + 1 \leq i \leq k_{s}$ and $1 \leq s \leq a$. Also, each $\alpha_i^{(s)}$ can be chosen in $\mathcal{O}(\boldsymbol{b}_i) q^{s - 1}$ different ways. Since the choices of $\alpha_i^{(s)}$ are independent, the Fundamental Counting Principle states that
$$|\mathcal{C}| = \left| \Lambda_D \cap [0,q^a)^n\right|  \leq \prod\limits_{s=1}^{a} \left(\prod\limits_{i=k_{s+1}+1}^{k_s}\mathcal{O}(\boldsymbol{b}_i)q^{s-1}\right).$$
On the other hand, some calculations provide
\begin{eqnarray*}
    \prod\limits_{s=1}^{a} \left(\prod\limits_{i=k_{s+1}+1}^{k_s}{\mathcal{O}(\boldsymbol{b}_i)q^{s-1}}\right)
    & = &
    \prod\limits_{s=1}^{a}{\left(q^{s-1}\right)^{k_s-k_{s+1}}} \cdot \prod\limits_{i=1}^{k_1}{\mathcal{O}(\boldsymbol{b}_i)} = \dfrac{\prod\limits_{s=1}^{a}{\left(q^{s}\right)^{k_s-k_{s+1}}} \cdot \prod\limits_{i=1}^{k_1}{\mathcal{O}(\boldsymbol{b}_i)}}{\prod\limits_{s=1}^{a}{q^{k_s-k_{s+1}}}}\\
    & = & \dfrac{\prod\limits_{s=1}^{a}{\left(q^{s}\right)^{k_s-k_{s+1}}} \cdot \prod\limits_{i=1}^{k_1}{\mathcal{O}(\boldsymbol{b}_i)}}{q^{k_1}} = \dfrac{\prod\limits_{s=1}^{a}{\left(q^{s}\right)^{k_s-k_{s+1}}} }{\prod\limits_{i=1}^{k_1}{\dfrac{q}{\mathcal{O}(\boldsymbol{b}_i)}}} = \dfrac{q^{\sum\limits_{\ell=1}^{a} k_{\ell}} }{\prod\limits_{i=1}^{k_1}{\dfrac{q}{\mathcal{O}(\boldsymbol{b}_i)}}}=q^{\sum\limits_{\ell=2}^{a} k_{\ell}} \prod\limits_{i=1}^{k_1}{\mathcal{O}(\boldsymbol{b}_i)},
\end{eqnarray*}
	by using that
	  \begin{eqnarray*}
	\prod\limits_{s=1}^{a}{\left(q^{s}\right)^{k_s-k_{s+1}}} & = &  q^{k_1-k_2}\left(q^2\right)^{k_2-k_3}\left(q^3\right)^{k_3-k_4}\left(q^4\right)^{k_4-k_5}\cdots \left(q^a\right)^{k_a-k_{a+1}}\\	
      & = & q^{k_1+k_2+k_3+k_4+ \cdots
      +k_a} =   q^{\sum\limits_{\ell=1}^{a} k_{\ell}}.
      %q^{k_1-k_2}q^{2k_2-2k_3}q^{3k_3-3k_4}q^{4k_4-4k_5}\cdots q^{ak_a-ak_{a+1}}\\
     \end{eqnarray*}
   %  For the case $p$ prime, it is sufficient to see that $\mathcal{O}(\boldsymbol{h}) = p$ for every $\boldsymbol{h} \in \Z_p^{n}$, with $\boldsymbol{h} \neq \boldsymbol{0}$.
     For the case where $\boldsymbol{b}_1, \ldots,\boldsymbol{b}_{k_1}$ are linearly independent over $\Z_{q}$, we have $\mathcal{O}(\boldsymbol{b}_i) = q$ for every $i = 1, \ldots, k_1$. From this hypothesis, we also have that different choices of $\alpha_{i}^{(s)}$ yields $n$-tuples that are necessarily distinct inside the box $\Lambda_D \cap [0,q^{a})^{n}$. This proves the expression obtained for $|\mathcal{C}|$. For the volume of $\Lambda_D$, it is enough to apply {\color{black} Theorem \ref{TeoconstDcomoA} and {\cite[Prop $3.2$]{costa2017lattices}}}
\end{proof}

The particular case of Theorem \ref{num_max_pontos}, where the generators are linearly independent over $\Z_{q}$, is stated in {\cite[Thm $3.4$]{strey2017lattices}} for $q$-ary codes and in \cite{barnes1983new, conway2013sphere}, for binary codes (Theorem $1$ and Theorem $13$, respectively).

\begin{remark}
Under the notation used, if $\mathcal{C}_1 = \langle \boldsymbol{b}_1, \ldots, \boldsymbol{b}_{k_1}\rangle$ and $\Hat{\mathcal{C}}_1 =  \langle \Hat{\boldsymbol{b}}_1, \ldots,\Hat{\boldsymbol{b}}_{k_1}\rangle$ are both generated by $k_1$ linearly independent $n$-tuples over $\Z_q$, then both associated Construction D lattices will have the same volume. The following example illustrates this fact in a case where the lattices are not equivalent.
%Tirei a última frase porque realmente fica óbvio pelo Corolário hehe
\end{remark}

\begin{example}\label{excomordemdiferentedeplasConstD}
Consider $\mathcal{C}_2 \subseteq \mathcal{C}_1 \subseteq \Z_{6}^{2}$ and $\Hat{\mathcal{C}}_2 \subseteq \Hat{\mathcal{C}}_1 \subseteq \Z_6^{2}$, where $\mathcal{C}_2 = \langle (1,5)\rangle$, $\mathcal{C}_1 = \langle (1,5), (4,1)\rangle = \Hat{\mathcal{C}}_1$ and $\Hat{\mathcal{C}}_2 = \langle(4,1)\rangle$.
Let us denote $\Lambda_{D}$ and $\Hat{\Lambda}_{D}$, respectively, as the lattices obtained from these chains. From Theorem \ref{TeoconstDcomoA}, we have $\Lambda_D = \Lambda_A(\mathcal{C})$ and $\Hat{\Lambda}_{D} = \Lambda_A(\Hat{\mathcal{C}})$, where $\mathcal{C}$ and $\Hat{\mathcal{C}}$ are the $6^{2}$-ary {\color{black}linear} codes generated, respectively, by the rows of the matrices
\begin{equation*}
    \boldsymbol{G} = \left[\begin{array}{cc}
        1 & 5 \\
        24 & 6
    \end{array}\right] \hspace{0.2cm} \text{ and } \hspace{0.2cm} \Hat{\boldsymbol{G}} = \left[\begin{array}{cc}
        4 & 1 \\
        6 & 30
    \end{array}\right].
\end{equation*}
Thus, $\Lambda_{D}$ and $\Hat{\Lambda}_{D}$ are generated, respectively, by
\begin{equation*}
    \boldsymbol{M} = \left[\begin{array}{cc}
        1 & 0 \\
        5 & 6
    \end{array}\right] \hspace{0.2cm} \text{ and } \hspace{0.2cm} \Hat{\boldsymbol{M}} = \left[\begin{array}{cc}
        2 & 0 \\
        2 & 3
    \end{array}\right].
\end{equation*}
{\color{black} Therefore, $\vol \Lambda_{D} = \vol \Hat{\Lambda}_{D} = 6$}, as expected by Theorem \ref{num_max_pontos}. However, it is easy to see that these lattices are non equivalent since they have the same volume but different minimum (Euclidean) distances.
\end{example}

In the upcoming discussion, we will present a sufficient condition to achieve equality in Theorem \ref{num_max_pontos} even for tuples that are not linearly independent. This extends Corollary $3.8$ of \cite{strey2017lattices}.

\begin{theorem}\label{T2}
Let $\boldsymbol{b}_1, \ldots, \boldsymbol{b}_{k_1} \in \Z_{q}^{n}$ be nonzero $n$-tuples such that:
\begin{itemize}
    \item[1.] $\mathcal{C}_{\ell} =\langle \boldsymbol{b}_1,\ldots, \boldsymbol{b}_{k_\ell} \rangle $, for each $\ell = 1, 2, \ldots, a$.
    \item[2.] Some row permutation of the matrix $\boldsymbol{M}$ whose rows are $\boldsymbol{b}_1, \ldots, \boldsymbol{b}_{k_1}$ forms an ``upper triangular'' (respectively, ``lower triangular'') matrix in the row echelon form.
    \item[3.] The first nonzero component (respectively, the last component) of each vector $\sigma(\boldsymbol{b}_i)$, with $i = 1, \ldots, k_1$, divides $q$ as well as all the other components of this vector.
\end{itemize}
 Let $\Lambda_{D}$ be the lattice obtained from the chain $\mathcal{C}_a \subseteq \cdots \subseteq \mathcal{C}_1 \subseteq \Z_q^{n}$ via Construction $D$ under this choice of parameters. Then, if $k_1 = n$, it holds
 $$|\mathcal{C}| = \left| \Lambda_D \cap [0,q^a)^n\right| =  \dfrac{q^{\sum\limits_{\ell=1}^{a} k_{\ell}} }{\prod\limits_{i=1}^{n}{\dfrac{q}{\mathcal{O}(\boldsymbol{b}_i)}}}.
$$
\end{theorem}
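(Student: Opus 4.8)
The plan is to show that the upper bound from Theorem~\ref{num_max_pontos} (specialized to $k_1=n$) is attained, i.e. that distinct admissible coefficient tuples yield distinct points of the box $[0,q^a)^n$. Write a point of $\mathcal{C}=\Lambda_D\cap[0,q^a)^n$ as $\boldsymbol{v}\equiv\sum_{i=1}^{n}\gamma_i\,\sigma(\boldsymbol{b}_i)\pmod{q^a}$, where $\gamma_i:=\alpha_i^{(s)}q^{a-s}$ and $s=s(i)$ is the unique level with $k_{s+1}<i\le k_s$; thus $\gamma_i$ ranges over the $\mathcal{O}(\boldsymbol{b}_i)q^{\,s-1}$ multiples of $q^{\,a-s}$ lying in $[0,q^a)$. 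The number of tuples $(\gamma_1,\dots,\gamma_n)$ is exactly the product on the right-hand side of the claim, so it suffices to prove that $\boldsymbol{v}\bmod q^a$ determines each $\gamma_i$ uniquely.

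First I would compute the orders. Let $d_i$ be the first nonzero component of $\sigma(\boldsymbol{b}_i)$. By the third hypothesis $d_i\mid q$ and $d_i$ divides every component of $\sigma(\boldsymbol{b}_i)$; since $d_i$ is itself a component, the gcd of all components equals $d_i$, and the order formula in $\Z_q^n$ gives $\mathcal{O}(\boldsymbol{b}_i)=q/d_i$. In particular the admissible range for $\alpha_i^{(s)}$ has size $\mathcal{O}(\boldsymbol{b}_i)q^{\,s-1}=q^{\,s}/d_i$, and I would record the elementary fact that the multiplication map $\alpha\mapsto\alpha\,q^{\,a-s}d_i\bmod q^a$ is injective on $\{0,1,\dots,q^{\,s}/d_i-1\}$: indeed $q^a\mid(\alpha-\alpha')q^{\,a-s}d_i$ forces $q^{\,s}/d_i\mid(\alpha-\alpha')$, which is impossible for two distinct elements of that range.

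The heart of the proof is triangular elimination. By the second hypothesis the vectors $\sigma(\boldsymbol{b}_1),\dots,\sigma(\boldsymbol{b}_n)$ have pairwise distinct pivot columns, and after the row permutation $\pi$ witnessing triangularity the matrix with rows $\sigma(\boldsymbol{b}_{\pi(1)}),\dots,\sigma(\boldsymbol{b}_{\pi(n)})$ is upper triangular, so $\sigma(\boldsymbol{b}_{\pi(\ell)})_j=0$ for $j<\ell$ while $\sigma(\boldsymbol{b}_{\pi(\ell)})_\ell=d_{\pi(\ell)}$. I would then read off the coordinates of $\boldsymbol{v}$ in the order $j=1,2,\dots,n$: since only the vectors $\boldsymbol{b}_{\pi(1)},\dots,\boldsymbol{b}_{\pi(j)}$ can have a nonzero $j$-th component, coordinate $j$ satisfies $v_j\equiv\sum_{\ell\le j}\gamma_{\pi(\ell)}\sigma(\boldsymbol{b}_{\pi(\ell)})_j\pmod{q^a}$. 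Hence, once $\gamma_{\pi(1)},\dots,\gamma_{\pi(j-1)}$ have been recovered, the residual isolates $\gamma_{\pi(j)}d_{\pi(j)}\equiv\alpha_{\pi(j)}^{(s)}q^{\,a-s(\pi(j))}d_{\pi(j)}\pmod{q^a}$, and the injectivity established above recovers $\alpha_{\pi(j)}^{(s)}$, hence $\gamma_{\pi(j)}$. Induction on $j$ recovers the whole tuple, proving injectivity and therefore the stated equality; the lower-triangular case is identical after reversing the order of the coordinates and using the ``last component'' variant of the second and third hypotheses.

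I expect the main obstacle to be bookkeeping rather than a conceptual difficulty: one must keep the level-dependent scaling $q^{\,a-s(i)}$ (which is governed by the nesting $\mathcal{C}_\ell=\langle\boldsymbol{b}_1,\dots,\boldsymbol{b}_{k_\ell}\rangle$, not by the pivot position) correctly aligned with the pivot $d_i$ and the modulus $q^a$ throughout the elimination, and one must check that the permutation $\pi$ coming from triangularity is used only to fix the order of elimination and leaves the level assignment $s(i)$ untouched.
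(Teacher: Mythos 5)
Your proof is correct, but it follows a genuinely different route from the paper's. The paper proves Theorem \ref{T2} by invoking the volume formula for Construction D under triangular hypotheses (citing the proof of Theorem 6 of \cite{strey2017lattices}), namely $\text{vol }\Lambda_D = q^{n-k_1}\bigl(\prod_{j=1}^{k_1}\alpha_j\bigr)\prod_{s=1}^{a}(q^{a-s})^{k_s-k_{s+1}}$ with $\alpha_j$ the pivot of $\sigma(\boldsymbol{b}_j)$, then computing $\mathcal{O}(\boldsymbol{b}_j)=q/\alpha_j$ exactly as you do, and finally dividing: $|\mathcal{C}| = q^{an}/\text{vol }\Lambda_D$ via Theorem \ref{TeoconstDcomoA} and Proposition 3.2 of \cite{costa2017lattices}. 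You instead prove equality in the bound of Theorem \ref{num_max_pontos} directly, by showing the coefficient-to-residue map $(\alpha_i^{(s)})\mapsto\sum_i \alpha_i^{(s)}q^{a-s}\sigma(\boldsymbol{b}_i) \bmod q^a$ is injective: your two ingredients --- injectivity of $\alpha\mapsto \alpha\, q^{a-s}d_i \bmod q^a$ on $\{0,\dots,q^s/d_i-1\}$, and coordinate-by-coordinate recovery of the coefficients through the triangular pivot structure --- are sound, and the observation that the permutation $\pi$ only fixes the elimination order while the level map $s(i)$ stays attached to the index $i$ is exactly the bookkeeping point that needs checking. What each approach buys: the paper's argument is shorter but leans on an external volume formula proved elsewhere, so the counting content of the theorem is hidden inside the citation; your argument is self-contained and in effect re-proves that volume formula implicitly (injectivity of the coefficient map is equivalent to the index $[\Lambda_D : q^a\Z^n]$ attaining the product bound), at the cost of the elimination bookkeeping. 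Both hinge on the same use of hypothesis 3, namely that the pivot $d_i$ divides $q$ and all other entries, which forces $\mathcal{O}(\boldsymbol{b}_i)=q/d_i$ and makes the per-coordinate cancellation work modulo $q^a$.
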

\begin{proof}
Let us assume without loss of generality that $\boldsymbol{M}$ is an upper triangular matrix. By the proof of {\cite[Thm $3.6$]{strey2017lattices}}, denoting $\alpha_j$ as the first nonzero component of $\sigma(\boldsymbol{b}_{j})$, we know that % estamos adaptando a notação, não sei se precisa dizer
\begin{equation*}
    {\color{black}\vol \ } \Lambda_{D} = q^{n - k_1} \left(\displaystyle \prod_{j = 1}^{k_1} \alpha_j\right) \displaystyle \prod_{s = 1}^{a} (q^{a - s})^{k_s - k_{s + 1}} = q^{(a - 1)k_1} \left(\displaystyle \prod_{j = 1}^{k_1} \alpha_j\right) q^{n - \sum\limits_{\ell=1}^{a} \hat{k}_{\ell}}.
\end{equation*}
Moreover, we have $\mathcal{O}(\boldsymbol{b}_j) = q/ \alpha_j$ for all $j = 1, 2, \ldots, k_1$. In fact, the third condition assures the existence of integers $m_1, \ldots, m_i$ such that $\sigma(\boldsymbol{b}_j) = \left(0,\ldots, 0, \alpha_j, m_1 \alpha_j, \ldots, m_i \alpha_j\right)$ and, therefore,
\begin{equation*}
	\dfrac{q}{\alpha_j}\sigma(\boldsymbol{b}_j) = \dfrac{q}{\alpha_j} \left(0,\ldots, 0, \alpha_j, m_1 \alpha_j, \ldots, m_i \alpha_j\right) = \left(0,\ldots, 0, q, m_1q, \ldots, m_iq\right) \in q \mathbb{Z}^n.
\end{equation*}
Thus, it follows that $\left({q}/{\alpha_j}\right)\boldsymbol{b}_j = \boldsymbol{0}$ in $\Z_{q}^{n}$, from where $\mathcal{O}(\boldsymbol{b}_j) \leq q/\alpha_i$. Since the other inequality is trivial, we conclude that $\mathcal{O}(\boldsymbol{b}_j) =  {q}/{\alpha_j}$ for all $j = 1, \ldots, k_1$.
Finally, considering $\Lambda_D = \Lambda_A(\mathcal{C})$ as in Theorem \ref{TeoconstDcomoA}, we obtain ({\cite[Prop $3.2$]{costa2017lattices}})
\begin{equation*}
    |\mathcal{C}| = \left| \Lambda_{D} \cap [0,q^a)^n \right| = \dfrac{q^{an}}{ q^{(a-1)k_1}\left(\prod\limits_{j=1}^{k_1} \dfrac{q}{\mathcal{O}(\boldsymbol{b}_j)} \right) \left( q^{n-\sum\limits_{\ell=1}^{a} k_{\ell}}\right)}
	= \dfrac{ q^{\sum\limits_{\ell=1}^{a} k_{\ell}}}{\prod\limits_{j=1}^{n}{\dfrac{q}{\mathcal{O}(\boldsymbol{b}_j)}}},
\end{equation*}
where the second equality occurs since $k_1 = n$.
\end{proof}

Through connections between Constructions D and D' (Theorem \ref{T42}), the previous results on Construction D are adapted next to Construction D'. We note that Corollary \ref{ConstDlinhaLI} is also presented in \cite{barnes1983new, conway2013sphere, sadeghi2006low} for the binary case and in \cite[Thm $2$]{bos1982further}, which deals with an extension of Construction D, called Construction E, over $p$-ary codes ($p$ prime). Furthermore, an equivalent expression to the one presented in Corollary \ref{CorVolumeDlinhaCondTeorema} can be also found in \cite[Thm $3.2.8$]{strey2017construccoes}.

\begin{theorem}\label{num_max_pontosDlinha}
Consider $\mathcal{C}_1^{\perp} \subseteq \cdots \subseteq \mathcal{C}_a^{\perp}\subseteq \Z_q^{n}$ a family of linear codes. Let $\Lambda_{D^{\perp}} = \Lambda_A(\mathcal{C})$ be as in Definition \ref{DefDdual}, where $\mathcal{C}$ is the $q^{a}$-ary code generated by the rows of the matrix $\rho_{q^{a}}(\boldsymbol{G})$ as in Theorem \ref{T2} for this chain of dual codes. Let $\Lambda_{D'}$ be the associated lattice obtained via Construction $D'$. Then, the cardinality of $\mathcal{C}$ and volume of $\Lambda_{D'}$ satisfy
$$ {\color{black}\vol \ }  \Lambda_{D'} = |\mathcal{C}| = \left|\Lambda_{D^{\perp}} \cap [0,q^a)^n\right|  \leq \prod\limits_{s=1}^{a} \left(\prod\limits_{i = r_{a - s} + 1}^{r_{a - s + 1}}{\mathcal{O}(\boldsymbol{h}_i)q^{s-1}}\right) =  \dfrac{q^{\sum\limits_{\ell=1}^{a} r_{\ell}} }{\prod\limits_{i=1}^{r_a}{\dfrac{q}{\mathcal{O}(\boldsymbol{h}_i)}}}.$$
In particular, if $q
= p$ is prime, this upper bound is equivalent to $p^{\sum\limits_{\ell=1}^{a} r_{\ell}}$.
\end{theorem}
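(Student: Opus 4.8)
The plan is to leverage the duality between Construction $D'$ and Construction $D$ established in Theorem \ref{T42}, together with the cardinality bound already proved for Construction $D$ in Theorem \ref{num_max_pontos}. The key observation is that $\Lambda_{D^{\perp}}$ is \emph{itself} a Construction $D$ lattice, built from the ascending chain of dual codes $\mathcal{C}_1^{\perp} \subseteq \cdots \subseteq \mathcal{C}_a^{\perp}$, with the generators $\boldsymbol{h}_1, \ldots, \boldsymbol{h}_{r_a}$ playing the role that the $\boldsymbol{b}_i$ played before. Thus the entire content of the volume and cardinality estimate should follow by reindexing Theorem \ref{num_max_pontos} for this dual chain.

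First I would set up the dictionary between the two settings. In Definition \ref{DefDdual}, the lattice $\Lambda_{D^{\perp}}$ comes from the chain of duals, so the relevant nesting parameters are the $r_\ell$, and one must be careful that the chain is \emph{ascending} ($\mathcal{C}_1^{\perp} \subseteq \cdots \subseteq \mathcal{C}_a^{\perp}$) whereas Definition \ref{DefConstD} writes a \emph{descending} chain $\mathcal{C}_1 \supseteq \cdots \supseteq \mathcal{C}_a$. This is precisely why the product index in the statement runs over $r_{a-s}+1 \leq i \leq r_{a-s+1}$ rather than over $k_{s+1}+1 \leq i \leq k_s$: the role of $k_s$ in Theorem \ref{num_max_pontos} is taken by $r_{a-s+1}$, matching the assignment $k_i = r_{a-i+1}$ already recorded in Remark \ref{lemamatrizanivelq}. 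I would state this substitution explicitly and then invoke Theorem \ref{num_max_pontos} verbatim to obtain
\begin{equation*}
|\mathcal{C}| = \left|\Lambda_{D^{\perp}} \cap [0,q^a)^n\right| \leq \prod_{s=1}^{a}\left(\prod_{i=r_{a-s}+1}^{r_{a-s+1}}\mathcal{O}(\boldsymbol{h}_i)q^{s-1}\right).
\end{equation*}
The closed-form simplification to $q^{\sum_\ell r_\ell}/\prod_i (q/\mathcal{O}(\boldsymbol{h}_i))$ is then the same telescoping computation carried out in the proof of Theorem \ref{num_max_pontos}, with $k_\ell \mapsto r_\ell$; I would simply refer back to it rather than repeat it.

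Next I would supply the one genuinely new ingredient: the identity $\text{vol } \Lambda_{D'} = |\mathcal{C}|$. This is \emph{not} immediate from Theorem \ref{num_max_pontos} alone, since that theorem bounds the cardinality of a code and the volume of $\Lambda_D$, whereas here we need the volume of the \emph{dual} lattice $\Lambda_{D'}$. I would obtain it from Corollary \ref{CorVolumeDlinhaviaA}, which already asserts $\text{vol } \Lambda_{D'} = |\mathcal{C}|$ where $\mathcal{C}$ is the $q^a$-ary code attached to the check matrix; alternatively one can derive it directly from Theorem \ref{T42} via $\text{vol } \Lambda_{D'} = q^{an}\,\text{vol } \Lambda_{D^{\perp}}^{*} = q^{an}/\text{vol } \Lambda_{D^{\perp}}$ and then apply the relation $\text{vol } \Lambda_{D^{\perp}} = q^{an}/|\mathcal{C}|$ coming from Theorem \ref{TeoconstDcomoA}. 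Either route closes the gap between "volume of $\Lambda_{D'}$" and "cardinality of $\mathcal{C}$."

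The main obstacle, and the step I would treat most carefully, is the bookkeeping of the index reversal between the descending and ascending chains. The correspondence $k_i = r_{a-i+1}$ must be applied consistently both in the product limits and in the role of $\mathcal{O}(\boldsymbol{b}_i)$ versus $\mathcal{O}(\boldsymbol{h}_i)$; a sign error or off-by-one in this reindexing would corrupt the exponents. Finally, the prime case $q=p$ reduces to observing, exactly as before, that $\mathcal{O}(\boldsymbol{h}) = p$ for every nonzero $\boldsymbol{h} \in \Z_p^n$, so that each factor $q/\mathcal{O}(\boldsymbol{h}_i)$ equals $1$ and the bound collapses to $p^{\sum_\ell r_\ell}$; I would note this in a single line at the end.
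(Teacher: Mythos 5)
Your proposal is correct and follows essentially the same route as the paper: the paper's proof likewise combines Theorem \ref{T42} with Corollary \ref{CorVolumeDlinhaviaA} to get $\text{vol } \Lambda_{D'} = |\mathcal{C}|$, and then declares the upper bound ``analogous to the one presented in Theorem \ref{num_max_pontos} under appropriate adjustments of notation.'' The only difference is that you make those adjustments explicit (the dictionary $k_i = r_{a-i+1}$ from Remark \ref{lemamatrizanivelq} and the resulting relabeling of the product limits), which the paper leaves to the reader.
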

\begin{proof}
Theorem \ref{T42} guarantees that $\Lambda_{D'} = q^{a} \Lambda_{D^{\perp}}^{\ast} = q^{a} \Lambda_A^{\ast} (\mathcal{C})$, where $\mathcal{C}$ is the $q^{a}$-ary linear code generated by the rows of the matrix $\rho_{q^{a}}(\boldsymbol{H})$, as in Remark \ref{lemamatrizanivelq}. So, by Corollary \ref{CorVolumeDlinhaviaA}, we have $\text{vol } \Lambda_{D'} = |\mathcal{C}|$ and the result follows since the upper bound is analogous to the one presented in Theorem \ref{num_max_pontos} under appropriate adjustments of notation.
\end{proof}

\begin{corollary} \label{ConstDlinhaLI}
Following the notation above, if $\boldsymbol{h}_1, \ldots,\boldsymbol{h}_{r_a}$ are linearly independent over $\Z_{q}$, then
$${\color{black}\vol \ } \Lambda_{D'} = |\mathcal{C}| = \left| \Lambda_{D^{\perp}}\cap [0,q^a)^n\right| = q^{\sum\limits_{\ell=1}^{a} r_{\ell}}.$$
\end{corollary}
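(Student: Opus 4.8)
The plan is to specialize Theorem \ref{num_max_pontosDlinha} to the linearly independent case, exactly paralleling the way Corollary \ref{CorCardCLIAb2} specializes Theorem \ref{num_max_pontos}. The starting point is the chain of identities and the upper bound already furnished by Theorem \ref{num_max_pontosDlinha}, namely
$$\text{vol } \Lambda_{D'} = |\mathcal{C}| = \left|\Lambda_{D^{\perp}} \cap [0,q^a)^n\right| \leq \frac{q^{\sum\limits_{\ell=1}^{a} r_{\ell}}}{\prod\limits_{i=1}^{r_a}\frac{q}{\mathcal{O}(\boldsymbol{h}_i)}}.$$
It then remains only to evaluate the right-hand side under the independence hypothesis and to verify that the inequality is in fact an equality.

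First I would observe that linear independence of $\boldsymbol{h}_1, \ldots, \boldsymbol{h}_{r_a}$ over $\Z_q$ forces $\mathcal{O}(\boldsymbol{h}_i) = q$ for every $i$: if $m\boldsymbol{h}_i = \boldsymbol{0}$ in $\Z_q^n$ with $0 < m < q$, then $\boldsymbol{h}_i$ would satisfy a nontrivial $\Z_q$-linear relation, contradicting independence; hence the least such positive $m$ is $q$. Substituting $\mathcal{O}(\boldsymbol{h}_i) = q$ collapses the denominator $\prod_{i=1}^{r_a}\bigl(q/\mathcal{O}(\boldsymbol{h}_i)\bigr)$ to $1$, so the upper bound reduces to exactly $q^{\sum_{\ell=1}^{a} r_{\ell}}$.

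The remaining step, promoting the inequality to an equality, mirrors the argument of Corollary \ref{CorCardCLIAb2} applied to the chain of dual codes $\mathcal{C}_1^{\perp} \subseteq \cdots \subseteq \mathcal{C}_a^{\perp}$ defining $\Lambda_{D^{\perp}}$, where the $\boldsymbol{h}_i$ play the role of the generators $\boldsymbol{b}_i$ and the parameters $r_\ell$ play the role of $k_\ell$. Indeed, the upper bound in Theorem \ref{num_max_pontosDlinha} counts admissible coefficient vectors $(\alpha_i^{(s)})$ via the Fundamental Counting Principle, and its only source of slack is the possibility that two distinct coefficient choices produce the same point of $\Lambda_{D^{\perp}} \cap [0,q^a)^n$. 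When the $\boldsymbol{h}_i$ are linearly independent, distinct coefficient choices necessarily yield distinct representatives in the box, so no such collapse occurs and the count is exact. I therefore expect the main (and essentially only) subtlety to be this injectivity claim, handled precisely as in Corollary \ref{CorCardCLIAb2}; everything else is a direct substitution. Combining the resulting equalities with the identity $\text{vol } \Lambda_{D'} = |\mathcal{C}|$ then yields the claim.
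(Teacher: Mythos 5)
Your proposal is correct and follows essentially the same route the paper intends: Corollary \ref{ConstDlinhaLI} is obtained by specializing Theorem \ref{num_max_pontosDlinha} (via $\text{vol } \Lambda_{D'} = |\mathcal{C}|$ from Theorem \ref{T42} and Corollary \ref{CorVolumeDlinhaviaA}) and transplanting the argument of Corollary \ref{CorCardCLIAb2} to the dual chain, with linear independence giving $\mathcal{O}(\boldsymbol{h}_i) = q$ and injectivity of the coefficient-to-point map in the box. Your explicit justification that independence forces $\mathcal{O}(\boldsymbol{h}_i) = q$ is a correct filling-in of a step the paper leaves implicit.
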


\begin{corollary}\label{CorVolumeDlinhaCondTeorema}
Under the conditions of Theorem \ref{T2} applied to the dual chain {\cite[Thm $2$]{strey2018bounds}} and if $r_a = n$, we have
$${\color{black}\vol \ } \Lambda_{D'} = |\mathcal{C}| = \left| \Lambda_{D^{\perp}} \cap [0,q^a)^n\right| =  \dfrac{q^{\sum\limits_{\ell=1}^{a} r_{\ell}} }{\prod\limits_{i=1}^{n}{\dfrac{q}{\mathcal{O}(\boldsymbol{h}_i)}}}.
$$
\end{corollary}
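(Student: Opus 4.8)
Corollary \ref{CorVolumeDlinhaCondTeorema} is a direct transfer of Theorem \ref{T2} from a Construction D chain to the dual Construction D' setting, so my plan is to reduce it to an application of already-proven results rather than redo the counting argument.

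The plan is to recognize this corollary as a direct translation of Theorem \ref{T2} into the dual setting, so that no fresh counting argument is needed. First I would set up the dictionary between the two frameworks. By Definition \ref{DefDdual}, the lattice $\Lambda_{D^{\perp}}$ arises via Construction D from the ascending dual chain $\mathcal{C}_1^{\perp} \subseteq \cdots \subseteq \mathcal{C}_a^{\perp} \subseteq \Z_q^{n}$. To match the descending convention of Definition \ref{DefConstD} and Theorem \ref{T2}, I reverse the indexing exactly as in Remark \ref{lemamatrizanivelq}: I set $\hat{k}_i = r_{a-i+1}$, so that the role of $\mathcal{C}_\ell$ is played by $\mathcal{C}_{a-\ell+1}^{\perp}$ and the generating tuples are $\boldsymbol{h}_1, \ldots, \boldsymbol{h}_{r_a}$. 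In particular $\hat{k}_1 = r_a = n$, which is precisely the hypothesis $k_1 = n$ demanded by Theorem \ref{T2}.

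Next I would invoke Theorem \ref{T2} verbatim. The phrase ``under the conditions of Theorem \ref{T2} applied to the dual chain'' means exactly that the tuples $\boldsymbol{h}_1, \ldots, \boldsymbol{h}_{r_a}$ satisfy conditions (1)--(3) of that theorem: they generate the reindexed chain, some row permutation of the matrix of their reductions is in triangular row-echelon form, and the leading component of each $\sigma(\boldsymbol{h}_i)$ divides $q$ and the remaining components. Theorem \ref{T2} then yields
$$\left|\Lambda_{D^{\perp}} \cap [0,q^a)^n\right| = \frac{q^{\sum_{\ell=1}^{a} \hat{k}_{\ell}}}{\prod_{i=1}^{n}\frac{q}{\mathcal{O}(\boldsymbol{h}_i)}}.$$
The only bookkeeping step is to check that the exponent is invariant under the reversal: since $\ell \mapsto a-\ell+1$ is a bijection of $\{1,\ldots,a\}$, one has $\sum_{\ell=1}^{a} \hat{k}_{\ell} = \sum_{\ell=1}^{a} r_{a-\ell+1} = \sum_{\ell=1}^{a} r_{\ell}$, which is the numerator in the stated formula.

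Finally I would connect this cardinality to the volume of $\Lambda_{D'}$. By Theorem \ref{T42} we have $\Lambda_{D'} = q^a \Lambda_{D^{\perp}}^{*}$, and Theorem \ref{num_max_pontosDlinha} (equivalently Corollary \ref{CorVolumeDlinhaviaA} applied to the $q^a$-ary code $\mathcal{C}$ generated by the rows of $\rho_{q^{a}}(\boldsymbol{H})$) already records the identity $\text{vol } \Lambda_{D'} = |\mathcal{C}| = |\Lambda_{D^{\perp}} \cap [0,q^a)^n|$. Chaining these equalities with the displayed count delivers the claim. I do not expect any genuine obstacle; the one point requiring care is the index reversal between the primal and dual chains, so that $r_a = n$ correctly occupies the slot of $k_1 = n$ and the summation index is handled consistently throughout.
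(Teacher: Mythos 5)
Your proposal is correct and matches the paper's intended argument: the paper states this corollary without a written proof precisely because it is the composition of Theorem \ref{T2} applied to the reindexed dual chain (with $k_i = r_{a-i+1}$ as in Remark \ref{lemamatrizanivelq}, so $k_1 = r_a = n$) with the volume identity $\text{vol }\Lambda_{D'} = |\mathcal{C}| = \left|\Lambda_{D^{\perp}} \cap [0,q^a)^n\right|$ from Theorem \ref{T42} and Corollary \ref{CorVolumeDlinhaviaA}, exactly as you lay out. Your bookkeeping check that $\sum_{\ell=1}^{a} r_{a-\ell+1} = \sum_{\ell=1}^{a} r_{\ell}$ is the one detail the paper leaves implicit, and you handle it correctly.
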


\begin{remark}
Although the upper bounds presented in this section involve the order of each code generator, we must emphasize that just these orders are not enough to determine the volume of the lattice. For instance, the generators taken in Example \ref{Exemplodo42e24} have the same order in $\Z_{6}^{2}$ and provide lattices with different volumes.
\end{remark}

\subsection{Minimum Distance}

In this subsection, we explore, under specific conditions, the {\color{black} minimum}
 $L_{\mathrm{P}}$-distance of lattices from Constructions D and D' by using the minimum distance of the nested codes or their duals. The results presented here extend to lattices constructed from codes over $\Z_q$ and to $L_{\mathrm{P}}$-distance results from \cite{sadeghi2006low}, which deals with the squared Euclidean minimum distance of lattices from binary codes, and from \cite{strey2017construccoes}, regarding $L_{1}$-distance of lattices obtained from $q$-ary codes.

{\color{black} Besides the Euclidean distance ($\mathrm{P} = 2$), other $L_{\mathrm{P}}$-distances have been considered either theoretically, such as the search for perfect and quasi-perfect codes under $\mathrm{P}$-Lee distance \cite{campello2016perfect, zhang2017perfect, qureshi2016perfect, xu2023almost}, or for applications in Cryptography and communications. Several works in lattice-based cryptography, for instance, analyze the complexity of certain computational problems related to lattices in the $L_{\mathrm{P}}$-norms, such as the closed and the shortest vector problems (CVP and SVP) \cite{peikert2008limits} and the bounded decoding distance (BDD) \cite{bennett2020hardness}. Under a cryptography perspective and aiming at possible applications to error-detection in lattice-based communications, in \cite{chandrasekaran2018local} it has been proposed the study of a computational problem called local testability for membership in lattices, for $L_{\mathrm{P}}$-distances. It should be noted that, in order to obtain nearly matching bounds on the complexity, the authors of \cite{chandrasekaran2018local} focus on families of lattices constructed by Code Formula from a chain of binary Reed-Muller codes closed under the Schur product.

Particularly, the use of $L_{1}$ and $L_{\infty}$-distances plays a role in communications. As mentioned in \cite{etzion2013coding}, the Lee-distance had been considered for BCH codes over fields used in constrained and partial-response channels in \cite{roth1994lee}, for generalized Reed-Muller codes over $\Z_{2^{r}}$, with $r \in \N$, applied to orthogonal frequency-division multiplexing in \cite{schmidt2007complementary}, for general linear codes over $\Z_{p}$, with $p$ prime, in coding for multidimensional bursterror-correction \cite{etzion2009error} and also for error-correction in the rank modulation scheme for flash memories \cite{jiang2010correcting}. Regarding the $L_{\infty}$-distance, in \cite{seethaler2010performance} it is shown that sphere decoding under these distance provides a much smaller computational complexity with a marginal performance loss for  independent and identically distributed (i.i.d) Rayleigh fading multiple-input multiple-output (MIMO) channels.}

We establish a formula for the {\color{black} minimum}  $L_{\mathrm{P}}$-distance of Construction $\overline{D}$, from which we can derive a result for Construction D. This formula will be presented in what follows after the statement of some auxiliary results.

\begin{lemma}\label{lemaexistpontosLp}
Let $\mathcal{C} \subseteq \Z_{q}^{n}$ be a nonzero linear code. Then, for any $1 \leq \mathrm{P} \leq \infty$, we can assert that there exists $\boldsymbol{x}, \boldsymbol{y} \in \mathcal{C}$ such that
\begin{equation*}
    ||\sigma(\boldsymbol{x}) - \sigma(\boldsymbol{y})||_{\mathrm{P}} = d_{\mathrm{P}} (\mathcal{C}).
\end{equation*}
\end{lemma}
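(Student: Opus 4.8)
The plan is to reduce the statement to a single well-chosen codeword and then exhibit the pair explicitly. First I would record the translation invariance of the Lee metric: since $d_{Lee}(x_i,y_i)$ depends only on $\sigma(x_i)-\sigma(y_i)$ modulo $q$, one has $d_{\mathrm{P}}(\boldsymbol{a},\boldsymbol{b}) = \|\boldsymbol{a}-\boldsymbol{b}\|_{\mathrm{P}}$ for all $\boldsymbol{a},\boldsymbol{b}\in\mathcal{C}$, where $\|\cdot\|_{\mathrm{P}}$ denotes the $\mathrm{P}$-Lee norm. Because $\mathcal{C}$ is finite and nonzero, this minimum is attained, so there is a codeword $\boldsymbol{c}\in\mathcal{C}\setminus\{\boldsymbol{0}\}$ with $\|\boldsymbol{c}\|_{\mathrm{P}} = d_{\mathrm{P}}(\mathcal{C})$. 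I would also note the easy inequality $\|\sigma(\boldsymbol{x})-\sigma(\boldsymbol{y})\|_{\mathrm{P}}\ge d_{\mathrm{P}}(\boldsymbol{x},\boldsymbol{y}) \ge d_{\mathrm{P}}(\mathcal{C})$, valid for every pair, since $|\sigma(x_i)-\sigma(y_i)| \ge d_{Lee}(x_i,y_i)$ coordinatewise; thus it suffices to produce one pair realizing equality.

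The crux is the choice $\boldsymbol{x} = 2\boldsymbol{c}$ and $\boldsymbol{y}=\boldsymbol{c}$, both of which lie in $\mathcal{C}$ since $\mathcal{C}$ is a $\Z_q$-module. I would then compute coordinatewise: writing $\bar{c}_i=\sigma(c_i)\in\{0,\dots,q-1\}$, the $i$-th entry of $\sigma(2\boldsymbol{c})-\sigma(\boldsymbol{c})$ is $(2\bar c_i \bmod q)-\bar c_i$, which equals $\bar c_i$ when $2\bar c_i<q$ and $\bar c_i-q$ when $2\bar c_i\ge q$. In either case its absolute value is $\min\{\bar c_i,\,q-\bar c_i\} = d_{Lee}(c_i,0)$, because $2\bar c_i<q$ is exactly the condition $\bar c_i < q-\bar c_i$ (with the boundary $2\bar c_i=q$ giving $\bar c_i=q-\bar c_i$ either way). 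Summing the $\mathrm{P}$-th powers (or taking the maximum when $\mathrm{P}=\infty$) then yields $\|\sigma(\boldsymbol{x})-\sigma(\boldsymbol{y})\|_{\mathrm{P}} = \|\boldsymbol{c}\|_{\mathrm{P}} = d_{\mathrm{P}}(\mathcal{C})$, which is the desired identity.

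I expect the only real obstacle to be conceptual rather than computational: the minimum $\mathrm{P}$-Lee weight of $\boldsymbol{c}$ is attained by the balanced (centered) representative of $\boldsymbol{c}$, whose coordinates may exceed $q/2$ in the standard lift $\sigma(\boldsymbol{c})$, so the straight $L_{\mathrm{P}}$-norm of a single standard lift generally overshoots the Lee weight. The difficulty is to realize that balanced representative as an honest difference of standard lifts of two codewords. The doubling construction $\sigma(2\boldsymbol{c})-\sigma(\boldsymbol{c})$ resolves this cleanly, since subtracting $\sigma(\boldsymbol{c})$ from $\sigma(2\boldsymbol{c})$ automatically performs the correct wrap-around in precisely the coordinates where $\bar c_i>q/2$, thereby selecting the balanced representative in every coordinate at once. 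Finally I would observe that $\boldsymbol{x}\ne\boldsymbol{y}$ (as $\boldsymbol{c}\ne\boldsymbol{0}$), so the pair is admissible, and that the argument is uniform in $1\le\mathrm{P}\le\infty$.
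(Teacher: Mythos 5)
Your proof is correct, and it takes a more self-contained route than the paper. The paper's entire proof is a one-line appeal to an external fact (Proposition 2 of the cited reference on $q$-ary lattices in $L_{\mathrm{P}}$ norms): the $\mathrm{P}$-Lee metric on $\Z_{q}^{n}$ is the metric induced by the $L_{\mathrm{P}}$-metric on $\Z^{n}$ under reduction modulo $q$, and the lemma is declared ``straightforward'' from this. You instead reduce to a minimum-weight codeword $\boldsymbol{c}$ (using translation invariance of the $\mathrm{P}$-Lee distance, which you verify) and explicitly exhibit the pair $(\boldsymbol{x},\boldsymbol{y})=(2\boldsymbol{c},\boldsymbol{c})$, checking coordinatewise that $|\sigma(2c_i)-\sigma(c_i)|=\min\{\bar c_i,\,q-\bar c_i\}=d_{Lee}(c_i,0)$, boundary case $2\bar c_i=q$ included; your case analysis is accurate. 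This doubling construction supplies precisely the step the paper's citation glosses over: the induced-metric fact yields integer representatives attaining the minimum, but those representatives need not be standard lifts of codewords. For instance, with $q=5$ the pair $(4,0)$ attains Lee distance $1$, yet $\sigma(4)-\sigma(0)=4$; one must pass to a different pair of codewords, such as $(2\cdot 4,\,4)=(3,4)$, and your construction performs this switch uniformly in all coordinates at once. So your argument buys a complete elementary proof with no external dependency, making the wrap-around subtlety explicit, while the paper buys brevity by outsourcing exactly that subtlety to the reference.
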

\begin{proof}
The proof is straightforward from the fact that the $L_{\mathrm{P}}$-norm in $\Z_{q}^{n}$ is induced by the $L_{\mathrm{P}}$-norm in $\Z^{n}$ for any $1 \leq \mathrm{P} \leq \infty$ {\cite[Prop $2$]{jorge2013q}}.
\end{proof}

\begin{lemma}\label{lemadistanciaLp}
Let $\boldsymbol{z} = (z_1,z_2, \ldots, z_n)$ and $\boldsymbol{r} = (r_1, r_2, \ldots, r_n)$ be vectors of $\Z^n$ such that $0 \leq r_i < q$ for all $i \in \{1,\ldots,n\}$. Then
 $||q\boldsymbol{z}+\boldsymbol{r}||_{\mathrm{P}} \geq ||\boldsymbol{\mu}||_{\mathrm{P}}$, where $\boldsymbol{\mu} := (\mu_1, \ldots, \mu_{n})$ and $\mu_{i} : = \min\left\{q - r_i, r_i\right\}$ for all $i \in \{1,\ldots,n\}$.
\end{lemma}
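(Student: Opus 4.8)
The plan is to reduce the statement to a coordinate-wise inequality and then invoke the monotonicity of the $L_{\mathrm{P}}$-norm in the absolute values of the coordinates. Concretely, I would first show that for every index $i$ one has $|q z_i + r_i| \geq \mu_i = \min\{r_i, q - r_i\}$, and only afterwards lift this to the desired norm inequality. This separation is what makes the argument clean: the norm plays no role until the very last step.

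For the coordinate-wise bound I would argue by cases on the integer $z_i$. If $z_i = 0$, then $|q z_i + r_i| = r_i \geq \mu_i$ trivially. If $z_i \geq 1$, then $q z_i + r_i \geq q + r_i > 0$, so $|q z_i + r_i| = q z_i + r_i \geq q - r_i \geq \mu_i$, using $r_i \geq 0$. If $z_i \leq -1$, then $q z_i + r_i \leq -q + r_i < 0$ because $r_i < q$, hence $|q z_i + r_i| = -(q z_i + r_i) = q|z_i| - r_i \geq q - r_i \geq \mu_i$. In every case $|q z_i + r_i| \geq \mu_i \geq 0$, which is exactly the per-coordinate estimate I need.

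Having established this, the conclusion follows from monotonicity of each summand. For $1 \leq \mathrm{P} < \infty$, since $t \mapsto t^{\mathrm{P}}$ is nondecreasing on $[0,\infty)$ and all quantities involved are nonnegative, I would write $||q\boldsymbol{z} + \boldsymbol{r}||_{\mathrm{P}}^{\mathrm{P}} = \sum_{i=1}^{n} |q z_i + r_i|^{\mathrm{P}} \geq \sum_{i=1}^{n} \mu_i^{\mathrm{P}} = ||\boldsymbol{\mu}||_{\mathrm{P}}^{\mathrm{P}}$ and take $\mathrm{P}$-th roots. For $\mathrm{P} = \infty$, the same comparison termwise gives $\max_i |q z_i + r_i| \geq \max_i \mu_i$, that is $||q\boldsymbol{z}+\boldsymbol{r}||_{\infty} \geq ||\boldsymbol{\mu}||_{\infty}$.

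I do not expect a serious obstacle here; the entire content is the per-coordinate estimate, whose only subtlety is tracking the sign of $q z_i + r_i$ across the three cases so that the absolute value is resolved correctly (in particular the strict bound $r_i < q$ is what forces $q z_i + r_i < 0$ when $z_i \leq -1$). Everything after that is the routine fact that the $L_{\mathrm{P}}$-norm is monotone under coordinate-wise domination of nonnegative vectors, valid uniformly for $1 \leq \mathrm{P} \leq \infty$.
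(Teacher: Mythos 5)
Your proof is correct and follows essentially the same route as the paper: a coordinate-wise estimate $|qz_i + r_i| \geq \min\{r_i,\, q - r_i\} = \mu_i$, followed by monotonicity of the $L_{\mathrm{P}}$-norm under termwise domination of nonnegative vectors. Your three-case analysis on the sign of $z_i$ is simply a more explicit justification of the paper's observation that the smallest positive and largest negative values of $qz_i + r_i$ are $r_i$ and $-q + r_i$, and you additionally spell out the $\mathrm{P} = \infty$ case, which the paper's displayed computation leaves implicit.
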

\begin{proof}
Since the largest negative integer of the form $qz_i+r_i$ is $-q+r_i$
and the smallest positive integer is $r_i$, it follows that $|qz_i+r_i| \geq \min \left\{|-q+r_i|, |r_i|\right\} =  \min \{r_i, q-r_i\}$. Then
$$||q\boldsymbol{z}+\boldsymbol{r}||_{\mathrm{P}} = \left(\sum_{i=1}^{n}|qz_i+r_i|^P\right)^{1/P} \geq \left(\sum_{i=1}^{n}\min \{r_i, q-r_i\}^{P} \right)^{1/P} = ||\boldsymbol{\mu}||_{\mathrm{P}}.$$
 \end{proof}

The next result provides a formula for $L_{\mathrm{P}}$-distances of Construction $\overline{D}$. When $\mathrm{P} = 1$, the Theorem \ref{TeoDistLpDbarra} was proved in {\cite[Thm $3$]{strey2018bounds}} and when $\mathrm{P} = 2$, for a chain of binary codes, in {\cite[Thm $3$]{sadeghi2006low}}.

\begin{theorem}\label{TeoDistLpDbarra}
Let $\{\boldsymbol{0}\} \subsetneq \mathcal{C}_{a} \subseteq \mathcal{C}_{a - 1} \subseteq \cdots \subseteq \mathcal{C}_1 \subseteq \Z_{q}^{n}$ be a family of nested linear codes. Consider the $L_{\mathrm{P}}$-distance, with $1 \leq \mathrm{P} \leq \infty$, and denote the {\color{black} minimum}  $L_{\mathrm{P}}$-distance of  $\mathcal{C}_\ell$ by $d_{\mathrm{P}}(\mathcal{C}_{\ell})$. Then, the {\color{black} minimum}  $L_{\mathrm{P}}$-distance of $\Gamma_{\overline{D}}$ in $\R^n$ satisfies
\begin{equation*}
    d_{\mathrm{P}}(\Gamma_{\overline{D}}) = \min_{1 \leq j \leq a} \left\{q^{a}, q^{a - j} d_{\mathrm{P}}(\mathcal{C}_j)\right\}.
\end{equation*}
\end{theorem}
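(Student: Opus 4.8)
The plan is to prove the identity by establishing the two inequalities $d_{\mathrm P}(\Gamma_{\overline{D}}) \le \min\{q^a, q^{a-1}d_{\mathrm P}(\mathcal C_1),\ldots, d_{\mathrm P}(\mathcal C_a)\}$ and $d_{\mathrm P}(\Gamma_{\overline{D}}) \ge \min\{q^a, q^{a-1}d_{\mathrm P}(\mathcal C_1),\ldots, d_{\mathrm P}(\mathcal C_a)\}$ separately. The one structural subtlety to keep in mind throughout is that $\Gamma_{\overline{D}}$ need not be a group (Theorem \ref{TeoCadeiafechada}), so I cannot reduce the computation to the nonzero vectors of $\Gamma_{\overline{D}}$; instead I must directly control the $L_{\mathrm P}$-norm of an arbitrary difference $\boldsymbol w = \boldsymbol x - \boldsymbol y$ of two distinct elements.

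For the upper bound I would exhibit, for each candidate value, a pair of elements of $\Gamma_{\overline{D}}$ realizing it. The value $q^a$ is attained by $\boldsymbol 0$ and $q^a\boldsymbol e_1$, both of which lie in $\Gamma_{\overline{D}}$ (the latter sits in $q^a\Z^n$) and whose difference has $L_{\mathrm P}$-norm $q^a$. For the term $q^{a-\ell}d_{\mathrm P}(\mathcal C_\ell)$, I invoke Lemma \ref{lemaexistpontosLp} to obtain codewords $\boldsymbol u,\boldsymbol v\in\mathcal C_\ell$ with $\|\sigma(\boldsymbol u)-\sigma(\boldsymbol v)\|_{\mathrm P}=d_{\mathrm P}(\mathcal C_\ell)$; then $q^{a-\ell}\sigma(\boldsymbol u)$ and $q^{a-\ell}\sigma(\boldsymbol v)$ both belong to $\Gamma_{\overline{D}}$ (set the level-$\ell$ coset component equal to $\boldsymbol u$, respectively $\boldsymbol v$, and all other components to $\boldsymbol 0$), and their difference has $L_{\mathrm P}$-norm $q^{a-\ell}d_{\mathrm P}(\mathcal C_\ell)$. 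Taking the smallest of these distances yields the upper bound.

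For the lower bound I would argue by induction on $a$. When $a=1$, $\Gamma_{\overline{D}}=\Lambda_A(\mathcal C_1)$ is a lattice, and for nonzero $\boldsymbol w=q\boldsymbol z+\sigma(\boldsymbol c)$ either $\boldsymbol c=\boldsymbol 0$, forcing $\|\boldsymbol w\|_{\mathrm P}\ge q$, or $\boldsymbol c\neq\boldsymbol 0$, in which case Lemma \ref{lemadistanciaLp} bounds $\|\boldsymbol w\|_{\mathrm P}$ below by the $\mathrm P$-Lee norm of $\boldsymbol c$, which is $\ge d_{\mathrm P}(\mathcal C_1)$ by linearity. For the inductive step, take distinct $\boldsymbol x,\boldsymbol y\in\Gamma_{\overline{D}}$; reducing modulo $q$ kills every term carrying a factor of $q$, so $\boldsymbol x\equiv\sigma(\boldsymbol c_a)$ and $\boldsymbol y\equiv\sigma(\boldsymbol c_a')\pmod q$ for some $\boldsymbol c_a,\boldsymbol c_a'\in\mathcal C_a$, giving $\boldsymbol w\equiv\boldsymbol c_a-\boldsymbol c_a'\pmod q$ with $\boldsymbol c_a-\boldsymbol c_a'\in\mathcal C_a$. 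If $\boldsymbol c_a-\boldsymbol c_a'\neq\boldsymbol 0$, then writing $\boldsymbol w=q\boldsymbol z+\sigma(\boldsymbol c_a-\boldsymbol c_a')$ and applying Lemma \ref{lemadistanciaLp} gives $\|\boldsymbol w\|_{\mathrm P}\ge d_{\mathrm P}(\mathcal C_a)$. Otherwise the level-$a$ representatives cancel, so $\boldsymbol w=q\tilde{\boldsymbol w}$ where $\tilde{\boldsymbol w}$ is a nonzero difference of two elements of the Construction $\overline{D}$ set $\Gamma'$ attached to the shorter chain $\mathcal C_{a-1}\subseteq\cdots\subseteq\mathcal C_1$; the induction hypothesis yields $\|\tilde{\boldsymbol w}\|_{\mathrm P}\ge d_{\mathrm P}(\Gamma')=\min\{q^{a-1},q^{a-2}d_{\mathrm P}(\mathcal C_1),\ldots,d_{\mathrm P}(\mathcal C_{a-1})\}$, so that $\|\boldsymbol w\|_{\mathrm P}=q\|\tilde{\boldsymbol w}\|_{\mathrm P}$ dominates the full minimum. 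Combining with the upper bound gives equality.

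I expect the main obstacle to be the bookkeeping around the representatives $\sigma(\cdot)$: I must verify that reduction modulo $q$ really sends each element into the bottom code $\mathcal C_a$, that in the cancelling case exactly one factor of $q$ can be extracted so that $\tilde{\boldsymbol w}$ genuinely lands in $\Gamma'$, and that the per-coordinate quantity $\mu_i=\min\{r_i,q-r_i\}$ produced by Lemma \ref{lemadistanciaLp} coincides with the $\mathrm P$-Lee weight of $\boldsymbol c_a-\boldsymbol c_a'$, so that $\|\boldsymbol\mu\|_{\mathrm P}\ge d_{\mathrm P}(\mathcal C_a)$ follows from the linearity of $\mathcal C_a$. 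Since all of these norm estimates are coordinatewise, they remain valid for $\mathrm P=\infty$ with maxima replacing sums.
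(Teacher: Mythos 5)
Your proof is correct, and while your upper bound is identical to the paper's (Lemma \ref{lemaexistpontosLp} for the pairs realizing $q^{a-\ell}d_{\mathrm{P}}(\mathcal{C}_\ell)$, plus $q^{a}\Z^{n} \subseteq \Gamma_{\overline{D}}$), your lower bound takes a genuinely different route. The paper argues directly, without induction: it writes $\boldsymbol{x} = q^{s}\boldsymbol{v}$, $\boldsymbol{y} = q^{k}\boldsymbol{w}$ with $\boldsymbol{v}, \boldsymbol{w} \not\equiv \boldsymbol{0} \bmod q$ (extracting maximal $q$-powers), assumes $s \geq k$, and splits into three cases according to how $s$ and $k$ compare with $a$, showing in each case that the pair distance is at least $q^{a}$ or $q^{k}d_{\mathrm{P}}(\mathcal{C}_{a-k})$ via Lemma \ref{lemadistanciaLp}. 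You instead induct on the number of levels $a$, splitting only on whether the bottom-level codewords of $\boldsymbol{x}$ and $\boldsymbol{y}$ cancel in the difference: if not, Lemma \ref{lemadistanciaLp} applied to $\mathcal{C}_a$ finishes; if so, $\boldsymbol{x} - \boldsymbol{y} = q\tilde{\boldsymbol{w}}$ where $\tilde{\boldsymbol{w}}$ is a difference of two distinct points of the length-$(a-1)$ construction, and homogeneity of the $L_{\mathrm{P}}$-norm plus the induction hypothesis gives the bound. Each approach has its merits. The paper's case analysis exposes, for any given pair, which level of the chain is responsible for its distance, which is finer information. Your induction keeps the bookkeeping local to the bottom code and — notably — is airtight in a sub-case where the paper's argument is loose: in the paper's case $(ii)$ with $s = k$, the claim that $q^{s-k}\overline{\boldsymbol{v}} - \overline{\boldsymbol{w}}$ is nonzero is justified only by ``$\boldsymbol{w} \neq \boldsymbol{v}$'', but $\boldsymbol{v} \neq \boldsymbol{w}$ does not preclude $\overline{\boldsymbol{v}} = \overline{\boldsymbol{w}}$ (the two points can be congruent modulo $q$ without being equal); in your formulation this situation is exactly the cancelling case and is absorbed by the inductive step. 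Both proofs correctly respect the fact that $\Gamma_{\overline{D}}$ need not be a group, working throughout with differences of pairs rather than with ``nonzero elements''.
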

\begin{proof}
 {\color{black} We use arguments such as the ones in \cite[Thm $3$]{strey2017lattices}. For each $1 \leq \ell \leq a$, by Lemma \ref{lemaexistpontosLp} that there exist $\boldsymbol{x}_{\ell}, \boldsymbol{y}_{\ell} \in \mathcal{C}_{\ell}$ such that $||\sigma(\boldsymbol{x}_{\ell}) - \sigma(\boldsymbol{y}_{\ell})||_{\mathrm{P}} = d_{\mathrm{P}}(\mathcal{C}_{\ell})$. Fix $1 \leq \ell \leq a$. Since $q^{a - \ell} \sigma(\mathcal{C}_{\ell}) \subseteq \Gamma_{\overline{D}}$, it follows that $q^{a - \ell}\sigma(\boldsymbol{x}_{\ell}), q^{a - \ell} \sigma(\boldsymbol{y}_{\ell}) \in \Gamma_{\overline{D}}$. One the one hand, we have
\begin{equation*}
    ||q^{a - \ell} \sigma(\boldsymbol{x}_{\ell}) - q^{a - \ell} \sigma(\boldsymbol{y}_{\ell})||_{\mathrm{P}} = q^{a - \ell} ||\sigma(\boldsymbol{x}_{\ell}) -  \sigma(\boldsymbol{y}_{\ell})||_{\mathrm{P}} = q^{a - \ell} d_{\mathrm{P}}(\mathcal{C}_{\ell}).
\end{equation*}
On the other hand, $q^{a}\Z^{n} \subseteq \Gamma_{\overline{D}}$ so that $d_{\mathrm{P}}(\Gamma_{\overline{D}}) \leq q^{a}$, what proves that
\begin{equation*}
    d_{\mathrm{P}}(\Gamma_{\overline{D}}) \leq \min_{1 \leq j \leq a} \left\{q^{a}, q^{a - j} d_{\mathrm{P}}(\mathcal{C}_j)\right\}.
\end{equation*}
For the other inequality, let $\boldsymbol{x}, \boldsymbol{y} \in \Gamma_{\overline{D}}$ be distinct elements, with $\boldsymbol{x} = q^{s} \boldsymbol{v}$ and $\boldsymbol{y} = q^{k}\boldsymbol{w}$, where $\boldsymbol{v}, \boldsymbol{w} \in \Z^{n}$ and $\boldsymbol{v}, \boldsymbol{w} \not \equiv \boldsymbol{0} \mod q$. Assume $s \geq k$ without loss of generality.
\begin{itemize}
    \item[$(i)$] If $k \geq a$, we have $d_{\mathrm{P}}^{\mathrm{P}}(\boldsymbol{x}, \boldsymbol{y}) = d_{\mathrm{P}}^{\mathrm{P}}(q^{s}\boldsymbol{v}, q^{k}\boldsymbol{w}) = q^{k\mathrm{P}} d_{\mathrm{P}}^{\mathrm{P}}(q^{s - k} \boldsymbol{v}, \boldsymbol{w}) \geq q^{a\mathrm{P}}$ since $\boldsymbol{0} \neq q^{s - k}\boldsymbol{v} - \boldsymbol{w} \in \Z^{n}$.
    %Precisa detalhar última parte???
    \item[$(ii)$] If $0 \leq k \leq a - 1$, then there exist $\boldsymbol{c}_1 \in \mathcal{C}_{1}, \ldots, \boldsymbol{c}_{a - k} \in \mathcal{C}_{a - k}$ and $\boldsymbol{z} \in \Z^{n}$ such that
    \begin{equation*}
        \boldsymbol{y} = q^{a} \boldsymbol{z} + q^{a - 1} \sigma(\boldsymbol{c}_1) + \cdots + q^{k} \sigma(\boldsymbol{c}_{a -  k}),
    \end{equation*}
    which implies
    \begin{equation*}
        \boldsymbol{w} = q^{a - k} \boldsymbol{z} + q^{a - 1 - k} \sigma(\boldsymbol{c}_1) + \cdots + q^{0}\sigma(\boldsymbol{c}_{a - k}).
    \end{equation*}
    Note that $\boldsymbol{w} \! \mod q = q^{0} \sigma(\boldsymbol{c}_{a - k}) \in \sigma(\mathcal{C}_{a - k})$. Denote $\overline{\boldsymbol{w}}:= \rho(\boldsymbol{w})$ and $\overline{\boldsymbol{v}} : = \rho(\boldsymbol{v})$, where $\rho: \Z^{n} \rightarrow \Z_q^{n}$ is the reduction map modulo $q$. Since $\boldsymbol{w} \mod q = q^{0} \sigma(\boldsymbol{c}_{a - k})$, we have $\overline{w} = \boldsymbol{c}_{a - k}\in \mathcal{C}_{a - k}$ and, thus, $q^{s - k} \overline{\boldsymbol{v}} - \overline{\boldsymbol{w}} \in \mathcal{C}_{a - k}$ by using that $s \geq k$. Moreover, due the fact that $\boldsymbol{w} \not \equiv \boldsymbol{0} \mod q$ and $\boldsymbol{w} \neq \boldsymbol{v}$, this is a nonzero vector, which guarantees
     \begin{equation*}
         d_{\mathrm{P}}(\boldsymbol{x}, \boldsymbol{y}) = d_{\mathrm{P}}(q^{s}\boldsymbol{v}, q^{k}\boldsymbol{w}) = q^{k} d_{\mathrm{P}}(q^{s - k}\boldsymbol{v}, \boldsymbol{w}) = q^{k}d_{\mathrm{P}}(q^{s - k}\boldsymbol{v} - \boldsymbol{w}, \boldsymbol{0}) \geq q^{k} d_{\mathrm{P}}(\mathcal{C}_{a - k}),
     \end{equation*}
     where the last inequality follows from
     \begin{equation*}
         |q^{s - k} v_i - w_i| \geq \min \left\{\sigma(q^{s - k} \overline{v}_i - \overline{w}_i), q - \sigma(q^{s - k} \overline{v}_i - \overline{w}_i)\right\},\ \mbox{for each} \ i \in \{1, \ldots, n\}.
     \end{equation*}
\end{itemize}
Finally, we conclude that $d_{\mathrm{P}}(\boldsymbol{x}, \boldsymbol{y}) \geq \displaystyle \min_{1 \leq j \leq a} \left\{q^{a}, q^{a - j} d_{\mathrm{P}}(\mathcal{C}_j)\right\}$, completing the proof.}
\end{proof}

\begin{corollary}\label{cordistLpConstD}
Under the hypothesis of Theorem \ref{TeoDistLpDbarra}, if  $\Lambda_{\overline{D}}$ is the smallest lattice that contains  $\Gamma_{\overline{D}}$, then
\begin{equation*}
    d_{\mathrm{P}} (\Lambda_{\overline{D}}) {\ \color{black} = } \min_{1 \leq j \leq a} \left\{q^{a}, q^{a - j} d_{\mathrm{P}}(\mathcal{C}_j)\right\}.
\end{equation*}
{\color{black} Moreover, it holds that $d_{\mathrm{P}} (\Lambda_{D}) \geq \min_{1 \leq j \leq a} \left\{q^{a}, q^{a - j} d_{\mathrm{P}}(\mathcal{C}_j)\right\}$, with equality if the chain is closed under the zero-one addition. In particular, if
%the minimum $L_{\mathrm{P}}$-distance of $\mathcal{C}_{\ell}$ satisfies
$d_{\mathrm{P}}(\mathcal{C}_{\ell}) \geq q^{\ell}$ for each $1 \leq \ell \leq a$, then $d_{\mathrm{P}}(\Lambda_{D}) = q^{a}$.}
\end{corollary}
\begin{proof}
%The first inequality follows from $\Gamma_{\overline{D}} \subseteq \Lambda_{\overline{D}}$. For the second, we just use Theorem \ref{TeoCadeiafechada}.
{\color{black} Since $\Gamma_{\overline{D}} \subseteq \Lambda_{\overline{D}}$ and by Theorem \ref{TeoDistLpDbarra}, we already have
    \begin{equation*}
        d(\Lambda_{\overline{D}}) \leq d_{\mathrm{P}}(\Gamma_{\overline{D}}) = \min_{1 \leq j \leq a} \left\{q^{a}, q^{a - j} d_{\mathrm{P}}(\mathcal{C}_j)\right\}.
    \end{equation*}
    Thus, it is sufficient to prove that
    %\begin{equation*}
       $ d_{\mathrm{P}}(\Lambda_{\overline{D}}) \geq \displaystyle \min_{1 \leq j \leq a} \left\{q^{a}, q^{a - j} d_{\mathrm{P}}(\mathcal{C}_j)\right\}.$
    %\end{equation*}
      Following a similar approach to the proof of Theorem \ref{TeoDistLpDbarra}, let $\boldsymbol{x}, \boldsymbol{y} \in \Lambda_{\overline{D}}$ be distinct elements. By the characterization of the elements of $\Lambda_{\overline{D}}$ \cite[Thm $3.14$]{strey2017lattices}, there exist $\boldsymbol{z}, \boldsymbol{w} \in \Z^{n}$ and $\alpha_{j}^{(i)}, \beta_{j}^{(i)} \in \left\{0, 1, \ldots, q - 1\right\}$ such that
    \begin{equation*}
        \boldsymbol{x} = q^{a} \boldsymbol{z} + \displaystyle \sum_{i = 1}^{a} q^{a - i} \sum_{\boldsymbol{c_{j}} \in \mathcal{C}_i} \alpha_{j}^{(i)} \sigma(\boldsymbol{c}_{j})\  \ \ \mbox{and} \ \ \
        \boldsymbol{y} = q^{a} \boldsymbol{w} + \displaystyle \sum_{i = 1}^{a} q^{a - i} \sum_{\boldsymbol{c_{j}} \in \mathcal{C}_i} \beta_{j}^{(i)} \sigma(\boldsymbol{c}_{j}).
    \end{equation*}
    So $\rho(\boldsymbol{x}), \rho(\boldsymbol{y})  \in \mathcal{C}_{a}$, since
    \begin{equation*}
     \boldsymbol{x} \equiv  \displaystyle \sum_{\boldsymbol{c}_{j} \in \mathcal{C}_{a}} \alpha_{j}^{(a)} \sigma(\boldsymbol{c}_{j}) \mod q \ \ \ \mbox{and} \ \ \ \boldsymbol{y} \equiv  \displaystyle \sum_{\boldsymbol{c}_{j} \in \mathcal{C}_{a}} \beta_{j}^{(a)} \sigma(\boldsymbol{c}_{j}) \mod q.
     \end{equation*}
   %  Since $\min\left\{\sigma\left(\rho(x_i)-\rho(y_i)\right), q - \sigma\left(\rho(x_i)-\rho(y_i)\right)\right\} \leq |x_i-y_i|$ for all $i \in\{1, \ldots,n\}$, we have
   Therefore,
     \begin{eqnarray*}
      \min_{1 \leq j \leq a} \left\{q^{a}, q^{a - j} d_{\mathrm{P}}(\mathcal{C}_j)\right\} &\leq &
     d_{\mathrm{P}}(\mathcal{C}_{a})   \leq  d_{\mathrm{P}}\left(\rho(\boldsymbol{x}),\rho(\boldsymbol{y})\right)\\
     & = &  \left[\sum_{i=1}^{n}\left(\min\left\{\sigma\left(\rho(x_i)-\rho(y_i)\right), q - \sigma\left(\rho(x_i)-\rho(y_i)\right)\right\}\right)^{\mathrm{P}} \right]^{1/\mathrm{P}}\\ & \leq &\left(\sum_{i=1}^{n} |x_i-y_i|^{\mathrm{P}} \right)^{1/\mathrm{P}}=  d_{\mathrm{P}}(\boldsymbol{x},\boldsymbol{y}),
     \end{eqnarray*}
     where the last inequality follows from
     \begin{center}
     $\min\left\{\sigma\left(\rho(x_i)-\rho(y_i)\right), q - \sigma\left(\rho(x_i)-\rho(y_i)\right)\right\} \leq |x_i-y_i|$ for all $i \in\{1, \ldots,n\}$
     \end{center}  as in Theorem \ref{TeoDistLpDbarra}. The arguments are analogous for the $L_{\infty}$-distance.  This shows that $\displaystyle \min_{1 \leq j \leq a} \left\{q^{a}, q^{a - j} d_{\mathrm{P}}(\mathcal{C}_j)\right\} \leq d_{\mathrm{P}}(\Lambda_{\overline{D}})$.          The inequality $d_{\mathrm{P}}(\Lambda_{D}) \geq \displaystyle \min_{1 \leq j \leq a} \left\{q^{a}, q^{a - j} d_{\mathrm{P}}(\mathcal{C}_j)\right\}$ follows directly from the fact $\Lambda_{D} \subseteq \Lambda_{\overline{D}}$ {\cite[Thm $3.14$]{strey2017lattices}} and the particular case from Theorem \ref{TeoCadeiafechada}.
     }
\end{proof}

The first part of Corollary \ref{cordistLpConstD}, when $\mathrm{P} = 1$, corresponds to Conjecture $1$ proposed in \cite{strey2018bounds}.

\begin{remark}
We can see that in the proof {\color{black} of Theorem \ref{TeoDistLpDbarra}} the condition of the codes being nested (required for Construction $\overline{D}$) was not used. We could then have considered the more general Construction $C$ for linear codes, which is not approached here, and get an analogous expression.
This generalized the result of \cite{conway2013sphere} and \cite{bollauf2019multilevel} regarding the Euclidean minimum distance of Construction $C$ to $L_{\mathrm{P}}$-distances.
\end{remark}

\begin{example}
Considering the family of codes given in Example \ref{excomordemdiferentedeplasConstD}, observe that both chains are closed under the zero-one addition, since $\mathcal{C}_1 = \Hat{\mathcal{C}}_1 = \Z_{6}^{2}$. Thus, in this case, by Theorem \ref{TeoDistLpDbarra} the $L_{\mathrm{P}}$-distance of the codes $\mathcal{C}_2$ and $\Hat{\mathcal{C}}_2$ determine completely the $L_{\mathrm{P}}$-distance of $\Lambda_{D}$ and $\Hat{\Lambda}_{D}$, respectively. Specifically, for $\mathrm{P} =2$ (Euclidean minimum distance), we obtain $d_{2}(\Lambda_{D}) = \min \left\{36, 6, 1\right\} = 1$ and $d_2 (\Hat{\Lambda}_{D}) = \min \left\{36, 6, \sqrt{5}\right\} = \sqrt{5}$.
\end{example}

\begin{example}
Consider the chain of nested codes $\mathcal{C}_2 \subseteq \mathcal{C}_1 \subseteq \Z_3^{3}$, where $\mathcal{C}_2 =  \langle(1,1,1)\rangle$ and $\mathcal{C}_1 = \langle(1,1,1), (0,0,1)\rangle$. Let $\Lambda_{D}$ be the lattice obtained from Construction $D$ under the generators above. From Theorem \ref{TeoconstDcomoA}, $\Lambda_D = \Lambda_A(\mathcal{C})$, where $\mathcal{C}$ is the $9$-ary code generated by the rows of the matrix
\begin{equation*}
    \boldsymbol{G} = \left[\begin{array}{ccc}
         1 & 1 & 1  \\
         0 & 0 & 3
    \end{array}\right]
\end{equation*}
Thus, by using Hermite Normal Form \cite{costa2017lattices}, we get a generator matrix for $\Lambda_{D}$ given by
\begin{equation*}
    \boldsymbol{M} = \left[\begin{array}{ccc}
        1 & 0 & 0  \\
        1 & 9 & 0\\
        1 & 0 & 3
    \end{array}\right].
\end{equation*}
It is straightforward to see that for $P = 2, 1$ and $\infty$, the Euclidean, Lee and maximum distance of $\Lambda_{D}$ are $d_2(\Lambda_{D}) = \sqrt{3}$, $d_{1}(\Lambda_{D}) = 1$ and $d_{\infty}(\Lambda_D) = 1$, respectively. On the other hand, the minimum distances of the codes $\mathcal{C}_1$ and $\mathcal{C}_2$ in these distances are $d_2(\mathcal{C}_1) = 1$, $d_2(\mathcal{C}_2) = \sqrt{3}$, $d_{1}(\mathcal{C}_1) = 1 = d_{1}(\mathcal{C}_2)$ and $d_{\infty}(\mathcal{C}_1) = 1 = d_{\infty}(\mathcal{C}_2)$. So, for $\mathrm{P} = 1, 2, \infty$, we can verify that $d_{\mathrm{P}}(\Lambda_{D}) = \min \left\{9, 3d_{\mathrm{P}}(\mathcal{C}_1), d_{\mathrm{P}}(\mathcal{C}_2)\right\}$.
Finally, the chain is closed under the zero-one addition since $(1,1,1) \ast (1,1,1) = (0,0,0) \in \mathcal{C}_2$ and $(2,2,2) \ast (2,2,2) = (1,1,1) \in \mathcal{C}_2$. This illustrates Corollary \ref{cordistLpConstD} for the $L_{\mathrm{P}}$-distances, with $\mathrm{P} = 1, 2, \infty$.
\end{example}

In order to provide some bounds for $L_{\mathrm{P}}$-distances of Construction D' for a certain chain of $q$-ary linear codes, we present an auxiliary result. This is one of the so-called Transference's Theorems, which relate some properties of a lattice $\Lambda$ and its dual lattice $\Lambda^{\ast}$. The following version is a consequence of the First and Second Minkowski's Theorems \cite{siegel1989lectures, nguyen2009hermite} and can be also viewed as a particular case of Banaszczyk's Theorem for successive minima {\cite[Thm $2.2$]{banaszczyk1993new}}. %Tem que enfatizar aqui que existem resultados semelhantes para as outras distâncias. Seria legal enunciar? Acho que sim, né?

\begin{theorem}[\cite{banaszczyk1993new, siegel1989lectures}]\label{TeoTransferenciaForte}
{\color{black} Let $\Lambda \subset \R^{n}$ be a full-rank lattice. Then, the
 minimum Euclidean distances of $\Lambda$ and $\Lambda^{\ast}$ satisfy $d_2(\Lambda) \cdot d_2(\Lambda^{\ast}) \leq n$. Another inequality also satisfied is $d_2(\Lambda) \cdot d_2(\Lambda^{\ast}) \leq \gamma_{n}$, where $\gamma_n$ is the Hermite's constant in dimension $n$, that is, $\gamma_n = 4\delta_n^{2/n}$, and $\delta_n$ is the maximum center density for lattices in dimension $n$.}
\end{theorem}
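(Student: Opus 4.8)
The plan is to handle the two inequalities separately, using as the only structural input the volume duality coming from the preliminaries: if $\boldsymbol{M}$ is a generator matrix for $\Lambda$ then $(\boldsymbol{M}^{T})^{-1}$ generates $\Lambda^{\ast}$, so that $\text{vol}\,\Lambda \cdot \text{vol}\,\Lambda^{\ast} = 1$. Throughout I would use that, for a lattice, the minimum distance $d_2(\Lambda)$ equals the Euclidean length of a shortest nonzero vector.

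For the lower bound I would start from a shortest nonzero vector $\boldsymbol{u}\in\Lambda$ and a vector $\boldsymbol{v}\in\Lambda^{\ast}$, exploiting that, by definition of the dual, $\boldsymbol{u}\cdot\boldsymbol{v}\in\Z$. If such a pair can be chosen with $\boldsymbol{u}\cdot\boldsymbol{v}\neq 0$, then $|\boldsymbol{u}\cdot\boldsymbol{v}|\geq 1$ and the Cauchy--Schwarz inequality gives $1\leq \|\boldsymbol{u}\|\,\|\boldsymbol{v}\|$, the desired estimate. The delicate point --- and the genuine content of the transference theorem --- is to produce such a pair whose lengths are \emph{both} controlled by the respective minimum distances. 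Since $\Lambda$ is full rank, no nonzero $\boldsymbol{v}$ is orthogonal to all of $\Lambda$; the quantitative form of this observation is exactly Minkowski's Second Theorem on successive minima (equivalently Banaszczyk's bound), which is where the cited results enter. I expect this direction to be the main obstacle: two globally shortest vectors of $\Lambda$ and $\Lambda^{\ast}$ may be orthogonal, so the argument must pass through the successive minima of $\Lambda^{\ast}$ rather than through its first minimum alone.

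For the upper bound I would argue directly from the definition of the Hermite constant. Applying the inequality $d_2(L)^{2}\leq \gamma_n\,(\text{vol}\,L)^{2/n}$ (that is, $\gamma(L)\leq\gamma_n=4\delta_n^{2/n}$) to $L=\Lambda$ and to $L=\Lambda^{\ast}$ and multiplying the two estimates yields
\begin{equation*}
    \bigl(d_2(\Lambda)\,d_2(\Lambda^{\ast})\bigr)^{2}\leq \gamma_n^{2}\,\bigl(\text{vol}\,\Lambda\cdot\text{vol}\,\Lambda^{\ast}\bigr)^{2/n}=\gamma_n^{2},
\end{equation*}
so $d_2(\Lambda)\,d_2(\Lambda^{\ast})\leq\gamma_n$, which is the second stated inequality. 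The bound by $n$ then follows from the classical estimate $\gamma_n\leq n$, itself a consequence of Minkowski's First (convex body) Theorem applied to a Euclidean ball: a ball of radius $d_2(L)/2$ contains no two distinct points of $L$, so comparing its volume with $\text{vol}\,L$ gives $\gamma(L)\leq 4/V_n^{2/n}$, where $V_n$ is the volume of the unit ball, and hence $\gamma_n\leq 4/V_n^{2/n}\leq n$. Combining the two directions completes the argument.
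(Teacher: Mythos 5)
Your two halves have very different status, so let me separate them. The upper bound is correct and complete: applying the Hermite--constant inequality $d_2(L)^{2}\leq\gamma_n\,(\mathrm{vol}\,L)^{2/n}$ to $\Lambda$ and to $\Lambda^{\ast}$, multiplying, and using $\mathrm{vol}\,\Lambda\cdot\mathrm{vol}\,\Lambda^{\ast}=1$ gives $d_2(\Lambda)\,d_2(\Lambda^{\ast})\leq\gamma_n$, and the Minkowski ball argument indeed gives $\gamma_n\leq 4/V_n^{2/n}\leq n$ (with $V_n$ the unit-ball volume, the last inequality holding for every $n\geq 1$, with equality at $n=1$). This is the standard route, and it actually supplies more than the paper itself does, since the paper states this theorem purely as a citation of \cite{banaszczyk1993new, siegel1989lectures} and gives no proof at all.

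The lower bound, however, is a genuine gap in your proposal --- and it is a gap that cannot be filled, because the inequality $1\leq d_2(\Lambda)\cdot d_2(\Lambda^{\ast})$ is false as stated. Take $\Lambda=\Z\times c\Z\subset\R^{2}$ with $c>1$; then $\Lambda^{\ast}=\Z\times(1/c)\Z$, so $d_2(\Lambda)=1$, $d_2(\Lambda^{\ast})=1/c$, and the product is $1/c<1$. This example also shows exactly where your Cauchy--Schwarz plan breaks: every $\boldsymbol{v}\in\Lambda^{\ast}$ with $\boldsymbol{u}\cdot\boldsymbol{v}\neq 0$ for $\boldsymbol{u}=(1,0)$ has $\|\boldsymbol{v}\|\geq 1$, but the \emph{shortest} dual vector $(0,1/c)$ is orthogonal to $\boldsymbol{u}$, so no pairing argument can bound the product of the two \emph{first} minima from below. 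Your own caveat --- that the argument ``must pass through the successive minima of $\Lambda^{\ast}$ rather than through its first minimum alone'' --- is precisely the correct diagnosis: what Banaszczyk's transference theorem actually gives is $1\leq\lambda_1(\Lambda)\cdot\lambda_n(\Lambda^{\ast})$ and $\lambda_i(\Lambda)\cdot\lambda_{n-i+1}(\Lambda^{\ast})\leq n$, where $\lambda_i$ denotes the $i$-th successive minimum; the first minimum of $\Lambda$ is paired with the \emph{last} successive minimum of $\Lambda^{\ast}$, and this cannot be strengthened to $\lambda_1(\Lambda^{\ast})=d_2(\Lambda^{\ast})$. So the defect lies in the statement you were asked to prove (the paper has mis-transcribed the cited transference theorem, and the error propagates to the lower bounds in Corollary \ref{distLpDlinhaigualdade}, where this theorem is invoked), not in your execution of the upper bound.
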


\begin{remark}
It is worth noticing that $\gamma_{n} \leq n/4 + 1$ for all $n$, as shown in \cite{nguyen2009hermite}, what means that the bound with Hermite's constant is more restrictive than with $n$. Unfortunately, the exact value of $\gamma_{n}$ is only known for dimensions $1 \leq n \leq 8$ and $n = 24$.
\end{remark}

Some well-known inequalities involving the $L_{\mathrm{P}}$-distances in $\R^{n}$ allow us to directly derive a consequence from the theorem above.

\begin{corollary}\label{TransfFracaparaLp}
For a full-rank lattice $\Lambda \subset \R^{n}$, we have
\begin{eqnarray}
    d_{\mathrm{P}}(\Lambda) \cdot d_{\mathrm{P}}(\Lambda^{\ast}) &\leq& \gamma_n \leq \dfrac{n}{4} + 1\hspace{0.2cm} \text{ for } 2 < \mathrm{P} \leq \infty; \label{eq1}\\
    d_{\mathrm{P}}(\Lambda) \cdot d_{\mathrm{P}}(\Lambda^{\ast}) &\leq& \left(n^{\frac{1}{p} - \frac{1}{2}}\right)^{2} \gamma_n \leq \left(n^{\frac{1}{p} - \frac{1}{2}}\right)^{2} \left(\dfrac{n}{4} + 1\right) \hspace{0.2cm} \text{ for } 1 \leq \mathrm{P} < 2 \label{eq2}.
\end{eqnarray}
\end{corollary}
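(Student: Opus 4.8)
The plan is to reduce everything to the Euclidean case already handled by Theorem \ref{TeoTransferenciaForte}, using only the standard equivalences between the $L_{\mathrm{P}}$-norms on $\R^{n}$. Recall that for every $\boldsymbol{x} \in \R^{n}$ and every pair of exponents $1 \leq p \leq r \leq \infty$ one has
\begin{equation*}
    \|\boldsymbol{x}\|_{r} \leq \|\boldsymbol{x}\|_{p} \leq n^{\frac{1}{p} - \frac{1}{r}} \|\boldsymbol{x}\|_{r}.
\end{equation*}
The idea is to use these to compare the shortest nonzero vector of a lattice in the $L_{\mathrm{P}}$-metric with its shortest nonzero vector in the Euclidean metric, and then to multiply the resulting bounds for $\Lambda$ and for $\Lambda^{\ast}$.

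First I would treat the case $2 < \mathrm{P} \leq \infty$. Let $\boldsymbol{v} \in \Lambda$ be a nonzero vector achieving $d_{2}(\Lambda) = \|\boldsymbol{v}\|_{2}$. Taking $p = 2$ and $r = \mathrm{P}$ in the left-hand inequality above gives $d_{\mathrm{P}}(\Lambda) \leq \|\boldsymbol{v}\|_{\mathrm{P}} \leq \|\boldsymbol{v}\|_{2} = d_{2}(\Lambda)$, and the same argument applied to $\Lambda^{\ast}$ yields $d_{\mathrm{P}}(\Lambda^{\ast}) \leq d_{2}(\Lambda^{\ast})$. Multiplying these and invoking the Euclidean bound $d_{2}(\Lambda)\, d_{2}(\Lambda^{\ast}) \leq \gamma_{n} \leq n$ from Theorem \ref{TeoTransferenciaForte} establishes \eqref{eq1}.

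For the case $1 \leq \mathrm{P} < 2$ I would argue analogously, but now using the right-hand inequality with $p = \mathrm{P}$ and $r = 2$. With $\boldsymbol{v}$ again a Euclidean-shortest nonzero vector of $\Lambda$, this gives $d_{\mathrm{P}}(\Lambda) \leq \|\boldsymbol{v}\|_{\mathrm{P}} \leq n^{\frac{1}{p} - \frac{1}{2}} \|\boldsymbol{v}\|_{2} = n^{\frac{1}{p} - \frac{1}{2}} d_{2}(\Lambda)$, and likewise $d_{\mathrm{P}}(\Lambda^{\ast}) \leq n^{\frac{1}{p} - \frac{1}{2}} d_{2}(\Lambda^{\ast})$. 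Multiplying the two and applying Theorem \ref{TeoTransferenciaForte} once more gives
\begin{equation*}
    d_{\mathrm{P}}(\Lambda)\, d_{\mathrm{P}}(\Lambda^{\ast}) \leq \left(n^{\frac{1}{p} - \frac{1}{2}}\right)^{2} d_{2}(\Lambda)\, d_{2}(\Lambda^{\ast}) \leq \left(n^{\frac{1}{p} - \frac{1}{2}}\right)^{2} \gamma_{n} \leq \left(n^{\frac{1}{p} - \frac{1}{2}}\right)^{2} n,
\end{equation*}
which is precisely \eqref{eq2}.

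There is no serious obstacle here; the only point requiring a little care is the direction of the bound on the minimum distance. One cannot in general compare $d_{\mathrm{P}}$ and $d_{2}$ by the same vector in both directions, since the $L_{\mathrm{P}}$-shortest and Euclidean-shortest lattice vectors need not coincide. The argument is therefore deliberately one-sided: I bound $d_{\mathrm{P}}(\Lambda)$ from above by evaluating the $L_{\mathrm{P}}$-norm at the \emph{Euclidean} minimizer, which is all that is needed for the stated upper bounds. Choosing the correct minimizer for each inequality is the entire content of the argument; the rest is the norm comparison and a direct appeal to Theorem \ref{TeoTransferenciaForte}.
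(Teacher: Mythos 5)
Your proof is correct and takes essentially the same route as the paper's: both reduce to the Euclidean transference bound of Theorem \ref{TeoTransferenciaForte} via the standard comparison inequalities between $L_{\mathrm{P}}$-norms, namely $\|\boldsymbol{x}\|_{\mathrm{P}} \leq \|\boldsymbol{x}\|_{2}$ for $2 < \mathrm{P} \leq \infty$ and $\|\boldsymbol{x}\|_{\mathrm{P}} \leq n^{\frac{1}{\mathrm{P}} - \frac{1}{2}}\|\boldsymbol{x}\|_{2}$ for $1 \leq \mathrm{P} < 2$. If anything, you are more explicit than the paper, which leaves implicit the one-sided step of evaluating the $L_{\mathrm{P}}$-norm at the Euclidean minimizer to pass from norm inequalities to inequalities between minimum distances.
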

\begin{proof}
{\color{black} Recall that from the Holder's inequality for $L_{\mathrm{P}}$-norm \cite{kreyszig1991introductory}, if $2 < \mathrm{P} < \infty$, then $ ||\boldsymbol{x}||_{\mathrm{P}} \leq ||\boldsymbol{x}||_{2} \leq n^{\frac{1}{2} - \frac{1}{\mathrm{P}}}||\boldsymbol{x}||_{\mathrm{P}}$, and for $\mathrm{P} = \infty$, we have $||\boldsymbol{x}||_{\infty} \leq ||\boldsymbol{x}||_2$. Thus, (\ref{eq1}) follows from Theorem \ref{TeoTransferenciaForte}.
For $\mathrm{P} < 2$, we have $||\boldsymbol{x}||_{\mathrm{P}} \leq n^{\frac{1}{\mathrm{P}} - \frac{1}{2}}||\boldsymbol{x}||_{2}$ and, then (\ref{eq2}) holds from Theorem \ref{TeoTransferenciaForte}. }
\end{proof}

\begin{remark}
    Banaszczyk's Theorem \cite{banaszczyk1993new} states that $d_2(\Lambda) \cdot d_2(\Lambda^{\ast}) \leq n$ and this result is considered tight up to a constant. Under this approach, in \cite{miller2019kissing} it is presented another bound
    \begin{equation*}
        d_{2}(\Lambda) \cdot d_{2}(\Lambda^{\ast}) \leq \dfrac{n}{2\pi} + \dfrac{3\sqrt{n}}{\pi}.
    \end{equation*}

   There is a result shown by {\cite[Thm $9.5$]{milnor1973symmetric}} which asserts the existence of a sequence of self-dual lattices that satisfy $d_2(\Lambda) = \Theta(\sqrt{n})$, i.e., bounded below and above by a constant multiple of $\sqrt{n}$. For such lattices, we have $d_2(\Lambda) \cdot d_2(\Lambda^{\ast}) = \Omega(n)$, where $\Omega(n)$ denote a quantity bounded below by a constant multiple of $n$.

    The Corollary \ref{TransfFracaparaLp} states an upper bound for the $L_{\mathrm{P}}$-distance of the dual lattice $\Lambda^{\ast}$ related to the $L_{\mathrm{P}}$-distance of $\Lambda$. Nevertheless, since they are obtained by simply applying inequalities relating to Euclidean minimum distance, they certainly can be improved. In this sense, in \cite{miller2019kissing} it is also proposed an upper bound for the $L_1$-distance {\cite[Thm $3.9$]{miller2019kissing}}, namely
     \begin{equation*}
        d_1(\Lambda) \cdot d_{1}(\Lambda^{\ast}) \leq 0.154264n^{2} \left(1 + 2\pi \sqrt{\dfrac{3}{n}}\right)^{2}.
    \end{equation*}
\end{remark}

To establish a result similar to Corollary \ref{cordistLpConstD} for $L_{\mathrm{P}}$-distances of Construction D' lattice, we use Corollary \ref{TransfFracaparaLp} applied to the chain of dual codes jointly to Theorem \ref{T42}.

\begin{corollary}\label{distLpDlinhaigualdade}
    Let $\mathcal{C}_a \subseteq \cdots \subseteq \mathcal{C}_1 \subsetneq \Z_{q}^{n}$ be a family of nested linear codes. Consider $\Lambda_{D'}$ the lattice obtained from Construction $D'$ and a fixed $L_{\mathrm{P}}$-distance, with $1 \leq \mathrm{P} \leq \infty$. Denote by $d_{\mathrm{P}}(\mathcal{C}_{\ell}^{\perp})$ the {\color{black} minimum} $L_{\mathrm{P}}$-distance of $\mathcal{C}_{\ell}^{\perp}$ for each $1 \leq \ell \leq a$. {\color{black} Thus, }
\begin{equation*}
    d_{\mathrm{P}}(\Lambda_{D'}^{\ast}) { \ \color{black} \geq \ } \min_{1 \leq j \leq a}\left\{1, q^{-j} d_{\mathrm{{P}}}(\mathcal{C}_{a - j + 1}^{\perp})\right\},
\end{equation*}
{\color{black} and the equality holds if the chain of dual codes is closed under the zero-one addition.}
In particular, the $L_{\mathrm{P}}$-distances of $\Lambda_{D'}$ satisfy
\begin{eqnarray*}
   \begin{array}{cccc}
     d_{\mathrm{P}} (\Lambda_{D'}) &\leq& \dfrac{\gamma_n}{\displaystyle \min_{1 \leq j \leq a}\left\{1, q^{-j} d_{\mathrm{{P}}}(\mathcal{C}_{a - j + 1}^{\perp})\right\}} \leq \dfrac{\frac{n}{4} + 1}{\displaystyle \min_{1 \leq j \leq a}\left\{1, q^{-j} d_{\mathrm{{P}}}(\mathcal{C}_{a - j + 1}^{\perp})\right\}} & \text{ for } 2 \leq \mathrm{P} \leq \infty \\
     d_{\mathrm{P}} (\Lambda_{D'}) &\leq& \left(n^{\frac{1}{p} - \frac{1}{2}}\right)^{2} \cdot \dfrac{\gamma_n}{\displaystyle \min_{1 \leq j \leq a}\left\{1, q^{-j} d_{\mathrm{{P}}}(\mathcal{C}_{a - j + 1}^{\perp})\right\}} \leq \dfrac{\frac{n}{4} + 1}{\displaystyle \min_{1 \leq j \leq a}\left\{1, q^{-j} d_{\mathrm{{P}}}(\mathcal{C}_{a - j + 1}^{\perp})\right\}} & \text{ for } 1 < \mathrm{P} < 2,
    \end{array}
\end{eqnarray*}
where $\gamma_n$ is the Hermite's constant in dimension $n$.
\end{corollary}

\begin{proof}
These bounds are a simple consequence of Corollary \ref{cordistLpConstD} and Theorem \ref{TeoTransferenciaForte}, since $\Lambda_{D'} = q^{a}\Lambda_{D^{\perp}}^{\ast}$, as proved in Theorem \ref{T42}. The second part follows directly from Corollary \ref{TransfFracaparaLp}.
\end{proof}

\begin{corollary}\label{corDistLpDlinhaigualdade}
Under the hypothesis of Corollary \ref{distLpDlinhaigualdade}, for $2 \leq \mathrm{P} \leq \infty$, if $d_{\mathrm{P}}(\mathcal{C}_\ell^{\perp}) \geq q^{\ell}$ for each $1 \leq \ell \leq a$, it follows that {\color{black} $d_{\mathrm{P}}(\Lambda_{D'}^{\ast}) = q^{ 1 - a}$ and $d_{\mathrm{P}}(\Lambda_{D'}) \leq \min \left\{\gamma_n, n\right\} \cdot q^{a - 1}$. }
\end{corollary}

\begin{remark}
    Note that the required condition of the corollary above is assumed for binary codes {\color{black}and $\mathrm{P} = 2$} in \cite{conway2013sphere, barnes1983new}.
\end{remark}

The next example shows that the conditions of the chain of nested dual codes in Construction $D'$ being closed under the zero-one addition cannot be omitted in Corollary \ref{distLpDlinhaigualdade} {\color{black} in order to attain the equality}.

\begin{example}
Let $\mathcal{C}_2 \subseteq \mathcal{C}_1 \subseteq \Z_{6}^{2}$ be the family of nested linear codes, where $\mathcal{C}_1^{\perp} = \langle(4,2)\rangle$ and $\mathcal{C}_2^{\perp} = \langle(4,2), (0,1)\rangle$ (as in Example \ref{Exemplodo42e24}).
We know that $\Lambda_{D'}$ and ${\color{black} 36}\Lambda_{D'}^{\ast}$ are generated, respectively, by
\begin{equation*}
    \boldsymbol{M}_{1} = \left[\begin{array}{cc}
        9 & -3  \\
        0 & 6
    \end{array}\right] \hspace{0.2cm} \text{ and } \hspace{0.2cm} \boldsymbol{M}_{2} = \left[\begin{array}{cc}
        4 & 0  \\
        2 & 6
    \end{array}\right].
\end{equation*}
Let $\mathcal{C} \subseteq \Z_{36}^{2}$ be the linear code such that ${\color{black} 36}\Lambda_{D'}^{\ast} = \Lambda_A(\mathcal{C})$. Thus, the Euclidean minimum distance of $\Lambda_{D'}^{\ast}$ follows by the Euclidean minimum distance for $\mathcal{C}$, namely {\color{black} $d_2(\Lambda_{D'}^{\ast}) = \min \left\{2\sqrt{5}/36, 36/36\right\} = \sqrt{5}/18$}. On the other hand, we have
\begin{equation*}
    \min \left\{1, 6^{-1}d_2(\mathcal{C}_2^{\perp}), {\color{black} 6^{-2}} d_{2}(\mathcal{C}_1^{\perp})\right\} = \min \left\{1, 1/6, 6^{-2}(2\sqrt{2})\right\} = {\color{black}\sqrt{2}/18} < \sqrt{5}/18.
\end{equation*}
Note that the chain $\mathcal{C}_2 \subseteq \mathcal{C}_1 \subseteq \Z_6^{2}$ is not closed under the zero-one addition.
\begin{figure}[!ht]
  \centering
  \subcaptionbox{Code $\mathcal{C}_1^{\perp} = \langle(4,2)\rangle \subseteq \mathds{Z}_{6}^{2}$}{\includegraphics[width=.3\textwidth]{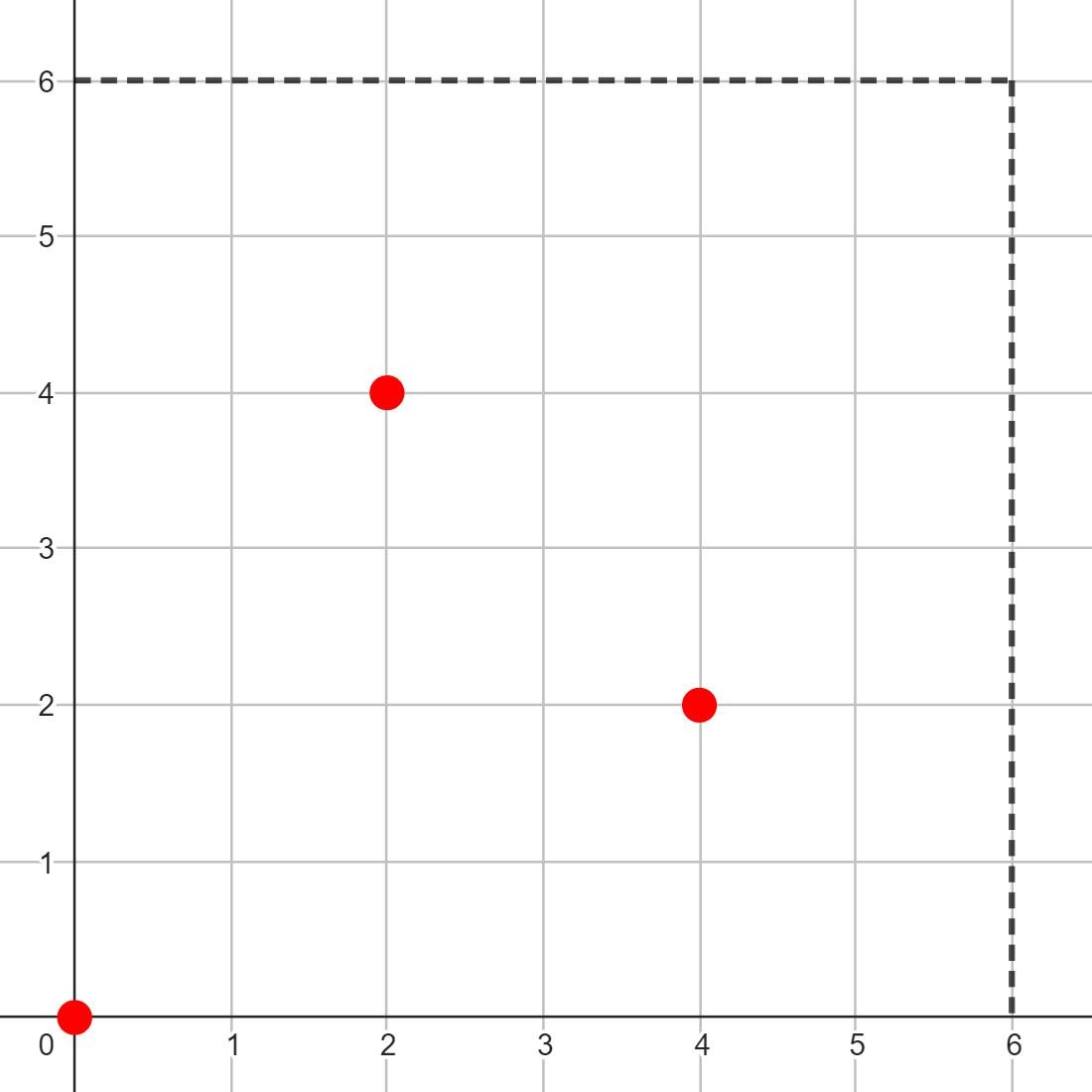}}
  \hspace{0.9cm}
  \subcaptionbox{Code $\mathcal{C}_2^{\perp} = \langle(4,2), (0,1)\rangle \subseteq \mathds{Z}_{6}^{2}$}{\includegraphics[width=.3\textwidth]{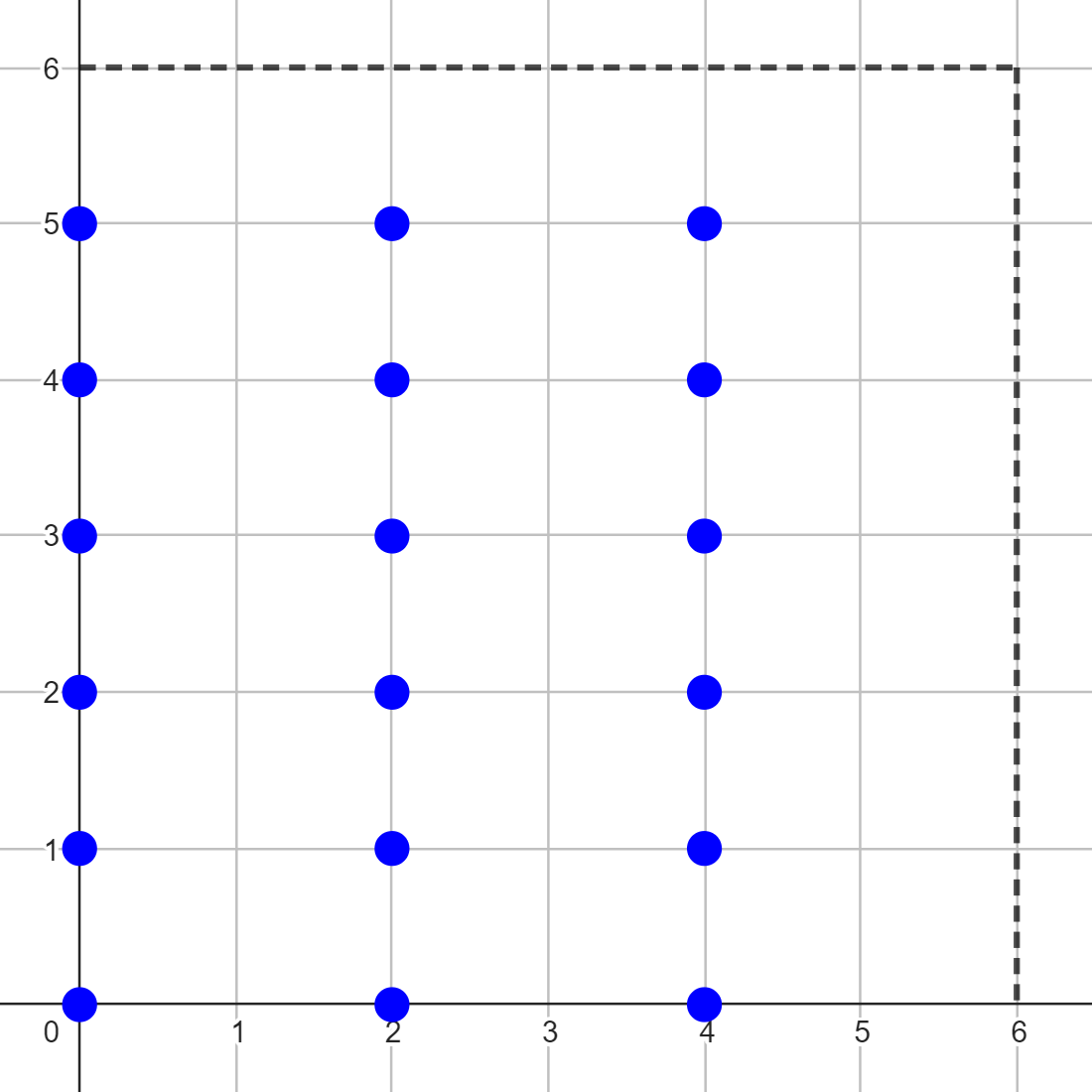}}
\caption{\label{figure12} Dual codes used for Construction D'.}
\end{figure}

\end{example}

{\color{black} For binary codes, it is also possible to obtain {\color{black}a lower bound} for the minimum $L_{\mathrm{P}}$-distance without restrictions under the chain.} These bounds are related to the minimum distance of the original codes and not of their dual codes. We present next some of them, which extend to $L_{\mathrm{P}}$-distances results previously known for Euclidean distance from binary codes {\cite[Thm $3.1$]{sadeghi2006low}} and the ones known for $L_{1}$-distance from $q$-ary linear codes {\cite[Thm $4$]{strey2018bounds}}.

{\color{black}
\begin{lemma}\label{auxiliarystatement}
Let $\{\boldsymbol{0}\} \subsetneq \mathcal{C}_a \subseteq \cdots \subseteq \mathcal{C}_1 \subsetneq \Z_2^{n}$ be a family of nested binary linear codes, $0 \leq r_1 \leq \cdots \leq r_a$ and $\boldsymbol{h}_1, \ldots, \boldsymbol{h}_{r_a} \in \Z_2^{n}$ such that $\mathcal{C}_\ell^{\perp} = \left<\boldsymbol{h}_1, \ldots, \boldsymbol{h}_{r_a}\right>$ for each $\ell = 1, \ldots, n$.  If $\boldsymbol{x} \in \Z^{n}$ has at least one odd coordinate and $\boldsymbol{x} \cdot \sigma(\boldsymbol{h}_{j}) \equiv 0 \mod 2$  for $1 \leq j \leq r_{k}$, then $||\boldsymbol{x}||_{\mathrm{P}} \geq d_{\mathrm{P}}(\mathcal{C}_k)$.
\end{lemma}

\begin{proof}
Let $\boldsymbol{x} = \boldsymbol{c} + 2\boldsymbol{z}$, where $\boldsymbol{z} \in \Z^n$ and $\boldsymbol{c} = (c_1, \ldots, c_n)$, with $c_i = 0$ or $1$. According to the hypothesis, $\boldsymbol{c} \neq \boldsymbol{0}$ and $\boldsymbol{c} \cdot \sigma(\boldsymbol{h}_j) \equiv 0 \mod 2$ for $1 \leq j \leq r_{k}$. Thus, $\rho({\boldsymbol{c}})\in \mathcal{C}_{k}$ and, consequently, $||\boldsymbol{c}||_{\mathrm{P}} \geq d_{\mathrm{P}}(\mathcal{C}_{k})$. On the other hand, by taking $\boldsymbol{\mu} = \boldsymbol{c}$, since $\min \left\{2 - c_i, c_i\right\} = c_i$ for all $i = 1, \ldots, n$, it follows that $||\boldsymbol{x}||_{\mathrm{P}} \geq ||\boldsymbol{c}||_{\mathrm{P}} \geq d_{\mathrm{P}}(\mathcal{C}_{k})$ by Lemma \ref{lemadistanciaLp}.

\end{proof}

\begin{theorem}\label{TeodistLpDlinha}
Let $\{\boldsymbol{0}\}\subsetneq \mathcal{C}_a \subseteq \cdots \subseteq \mathcal{C}_1 \subsetneq \Z_2^{n}$ be a family of nested binary linear codes, $0 \leq r_1 \leq \cdots \leq r_a$ and $\boldsymbol{h}_1, \ldots, \boldsymbol{h}_{r_a} \in \Z_2^{n}$ such that $\mathcal{C}_\ell^{\perp} = \left<\boldsymbol{h}_1, \ldots, \boldsymbol{h}_{r_a}\right>$ for each $\ell = 1, \ldots, n$. Denote by $\Lambda_{D'}$ the lattice obtained via Construction $D'$ using the parameters above and $d_\mathrm{P}(\mathcal{C})$ the $L_{\mathrm{P}}$-distance of $\mathcal{C}_{\ell}$ for each $\ell = 1, \ldots, a$. Then,
\begin{equation*}
   \min_{1 \leq j \leq a} \left\{2^{a}, 2^{a - j} d_{\mathrm{P}}(\mathcal{C}_j)\right\} \leq d_{\mathrm{P}} \left(\Lambda_{D'}\right) \leq 2^{a}.
\end{equation*}
\end{theorem}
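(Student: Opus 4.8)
The plan is to prove the two bounds separately: the upper bound is read off immediately from the fact that $\Lambda_{D'}$ is a $2^a$-ary lattice, while the lower bound comes from a direct case analysis on an arbitrary nonzero lattice vector that mirrors the proof of Theorem~\ref{TeoDistLpDbarra}. For the upper bound, observe that $2^a\boldsymbol{e}_1 \cdot \sigma(\boldsymbol{h}_j) = 2^a(\boldsymbol{e}_1 \cdot \sigma(\boldsymbol{h}_j)) \equiv 0 \bmod 2^{i+1}$ for every admissible pair $(i,j)$ of Definition~\ref{defiD'eleonesio}, since there $i+1 \leq a$. Hence $2^a\boldsymbol{e}_1 \in \Lambda_{D'}$ (equivalently $2^a\Z^n \subseteq \Lambda_{D'}$ by Theorem~\ref{propD'viaconstrucaoA}), and therefore $d_{\mathrm{P}}(\Lambda_{D'}) \leq \|2^a\boldsymbol{e}_1\|_{\mathrm{P}} = 2^a$.

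For the lower bound I would take an arbitrary nonzero $\boldsymbol{x} \in \Lambda_{D'}$ and let $t \geq 0$ be the largest integer with $\boldsymbol{x} \in 2^t\Z^n$, writing $\boldsymbol{x} = 2^t\boldsymbol{w}$ with $\rho(\boldsymbol{w}) \neq \boldsymbol{0}$. If $t \geq a$, then $\boldsymbol{x}$ is a nonzero vector of $2^a\Z^n$, so $\|\boldsymbol{x}\|_{\mathrm{P}} \geq 2^a$. The substantive case is $0 \leq t \leq a-1$, where the key claim is that $\rho(\boldsymbol{w}) \in \mathcal{C}_{a-t}$. Granting this, Lemma~\ref{lemadistanciaLp} applied with $q=2$ to $\boldsymbol{w}$ (whose reduction mod $2$ is $\rho(\boldsymbol{w})$, so that $\mu_i = r_i$ in the binary case) gives $\|\boldsymbol{w}\|_{\mathrm{P}} \geq \|\rho(\boldsymbol{w})\|_{\mathrm{P}} \geq d_{\mathrm{P}}(\mathcal{C}_{a-t})$, the last step because $\rho(\boldsymbol{w})$ is a nonzero codeword of $\mathcal{C}_{a-t}$. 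Thus $\|\boldsymbol{x}\|_{\mathrm{P}} = 2^t\|\boldsymbol{w}\|_{\mathrm{P}} \geq 2^t d_{\mathrm{P}}(\mathcal{C}_{a-t})$. As $t$ runs over $0, 1, \ldots, a-1$ the quantities $2^t d_{\mathrm{P}}(\mathcal{C}_{a-t})$ are exactly $d_{\mathrm{P}}(\mathcal{C}_a), 2d_{\mathrm{P}}(\mathcal{C}_{a-1}), \ldots, 2^{a-1}d_{\mathrm{P}}(\mathcal{C}_1)$; combining with the case $t \geq a$ yields $\|\boldsymbol{x}\|_{\mathrm{P}} \geq \min\{2^a, 2^{a-1}d_{\mathrm{P}}(\mathcal{C}_1), \ldots, d_{\mathrm{P}}(\mathcal{C}_a)\}$ for every nonzero $\boldsymbol{x}$, which is the asserted lower bound.

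It remains to establish the claim $\rho(\boldsymbol{w}) \in \mathcal{C}_{a-t}$, which I expect to be the main obstacle and is the only place the defining congruences are used in full strength. Since $\mathcal{C}_{a-t}^{\perp} = \langle \boldsymbol{h}_1, \ldots, \boldsymbol{h}_{r_{a-t}}\rangle$ and $\mathcal{C}_{a-t} = (\mathcal{C}_{a-t}^{\perp})^{\perp}$, it suffices to show $\rho(\boldsymbol{w}) \cdot \boldsymbol{h}_j = 0$ in $\Z_2$ for every $j \leq r_{a-t}$. Fix such a $j$ and let $(i,j)$ be the admissible pair with $r_{a-i-1} < j \leq r_{a-i}$; since $j \leq r_{a-t}$ and the $r_\ell$ are nondecreasing, a short bookkeeping argument forces $a-i \leq a-t$, i.e. $i \geq t$ (otherwise $i \leq t-1$ would give $r_{a-i-1} \geq r_{a-t} \geq j$, contradicting $j > r_{a-i-1}$). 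Membership $\boldsymbol{x} \in \Lambda_{D'}$ gives $\boldsymbol{x} \cdot \sigma(\boldsymbol{h}_j) \equiv 0 \bmod 2^{i+1}$, and because $i \geq t$ this descends to $\boldsymbol{x} \cdot \sigma(\boldsymbol{h}_j) \equiv 0 \bmod 2^{t+1}$. Substituting $\boldsymbol{x} = 2^t\boldsymbol{w}$ and cancelling $2^t$ yields $\boldsymbol{w} \cdot \sigma(\boldsymbol{h}_j) \equiv 0 \bmod 2$, that is $\rho(\boldsymbol{w}) \cdot \boldsymbol{h}_j = 0$ in $\Z_2$, proving the claim.

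The two delicate points to verify carefully are the index bookkeeping showing $i \geq t$ simultaneously for all generators $\boldsymbol{h}_1, \ldots, \boldsymbol{h}_{r_{a-t}}$ of $\mathcal{C}_{a-t}^{\perp}$, and confirming that the case $\mathrm{P} = \infty$ is covered, which holds since Lemma~\ref{lemadistanciaLp} is stated for all $1 \leq \mathrm{P} \leq \infty$ and the scaling identity $\|2^t\boldsymbol{w}\|_{\mathrm{P}} = 2^t\|\boldsymbol{w}\|_{\mathrm{P}}$ is valid in every $L_{\mathrm{P}}$-norm. I note that the resulting lower bound coincides with the exact value $d_{\mathrm{P}}(\Gamma_{\overline{D}})$ obtained in Theorem~\ref{TeoDistLpDbarra}, which is consistent with the fact that $\Gamma_{\overline{D}}$ and $\Lambda_{D'}$ agree when the dual chain is closed under the zero-one addition.
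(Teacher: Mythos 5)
Your proposal is correct and takes essentially the same approach as the paper's proof: the upper bound via $2^a\Z^n \subseteq \Lambda_{D'}$, and the lower bound by extracting the largest power of $2$ from a nonzero lattice vector, using the same index bookkeeping to descend the defining congruences to orthogonality modulo $2$ with $\boldsymbol{h}_1,\ldots,\boldsymbol{h}_{r_{a-t}}$, and then invoking Lemma~\ref{lemadistanciaLp}. The only cosmetic difference is that the paper packages the code-membership step as a standalone auxiliary claim (with a $\boldsymbol{\mu}$-computation that collapses to $\boldsymbol{\mu}=\boldsymbol{c}$ in the binary case), whereas you apply it directly to $\boldsymbol{w}=2^{-t}\boldsymbol{x}$.
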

\begin{proof}
Let $\boldsymbol{x} \in \Lambda_{D'} \setminus\{\boldsymbol{0}\}$.
Since $\boldsymbol{x}$ is a nonzero vector, we can choose an integer $k \geq 0$ such that $2^{-k}\boldsymbol{x} \in \Z^{n}$, but $2^{-k - 1}\boldsymbol{x} \notin \Z^{n}$.
If $k < a$, then $2^{-k} \boldsymbol{x} \cdot \sigma(\boldsymbol{h}_j) \equiv 0 \mod 2$ for $1 \leq j \leq r_{a - k}$. In fact, we have
\begin{eqnarray*}
    \boldsymbol{x} \in \Lambda_{D'} &\Leftrightarrow& \boldsymbol{x} \cdot \sigma(\boldsymbol{h}_{j}) \equiv 0 \mod 2^{i + 1}\hspace{0.2cm} \text{ for all } 0 \leq i \leq a - 1 \hspace{0.1cm} \text{  and  } \hspace{0.1cm} r_{a - i - 1} < j \leq r_{a - i}\\
    &\Rightarrow& \boldsymbol{x} \cdot \sigma(\boldsymbol{h}_{j}) \equiv 0 \mod 2^{k + 1} \hspace{0.2cm} \text{ for all } k \leq i \leq a - 1 \hspace{0.1cm} \text{  and  } \hspace{0.1cm} r_{a - i - 1} < j \leq r_{a - i}\\
    &\Leftrightarrow& 2^{-k}\boldsymbol{x} \cdot \sigma(\boldsymbol{h}_j) \equiv 0 \mod 2 \hspace{0.2cm} \text{ for } 1\leq j \leq r_{a - k}.
\end{eqnarray*}
Thus, by Lemma \ref{auxiliarystatement}, $||\boldsymbol{x}||_{\mathrm{P}} \geq 2^{k} d_{\mathrm{P}}(\mathcal{C}_{a - k})$. In the other case (that is, if $k \geq a$), we have $\boldsymbol{x} = 2^{a} \boldsymbol{z}$ for some $\boldsymbol{z} \in \Z^n$ and consequently
\begin{equation*}
    ||\boldsymbol{x}||_{\mathrm{P}} =  ||2^{a} \boldsymbol{z}||_{\mathrm{P}} = 2^a ||\boldsymbol{z}||_{\mathrm{P}} \geq 2^{a}.
\end{equation*}
Therefore $$\min_{1 \leq j \leq a} \big\{2^{a}, 2^{a - j} d_{\mathrm{P}}(\mathcal{C}_j)\big\} \leq d_{\mathrm{P}} \left(\Lambda_{D'}\right).$$ To obtain the upper bound for $d_{\mathrm{P}}(\Lambda_{D'})$, it is sufficient to use that $2^{a}\Z^{n} \subseteq \Lambda_{D'}$.
\end{proof}
}

{\color{black}
The next examples illustrate that both bounds in Theorem \ref{TeodistLpDlinha} can be attained.

\begin{example}
Let us consider the family of linear codes given by  $\mathcal{C}_2 \subseteq \mathcal{C}_1 \subseteq \Z_2^{4}$, where $\mathcal{C}_2^{\perp} = \langle(0,0,0,1),(1,1,1,1)\rangle$ and $\mathcal{C}_1^{\perp} = \left<(1,1,1,1)\right>$. We use Theorem \ref{TeodistLpDlinha} to estimate the Euclidean minimum distance of $\Lambda_{D'}$.
Since the Euclidean minimum distance of the codes $\mathcal{C}_1$ and $\mathcal{C}_2$ are $d_2(\mathcal{C}_1) = \sqrt{2} = d_2(\mathcal{C}_2)$, it follows that $\min \left\{4, 2 d_{2}(\mathcal{C}_1), d_{2}(\mathcal{C}_2)\right\} = \sqrt{2}$, which implies $\sqrt{2} \leq d_2(\Lambda_{D'}) \leq 4$ by Theorem \ref{TeodistLpDlinha}. One can show that, in this case, the lower bound is attained. Indeed, by Theorem \ref{propD'viaconstrucaoA} we know that $\Lambda_{D'} = \Lambda_{A}(\mathcal{C}^{\perp})$, where $\mathcal{C}^{\perp}$ is the $4$-ary linear code whose check matrix is
\begin{equation*}
    \boldsymbol{H} = \left[\begin{array}{cccc}
         1 & 1 & 1 & 1   \\
         0 & 0 & 0 & 2
    \end{array}\right].
\end{equation*}
Thus, from Corollary \ref{cormatrizgeradoraD'}, we get a generator matrix for $\Lambda_{D'}$ given by
    \begin{equation*}
    \boldsymbol{M} = \left[\begin{array}{cccc}
        4 & -1 & -1 & -2  \\
        0 & 1 & 0 & 0\\
        0 & 0 & 1 & 0\\
        0 & 0 & 0 & 2
    \end{array}\right],
\end{equation*}
from where the Euclidean minimum distance of $\Lambda_{D'}$ is $\sqrt{2}$.
\end{example}

\begin{example}
    Consider the family of linear codes  $\mathcal{C}_2 \subseteq \mathcal{C}_1 \subseteq \Z_2^{4}$, where $\mathcal{C}_1^{\perp} = \left<(-1,0,1,0),(0,-1,0,1) \right>$ and $\mathcal{C}_2^{\perp} = \langle(-1,0,1,0),(0,-1,0,1),(1,0,0,0),(0,0,0,1)\rangle$.
    By Theorem \ref{propD'viaconstrucaoA} we have that $\Lambda_{D'} = \Lambda_{A}(\mathcal{C}^{\perp})$, where $\mathcal{C}^{\perp}$ is the $4$-ary linear code with check matrix
\begin{equation*}
    \boldsymbol{H} = \left[\begin{array}{cccc}
         -1 & 0 & 1 & 0   \\
         0 & -1 & 0 & 1   \\
         2 & 0 & 0 & 0   \\
         0 & 0 & 0 & 2
    \end{array}\right].
\end{equation*}
So, from Corollary \ref{cormatrizgeradoraD'}, a generator matrix for $\Lambda_{D'}$ is given by
    \begin{equation*}
    \boldsymbol{M} = \left[\begin{array}{cccc}
        4 & 0 & -2 & 0  \\
        0 & 4 & 0 & -2\\
        0 & 0 & 2 & 0\\
        0 & 0 & 0 & 2
    \end{array}\right],
\end{equation*}
from where the Lee minimum distance of $\Lambda_{D'}$ is $4$
which is the upper bound of Theorem \ref{TeodistLpDlinha}.
\end{example}
}

\begin{remark}
The main difficulty of extending the previous result to $L_{\mathrm{P}}$-distances for a chain of $q$-ary linear codes is the Lemma \ref{auxiliarystatement}, which cannot be true under these conditions, unless $P = 1$ or $q = 2$. To the best of our knowledge, it appears that there is no similar result for the general case.
\end{remark}

\subsection{Coding Gain}

As a consequence of the expressions obtained for the minimum Euclidean distance and the volume of the lattices from Constructions D and D', we derive next bounds for the coding gain under specific conditions. For the binary case of Construction D' and a choice of linearly independent generators, Corollary \ref{PropCodinggain}-$(iii)$ is related to what is presented in {\cite[Cor $3.1$]{sadeghi2006low}} with appropriate notation adjustments. The next result follows immediately from the bounds obtained for volume (Theorem \ref{num_max_pontos} and Theorem \ref{num_max_pontosDlinha}, respectively) and Euclidean minimum distance (Corollary \ref{cordistLpConstD} and Corollary \ref{distLpDlinhaigualdade}, respectively) of $\Lambda_{D}$ and $\Lambda_{D'}$.

\begin{corollary}\label{PropCodinggain}
   Let $\Lambda_{D}$ (respectively, $\Lambda_{D'}$) be the lattice obtained via Construction $D$ (respectively, Construction $D'$) following the usual notation and choice of parameters. For a chain  $\{\boldsymbol{0}\}\neq \mathcal{C}_a \subseteq \cdots \subseteq \mathcal{C}_1 \subseteq \Z_{q}^{n}$ (respectively, a associated dual chain $\{\boldsymbol{0}\} \neq \mathcal{C}_1^{\perp} \subseteq \cdots \subseteq \mathcal{C}_a^{\perp} \subseteq \Z_q^{n}$), we have the following results:
   {\color{black}
   \begin{itemize}
       \item[$(i)$] Under the conditions of Theorem \ref{T2} and if $k_1 = n$, we have
       \begin{equation*}
           \gamma(\Lambda_{D}) \geq \dfrac{ \displaystyle \min_{1 \leq j \leq a} \left\{q^{2a}, q^{2(a - j)} d_{\mathrm{P}}^{2}(\mathcal{C}_{j})\right\}}{\left((q^{2})^{a - \sum\limits_{\ell=1}^{a} \frac{k_{\ell}}{n}} \left(\prod\limits_{i=1}^{k_1}{\dfrac{q}{\mathcal{O}(\boldsymbol{b}_i)}} \right)^{2/n}\right)}.
       \end{equation*}
       %and the equality holds if the chain is closed under the zero-one addition.

       \item[$(ii)$] If the chain is closed under zero-one addition, then
       \begin{equation*}
           \gamma(\Lambda_{D}) \leq \dfrac{\displaystyle \min_{1 \leq j \leq a} \left\{q^{2a}, q^{2(a - j)} d_{\mathrm{P}}(\mathcal{C}_j)\right\}}{\left((q^{2})^{a - \sum\limits_{\ell=1}^{a} \frac{k_{\ell}}{n}} \left(\prod\limits_{i=1}^{k_1}{\dfrac{q}{\mathcal{O}(\boldsymbol{b}_i)}} \right)^{2/n}\right)}.
       \end{equation*}
       In particular, if the conditions of $(i)$ and $(ii)$ are satisfied the equality holds.
       \item[$(iii)$] Under the conditions of Theorem \ref{T2} for the dual chain and if $r_a = n$, we have
\begin{equation*}
    \gamma(\Lambda_{D'}) \leq \dfrac{ \gamma_{n}q^{2a} \cdot \left(\prod\limits_{i=1}^{r_a}{\dfrac{q}{\mathcal{O}(\boldsymbol{h}_i)}}\right)^{2/n}}{(q^{2})^{\sum\limits_{\ell=1}^{a} \frac{r_{\ell}}{n}} \displaystyle \min_{1 \leq j \leq a} \left\{q^{2a}, q^{2(a - j)} d_{\mathrm{P}}(\mathcal{C}_j^{\perp})\right\}} \leq \dfrac{ \left(\frac{n}{4} + 1\right)q^{2a} \cdot \left(\prod\limits_{i=1}^{r_a}{\dfrac{q}{\mathcal{O}(\boldsymbol{h}_i)}}\right)^{2/n}}{(q^{2})^{\sum\limits_{\ell=1}^{a} \frac{r_{\ell}}{n}} \displaystyle \min_{1 \leq j \leq a} \left\{q^{2a}, q^{2(a - j)} d_{\mathrm{P}}(\mathcal{C}_j^{\perp})\right\}}.
\end{equation*}
\end{itemize}}

\end{corollary}

\begin{remark}
    The coding gain and the center density of a lattice $\Lambda$ are related by $\delta(\Lambda) = 2^{-n} \gamma(\Lambda)^{n/2}$, from what similar bounds for the center density with respect to the Euclidean distance are given.
\end{remark}

{\color{black} We emphasize that, under the conditions of Corollary \ref{PropCodinggain}-$(i)$ for Construction $D$, it is possible to obtain good lattices in low dimensions with respect to packing density. This is the case, for instance, of the constructions of lattices via Construction $D$ from a family of linear codes over $\Z_{4}$ equivalent to $E_{8}$, $BW_{16}$ and $\Lambda_{24}$, as presented in \cite{strey2017construccoes, strey2017lattices}.

Regarding the upper bound given in Corollary \ref{PropCodinggain}-$(ii)$, it is interesting to note that some chains of generalized linear Reed-Muller over $\Z_{q}$, where $q$ is a prime power \cite{bhaintwal2010generalized}, are closed under the zero-one addition. In order to verify this, let us denote the $r$-th generalized Reed-Muller code of length $2^{m}$, $RM_{\Z_{q}}(r,m)$, where $0 \leq r \leq m(p - 1)$ and $q = p^{s}$, with $p$ prime. Using the concept of generalized Boolean functions, $RM_{q}(m,r)$ is defined as the linear code over $\Z_{q}$ generated by the set of all monomials of order at most $r$ in $m$ variables. Equivalently, $RM_{q}(m,r)$ is obtained from all the $\Z_q$-linear combinations of the rows of the generator matrix for the classical binary Reed-Muller
codes \cite{paterson2000efficient}. The next result presents a chain of  generalized Reed-Muller codes that is closed under the zero-one addition, as well as in the binary case \cite{conway2013sphere, kositwattanarerk2014connections}.

\begin{theorem}
Under the above notation, the following chain is closed under the zero-one addition
\begin{equation*}
    RM_{\Z_{q}}(m, 2^{0}) \subseteq RM_{\Z_{q}}(m,2) \subseteq RM_{\Z_{q}}(m,2^{2}) \subseteq \cdots \subseteq RM_{\Z_{q}}(m, 2^{\log_{2} 2^{m}}) = \Z_{q}^{2^{m}}.
\end{equation*}
\end{theorem}
\begin{proof}
In fact, note that the sum of two monomials with a degree at most than $r$ results in a monomial of degree at most $2r$, and the zero-one addition can not increase the order of a monomial. Since in this case $r$ is a power of $2$, follows that the previous chain is closed under the zero-one addition.
\end{proof}

We point out that the class of generalized Reed-Muller codes have good properties for decoding purposes, as shown in \cite{paterson2000efficient, schmidt2007complementary}. Certain special families of quaternary linear Reed-Muller codes have been attracted attention due to their relation with the associated binary linear Reed-Muller codes obtained from Gray map \cite{borges2005quaternary, pujol2008construction, pernas2011classification}. Also, it is known that a family of binary Reed-Muller codes allows constructing Barnes-Wall lattices from Construction $D$ \cite{forney1988coset}.}

\section{Coding and Decoding of Construction \texorpdfstring{$D'$}{D'} for certain \texorpdfstring{$q$}{q}-ary codes}

Several methods for encoding and multistage decoding for binary Constructions D and D' have been proposed recently, with approaches using re-encoding \cite{matsumine2018construction, zhou2022construction, vem2014multilevel, zhou2021encoding}, by computing cosets \cite{silva2020multilevel} and by applying a min-sum algorithm at each level of decoding, as proposed in \cite{sadeghi2006low, sadeghi2010, sadeghi2013performance}. In this paper, we focus on multistage decoding with re-encoding following the approach proposed by \cite{zhou2022construction}. We extend some results to a class of lattices obtained by Construction D' from nested $q$-ary linear codes. The original method performs re-encoding via the check matrix in the sense of \cite{zhou2022construction, sommer2009shaping}, that is, as an inverse of a generator matrix for the lattice. In what follows, the notation of \cite{zhou2021encoding} is applied to our approach.

\subsection{Encoding Method B}

In \cite{zhou2022construction}, two equivalent encoding methods are given, called Encoding Method A and Encoding Method B. The first requires that the check matrix is in the ALT form and can be efficient when the matrix is sparse \cite{zhou2021encoding}. The second one requires that the generators are linearly independent over $\Z_2$ and the check matrix is square. We focus here on Encoding Method B.

Following the established notation and adopting an approach completely analogous to \cite{zhou2022construction}, let $\Lambda_{D'}$ be the lattice obtained via Construction $D'$ from a chain of  $q$-ary linear codes $\mathcal{C}_a \subseteq \cdots \subseteq \mathcal{C}_1 \subseteq \Z_{q}^{n} =: \mathcal{C}_0$ similarly to Definition \ref{defiD'eleonesio}, that is,
\begin{equation*}
    \Lambda_{D'} = \big\{\boldsymbol{x} \in \Z^{n}: \boldsymbol{H} \boldsymbol{x} \equiv \boldsymbol{0} \mod q^{a}\big\},
\end{equation*}
and assume that the linearly independent generators $\boldsymbol{h}_1, \ldots, \boldsymbol{h}_{r_a}$ over $\Z_q$ are completed with $\boldsymbol{h}_{r_a}, \ldots, \boldsymbol{h}_{n}$ in such a way that $\boldsymbol{H}_{a}$ is invertible over $\Z_{q}$. Only in this section, for simplicity, we consider that $r_0$ denotes the number of generators for $\mathcal{C}_0$ obtained from the code generators of the underlying codes. Let $\boldsymbol{x} \in \Lambda_{D'}$ be a lattice vector, denote $\boldsymbol{H} \boldsymbol{x} = q^{a}\boldsymbol{b}$, where $\boldsymbol{b} \in \Z^{n}$, and write
\begin{eqnarray}
   \begin{array}{crl}\label{escritacodifdebgeral}
         b_{j} = & z_j &\text{ for } 1 \leq j \leq r_1 ;\\
         b_j =& u_{1j} + qz_j &\text{ for } r_1 < j \leq r_2 ;\\
        b_j =& u_{2j} + qu_{1j} + q^{2}z_j& \text{ for } r_2 < j \leq r_3 ;\\
        \vdots &  \vdots & \vdots \\
         b_{j} =& u_{(a - 2)j} + qu_{(a - 3)j} + \cdots + q^{a - 3}u_{1j} + q^{a - 2}z_j &\text{ for } r_{a - 2} < j \leq r_{a - 1};\\
         b_{j} =& u_{(a - 1)j} + qu_{(a - 2)j} + q^{2}u_{(a - 3)j} + \cdots + q^{a - 2}u_{1j} + q^{a - 1}z_j &\text{ for } r_{a - 1} < j \leq r_{a};\\
         b_{j} =& u_{aj} + qu_{(a - 1)j} + q^{2}u_{(a - 2)j} + q^{3}u_{(a - 3)j} + \cdots + q^{a - 1}u_{1j} + q^{a}z_j &\text{ for } r_{a} < j \leq n,
    \end{array}
\end{eqnarray}
where $\boldsymbol{u}_i = (u_{i1}, \ldots, u_{i(n - r_i)})^{T} \in \Z^{(n - r_i)}$, $u_{ij} \in \sigma^{\ast}(\Z_{q})$ for each $i = 0, \ldots, a - 1$ and $j = 1, \ldots, n$, and $\boldsymbol{z} \in \Z^{n}$. Here, $\sigma^{\ast}(\Z_{q})$ denote a choice of centralized representatives class, i.e.,
\begin{eqnarray*}
    \sigma^{\ast}(\Z_{q}) &:=& \left\{-\dfrac{q - 1}{2}, - \dfrac{q - 3}{2}, \ldots, 0, \ldots, \dfrac{q - 3}{2}, \dfrac{q - 1}{2}\right\} \hspace{0.2cm} \text{ if } q \text{ is odd; }\\
    \sigma^{\ast}(\Z_{q}) &:=& \left\{-\dfrac{q}{2}, - \dfrac{q - 2}{2}, \ldots, 0, \ldots, \dfrac{q - 4}{2}, \dfrac{q - 2}{2}\right\} \hspace{0.2cm} \text{ if } q \text{ is even }.
\end{eqnarray*}

Similarly to \cite{zhou2022construction}, we consider $\Tilde{\boldsymbol{u}}'_{i}$ as obtained from $\boldsymbol{u}_{i} = (u_{i(r_i + 1)}, \ldots, u_{in})^{T}$ with adjunction of zero coordinates. To preserve the adopted notation, the null coordinates will be added at the beginning, as follows
\begin{equation*}
    \Tilde{\boldsymbol{u}}'_{i} := (\underbrace{0, \ldots, 0}_{r_i},u_{i(r_i + 1)}, \ldots, u_{in})^{T} \in \Z_{q}^{n}.
\end{equation*}

\begin{lemma}\label{lemaEscritadebGeral}
Let $\boldsymbol{b} = (b_1, \ldots, b_n)^{T} \in \Z^{n}$ as in (\ref{escritacodifdebgeral}). Considering the vectors $\Tilde{\boldsymbol{u}}'_{i}$ as before, we have
\begin{equation*}
    \boldsymbol{b} = \boldsymbol{D} (q^{-a} \Tilde{\boldsymbol{u}}'_{0} + q^{-(a - 1)}\Tilde{\boldsymbol{u}}'_{1} + q^{-(a - 2)} \Tilde{\boldsymbol{u}}'_{2} + \cdots + q^{-1}\Tilde{\boldsymbol{u}}'_{a - 1} + \boldsymbol{z}),
\end{equation*}
where $\boldsymbol{D}$ is the diagonal matrix presented in Remark \ref{lemamatrizanivelq}.
\end{lemma}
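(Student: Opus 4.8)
The plan is to prove the identity coordinatewise, using that $\boldsymbol{D}$ is diagonal. If $d_{jj}$ denotes the $j$-th diagonal entry of $\boldsymbol{D}$ (so that, with $r_0:=0$, one has $d_{jj}=q^{s-1}$ whenever $r_{s-1}<j\le r_s$, the pattern continuing into the top block $r_a<j\le n$ provided by the generators added to make $\boldsymbol{H}_a$ invertible), then the $j$-th entry of the right-hand side is simply $d_{jj}$ times the $j$-th entry of the inner vector $q^{-a}\Tilde{\boldsymbol{u}}'_0+\cdots+q^{-1}\Tilde{\boldsymbol{u}}'_{a-1}+\boldsymbol{z}$. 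Thus the whole statement reduces to one scalar identity per coordinate $j$, and it suffices to match each of these against the corresponding line of (\ref{escritacodifdebgeral}).

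First I would fix $j$ and locate its block $r_{s-1}<j\le r_s$. The decisive structural fact is the support of the padded vectors: by definition $\Tilde{\boldsymbol{u}}'_i$ vanishes in its first $r_i$ coordinates and equals $u_{ij}$ in coordinate $j$ as soon as $j>r_i$. Hence, in coordinate $j$, the contribution of $\Tilde{\boldsymbol{u}}'_i$ survives exactly for those indices $i$ with $r_i<j$ and drops out otherwise; these surviving indices are precisely the ones whose digit $u_{ij}$ appears on the $j$-th line of (\ref{escritacodifdebgeral}). This alignment of supports is the reason the two descriptions can coincide.

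Next I would carry out the exponent bookkeeping. Multiplying the surviving terms of the inner vector by the factor $d_{jj}$ converts each negative power of $q$ into a nonnegative one; I would then check that the powers of $q$ so produced are exactly those attached to the $u_{ij}$ in the matching line of (\ref{escritacodifdebgeral}), while the summand $d_{jj}\,z_j$ reproduces the leading contribution $q^{s-1}z_j$. Comparing the two sides coordinate by coordinate then yields the claimed equality $\boldsymbol{b}=\boldsymbol{D}(q^{-a}\Tilde{\boldsymbol{u}}'_0+\cdots+q^{-1}\Tilde{\boldsymbol{u}}'_{a-1}+\boldsymbol{z})$, since $\boldsymbol{D}$ acts independently on each coordinate.

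The hard part is essentially only this bookkeeping: one must keep the ranges $r_{s-1}<j\le r_s$ consistent across all blocks and align the index conventions of the mixed-radix expansion (\ref{escritacodifdebgeral}) with those of the padded vectors $\Tilde{\boldsymbol{u}}'_i$. I expect the two extremal blocks to require separate attention, namely the first block $1\le j\le r_1$, where no $u$-term is present and the identity collapses to $b_j=z_j$, and the top block $r_a<j\le n$ arising from the completion of $\boldsymbol{H}_a$; once these are settled, the generic interior block follows from the identical computation, completing the verification.
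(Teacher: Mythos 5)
Your strategy --- reduce the identity to one scalar equation per coordinate and match it against the corresponding line of (\ref{escritacodifdebgeral}) --- is the same one the paper uses, but the decisive step you defer (``I would then check that the powers of $q$ so produced are exactly those attached to the $u_{ij}$'') does not close under the conventions you fix, and this is a genuine gap, not a formality. Take $r_0:=0$ and the literal definition of $\Tilde{\boldsymbol{u}}'_i$ (digits $u_{ij}$, first $r_i$ coordinates zero), as you do. Already in the first block $1\le j\le r_1$ the right-hand side reads $d_{jj}\bigl(q^{-a}u_{0j}+z_j\bigr)=q^{-a}u_{0j}+z_j$, because $\Tilde{\boldsymbol{u}}'_0$ has no zero padding and survives in every coordinate, whereas (\ref{escritacodifdebgeral}) says $b_j=z_j$ there; nothing forces $u_{0j}=0$. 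More generally, in the block $r_{s-1}<j\le r_s$ your surviving set $\{i : r_i<j\}=\{0,1,\ldots,s-1\}$ has $s$ elements, while the $j$-th line of (\ref{escritacodifdebgeral}) carries only the $s-1$ digits $u_{1j},\ldots,u_{(s-1)j}$; and even on the common indices the exponents disagree: you produce $d_{jj}\,q^{-(a-i)}=q^{\,s-1-a+i}$ for the digit $u_{ij}$, whereas the line attaches $q^{\,s-1-i}$. So both of your alignment claims (supports and exponents) are false as stated, negative powers of $q$ remain uncancelled, and the coordinatewise verification stalls.

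What actually makes the identity true is a \emph{reversed} pairing, which a correct proof must make explicit: the vector weighted by $q^{-(a-i)}$ has to be the one supported on coordinates $j>r_{a-i}$, i.e.\ $\Tilde{\boldsymbol{u}}'_i$ must be read as the padding, by $r_{a-i}$ zeros, of the digit vector $\boldsymbol{u}_{a-i}$ of (\ref{escritacodifdebgeral}), not of $\boldsymbol{u}_i$. With that correspondence everything closes: in block $s$ the surviving indices are exactly $i\ge a-s+1$ (there are $s-1$ of them), and $d_{jj}\,q^{-(a-i)}=q^{\,s-1-(a-i)}$ is precisely the power attached to $u_{(a-i)j}$ on the $j$-th line, the top block $r_a<j\le n$ with $d_{jj}=q^a$ included; one can also see this is forced, since the lattice congruences $\sigma(\boldsymbol{h}_j)\cdot\boldsymbol{x}\equiv 0 \bmod q^{\,a-s+1}$ require the low-order $q$-adic digits of $(\boldsymbol{H}_a\boldsymbol{x})_j$ to vanish down to level $a-s+1$, i.e.\ padding by $r_{a-i}$ zeros. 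In fairness, the paper's own notation (and its terse proof) contains the same index reversal, so the statement as literally printed is inconsistent with its surrounding definitions; but your proposal does not detect this --- it asserts precisely the alignment that fails --- so carrying out your plan as written would not prove the lemma.
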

\begin{proof}
Denote
\begin{equation*}
    \boldsymbol{c} := \boldsymbol{D} (q^{-a}\Tilde{\boldsymbol{u}}'_{0} + q^{-(a - 1)}\Tilde{\boldsymbol{u}}'_{1} + q^{-(a - 2)} \Tilde{\boldsymbol{u}}'_{2} + \cdots + q^{-1}\Tilde{\boldsymbol{u}}'_{a - 1} + \boldsymbol{z}) =: \boldsymbol{D} \Tilde{\boldsymbol{c}}.
\end{equation*}
We can write
%\begin{equation*}
%    \begin{aligned}
    $\Tilde{\boldsymbol{c}} =
    q^{-a}(0, \ldots, 0, u_{0(r_0 + 1)}, \ldots, u_{1(n -r_0)})^{T} + \cdots + q^{- 1} (0, \ldots, 0, u_{(a - 1)(r_{a - 1})}, \ldots, u_{(a - 1)n})^{T}  + (z_1, \ldots, z_n)^{T}.$
%\end{aligned}
%\end{equation*}
Now, multiplying each term by the matrix $\boldsymbol{D}$, we get $\boldsymbol{D}\Tilde{\boldsymbol{c}} = \boldsymbol{b}$, as described in (\ref{escritacodifdebgeral}).
\end{proof}

In a natural extension of the binary case, observe that a vector $\boldsymbol{x} \in \Lambda_{D'}$ can be written as
\begin{equation}\label{eqdecompositionfordecoding}
    \boldsymbol{x} = \boldsymbol{x}_0 + q\boldsymbol{x}_1 + \cdots + q^{a}\boldsymbol{x}_{a} = \displaystyle \sum_{i = 0}^{a} q^{i} \boldsymbol{x}_i,
\end{equation}
where the components $\boldsymbol{x}_{i} \in \Z^{n}$ depend on $\Tilde{\boldsymbol{u}}'_{i}$ for $i = 0, \ldots, a - 1$.
Thus, if we denote $\boldsymbol{H} \boldsymbol{x} = q^{a}\boldsymbol{b} = : \Tilde{\boldsymbol{b}}$, it follows
\begin{equation*}
    \Tilde{\boldsymbol{b}} = \boldsymbol{D}(\Tilde{\boldsymbol{u}}'_{0} + q \Tilde{\boldsymbol{u}}'_{1} + \cdots + q^{a - 1} \Tilde{\boldsymbol{u}}'_{a - 1} + q^{a}\boldsymbol{z}).
\end{equation*}
Under these conditions, since  $\Tilde{\boldsymbol{b}} \in q^{a}\Z^{n}$ is known, we can calculate the components of $\boldsymbol{x}$ by using the relations below
\begin{eqnarray*} %\label{CodificaD'AchaXGeral}
    \boldsymbol{H}_{a} \boldsymbol{x}_{i} &=& \Tilde{\boldsymbol{u}}'_{i}\hspace{0.2cm} \text{ for each } i = 0, \ldots, a - 1,\\
    \boldsymbol{H}_{a} \boldsymbol{x}_{a} &=& \boldsymbol{z}.
\end{eqnarray*}

\subsection{Decoding of Construction D'}

A natural extension of the decoding approach developed in \cite{zhou2022construction} for Construction D' of a family of $q$-ary linear codes, under the previous conditions, is described next. We state a generalization to Proposition $2$ of \cite{zhou2022construction} for $q$-ary linear codes, when $\boldsymbol{H}_{a}$ is invertible over $\Z_q$ and the proof is analogous to the binary case, with the appropriate notation adjustments.

\begin{proposition}
For Construction D', the lattice component $\boldsymbol{x}_i$ is congruent modulo $q$ to a codeword $\Tilde{\boldsymbol{x}}_i \in \mathcal{C}_i$, for each $i = 0, \ldots, a - 1$.
\end{proposition}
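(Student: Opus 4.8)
The plan is to read the claim directly off the system $\boldsymbol{H}_a \boldsymbol{x}_i = \Tilde{\boldsymbol{u}}'_i$ derived in the paragraph preceding the proposition, exploiting the single structural fact that $\Tilde{\boldsymbol{u}}'_i$ has its first $r_i$ coordinates equal to zero. Recall that the rows of $\boldsymbol{H}_a$ are $\sigma(\boldsymbol{h}_1), \ldots, \sigma(\boldsymbol{h}_{r_a})$ (completed so that $\boldsymbol{H}_a$ is invertible over $\Z_q$), and that by Definition \ref{defiD'eleonesio} the first $r_i$ of these generate $\mathcal{C}_i^{\perp}$, i.e. $\mathcal{C}_i^{\perp} = \langle \boldsymbol{h}_1, \ldots, \boldsymbol{h}_{r_i}\rangle$. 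Since $\boldsymbol{H}_a \boldsymbol{x}_i = \Tilde{\boldsymbol{u}}'_i$ is an identity in $\Z^n$, its $j$-th scalar component for $1 \leq j \leq r_i$ reads $\sigma(\boldsymbol{h}_j) \cdot \boldsymbol{x}_i = (\Tilde{\boldsymbol{u}}'_i)_j = 0$.

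Reducing these $r_i$ integer equations modulo $q$ and writing $\Tilde{\boldsymbol{x}}_i := \rho(\boldsymbol{x}_i) \in \Z_q^n$ for the component-wise reduction of $\boldsymbol{x}_i$, I obtain $\boldsymbol{h}_j \cdot \Tilde{\boldsymbol{x}}_i = 0$ in $\Z_q$ for every $1 \leq j \leq r_i$. As $\boldsymbol{h}_1, \ldots, \boldsymbol{h}_{r_i}$ generate $\mathcal{C}_i^{\perp}$ and the semi-inner product is $\Z_q$-bilinear, this says exactly that $\Tilde{\boldsymbol{x}}_i$ is orthogonal to all of $\mathcal{C}_i^{\perp}$, that is, $\Tilde{\boldsymbol{x}}_i \in (\mathcal{C}_i^{\perp})^{\perp}$.

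It then remains to identify $(\mathcal{C}_i^{\perp})^{\perp}$ with $\mathcal{C}_i$, which is the one step that goes beyond bookkeeping. Over $\Z_q$ the inclusion $\mathcal{C}_i \subseteq (\mathcal{C}_i^{\perp})^{\perp}$ is immediate, but equality requires the cardinality identity $|\mathcal{C}|\,|\mathcal{C}^{\perp}| = q^n$, valid for every linear code over $\Z_q$ because $\Z_q$ is a finite commutative Frobenius ring; combining $|\mathcal{C}_i| = q^n / |\mathcal{C}_i^{\perp}| = |(\mathcal{C}_i^{\perp})^{\perp}|$ with the inclusion forces $(\mathcal{C}_i^{\perp})^{\perp} = \mathcal{C}_i$. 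Hence $\Tilde{\boldsymbol{x}}_i \in \mathcal{C}_i$, and since $\boldsymbol{x}_i \equiv \Tilde{\boldsymbol{x}}_i \pmod q$ by construction, the proposition follows for each $i = 0, \ldots, a-1$.

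I expect the main obstacle to be twofold. First, one must be sure the digit relations $\boldsymbol{H}_a \boldsymbol{x}_i = \Tilde{\boldsymbol{u}}'_i$ are genuine identities in $\Z^n$ (so that reduction modulo $q$ is legitimate); this is exactly what the preceding discussion establishes by matching powers of $q$ in $\boldsymbol{H}\boldsymbol{x} = \Tilde{\boldsymbol{b}}$ after cancelling the invertible diagonal factor $\boldsymbol{D}$. Second, and more essentially, the double-dual equality $(\mathcal{C}_i^{\perp})^{\perp} = \mathcal{C}_i$ over $\Z_q$ is standard yet genuinely relies on the Frobenius property of $\Z_q$ rather than on any field structure, so it is the only point where a nontrivial ring-theoretic fact enters; everything else is direct component-wise reduction.
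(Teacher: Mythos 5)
Your proof is correct and takes essentially the same route as the paper's: both read the claim off the identity $\boldsymbol{H}_a \boldsymbol{x}_i = \Tilde{\boldsymbol{u}}'_i$, use that the first $r_i$ coordinates of $\Tilde{\boldsymbol{u}}'_i$ vanish, reduce those scalar equations modulo $q$, and conclude $\Tilde{\boldsymbol{x}}_i \in \mathcal{C}_i$. The only difference is that where the paper simply invokes ``the definition of $\mathcal{C}_i$ by its check matrix,'' you explicitly justify the double-dual identity $(\mathcal{C}_i^{\perp})^{\perp} = \mathcal{C}_i$ via the cardinality identity $|\mathcal{C}|\,|\mathcal{C}^{\perp}| = q^{n}$ over $\Z_q$ --- a step the paper leaves implicit and which, over a ring rather than a field, genuinely requires such a justification.
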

\begin{proof}
Denote $\Tilde{\boldsymbol{x}}_i : = \boldsymbol{x}_i \mod q$ for each $i = 0, \ldots, a - 1$. By the definition of the lattice components, we know that $\boldsymbol{x}_i$ satisfies $\boldsymbol{H}_a \boldsymbol{x}_i = \Tilde{\boldsymbol{u}}'_i$, where the first $r_i$ components of $\Tilde{\boldsymbol{u}}'_{i}$ are zero. Thus, it results that $\boldsymbol{H}_{a,i} \Tilde{\boldsymbol{x}}_i \equiv 0 \mod q$, where $\rho(\boldsymbol{H}_{a,i})$ is the check matrix of $\mathcal{C}_i$ (i.e., corresponds to the first $r_i$ rows of the matrix $\boldsymbol{H}_a$). Equivalently, we can write $\sigma(\boldsymbol{h}_j) \cdot \boldsymbol{x}_i \equiv 0 \mod q$, i.e., $\boldsymbol{h}_j \cdot \Tilde{\boldsymbol{x}}_i = 0$ in $\Z_{q}^{n}$, for $1 \leq j \leq r_i$. Therefore, by using the definition of $\mathcal{C}_i$ by its check matrix, we conclude $\Tilde{\boldsymbol{x}}_i \in \mathcal{C}_i$.
\end{proof}

Under these conditions, the decoding algorithm of Construction D' for a chain of $q$-ary linear codes is essentially the algorithm proposed by \cite{zhou2022construction}. Since the Construction D' was defined for $q$-ary codes from an arbitrary set of tuples in $\Z_q^{n}$, one point that we should be careful about is requiring that $\boldsymbol{h}_1, \ldots, \boldsymbol{h}_a$ are linearly independent over $\Z_q$. This hypothesis is crucial for certain stages of the algorithm (specifically, lines $4$ and $9$) and guarantees that the entries of $\boldsymbol{H}_{a}$ are not zero divisors of $\Z_{q}$, which in practice would weak the distance spectrum of the codes and inhibit the completion convergence of the decoders \cite{sridhara2005ldpc}.

In what follows, {\color{black} a message is a lattice point $\boldsymbol{x} \in \Lambda_{D'}$ and the channel output is $\boldsymbol{y} = \boldsymbol{x} + \boldsymbol{w}$, where $\boldsymbol{w}$ is the noise.} Also, $\overline{\boldsymbol{u}}$ denotes the vector with all coordinates equal to $\lfloor q/2\rfloor$ (integer part) and $\mod_q(\boldsymbol{y}_i + \overline{\boldsymbol{u}})$ denotes the vector obtained by reducing modulo $q$. The decoder $Dec_i$ calculates a codeword $\Hat{\Tilde{\boldsymbol{x}}}_i$ closest to $\boldsymbol{y}'_{i}$ in the $q$-ary linear code $\mathcal{C}_i$, which is an estimate of $\Tilde{\boldsymbol{x}}_i$.

{\color{black} In a theoretical view, the next theorem provides a necessary condition for the decoders $Dec_{i}$ to find the closest $n$-tuple over $\Z_q$ to a received vector $\boldsymbol{y}'_{i}$ over an additive white Gaussian noise (AWGN). Similar results are proposed for decoding binary turbo Construction $D'$ lattices \cite{sakzad2010construction} and decoding the Leech lattice \cite{forney1989bounded}.

\begin{theorem}
Given an $n$-uple $\boldsymbol{y}'_{i}$, if there exists a point $\Tilde{\boldsymbol{x}}_i \in \mathcal{C}_i$ such that $||\boldsymbol{y}'_{i} - \Tilde{\boldsymbol{x}}_i||_{2} \leq d_{2}(\Lambda_{A}(\mathcal{C}_i))/2$, then at each step in the line $8$ the algorithm decoders $\boldsymbol{y}'_{i}$ to $\Tilde{\boldsymbol{x}}_i$, i.e., $\Hat{\Tilde{\boldsymbol{x}}}_i = \Tilde{\boldsymbol{x}}_{i}$. In particular, if the noise $\boldsymbol{w}$ satisfies
\begin{equation*}
 \Big{|}\Big{|}\mod_{q}^{\ast}\left(\dfrac{\boldsymbol{w}}{q^{i}}\right)\Big{|}\Big{|}_{2} \leq \dfrac{1}{2},
\end{equation*}
where $\mod^{\ast}$ denotes the ``triangular function'', that is, $mod^{\ast}_{q}(\boldsymbol{w}) : = |\mod_q(\boldsymbol{w} + \overline{\boldsymbol{u}}) - \overline{\boldsymbol{u}}|$, then the algorithm decoders $\boldsymbol{y}'_{i}$ to $\Tilde{\boldsymbol{x}_{i}}$.
\end{theorem}
\begin{proof}
Based on the geometric uniformity of lattices, it suffices to consider $\boldsymbol{x} = \boldsymbol{0}$ and, hence, under the notation of decomposition (\ref{eqdecompositionfordecoding}), $\boldsymbol{x}_i = \boldsymbol{0}$ for each $i = 0, 1, \ldots, a$. Let us say that there is an error at step $k$ if $\Hat{\Tilde{\boldsymbol{x}}}_{k} \neq \boldsymbol{0}$.

Assume that there have been no errors at former steps $0 \leq k < i$. Since $\Hat{\boldsymbol{x}}_{k} = \boldsymbol{0}$ for $k = 0, \ldots, i - 1$, in the $i$-th step we have $\boldsymbol{y}_{i} = \boldsymbol{w}/q^{i}$ and, then,
\begin{equation*}
    \boldsymbol{y}'_{i} = \Big{|}\mod^{\ast}_{q}\left(\dfrac{\boldsymbol{y}_{i - 1}}{q} + \overline{\boldsymbol{u}}\right) - \overline{\boldsymbol{u}}\Big{|} = \Big{|} \mod_{q}^{\ast}\left(\dfrac{\boldsymbol{w}}{q^{i}} + \overline{\boldsymbol{u}}\right) - \overline{\boldsymbol{u}}\Big{|} = \mod_{q}^{\ast} \left(\dfrac{\boldsymbol{w}}{q^{i}}\right).
\end{equation*}

Under the hypothesis $\Big{|}\Big{|}\mod^{\ast}_{q}\left(\dfrac{\boldsymbol{w}}{q^{i}}\right)\Big{|}\Big{|}_{2} \leq \dfrac{d_{2}(\mathcal{C}_i)}{2}$, the vector $\boldsymbol{y}'_{i}$ is in the sphere packing of $\Lambda_{A}(\mathcal{C}_i)$ and hence no errors occur. It is sufficient to note that $d_{2}(\mathcal{C}_0) \leq \cdots \leq d_{2}(\mathcal{C}_a)$ to complete the proof.

\end{proof}}

\vspace{0.2cm}

\begin{algorithm}[H]
	\LinesNumbered
	\SetAlgoLined
	\KwIn{finite ring $\Z_q$, received message with noisy $\boldsymbol{y}$, full-rank matrix $\boldsymbol{H}_a$.}
	\KwOut{estimated lattice point $\Hat{\boldsymbol{x}}\in\Lambda_{D'}$.}
	$\boldsymbol{y}_0 \leftarrow \boldsymbol{y}$;\\
        $\boldsymbol{y}'_{0} \leftarrow |\mod_q(\boldsymbol{y}_0 + \overline{\boldsymbol{u}}) - \overline{\boldsymbol{u}}|$;\\
        $\Hat{\Tilde{\boldsymbol{x}}}_{0} \leftarrow Dec_{0}(\boldsymbol{y}'_{0})$;\\
        $\hat{\Tilde{\boldsymbol{u}}}'_{1} \leftarrow \boldsymbol{H}_a \Hat{\Tilde{\boldsymbol{x}}}_0 \mod q$, then solve $\boldsymbol{H}_a \Hat{\boldsymbol{x}}_0 = \sigma(\hat{\Tilde{\boldsymbol{u}}}'_{1})$;\\
	\For{$1, 2, \ldots, a - 1$}{
		$\boldsymbol{y}_ i \leftarrow (\boldsymbol{y}_{i - 1} - \Hat{\boldsymbol{x}}_{i - 1})/q$;\\
         $\boldsymbol{y}'_{i} \leftarrow |\mod_q(\boldsymbol{y}_i + \overline{\boldsymbol{u}}) - \overline{\boldsymbol{u}}|$;\\
         $\Hat{\Tilde{\boldsymbol{x}}}_i \leftarrow Dec_{i}(\boldsymbol{y}'_{i})$;\\
         $\Hat{\Tilde{\boldsymbol{u}}}'_{i + 1} \leftarrow \boldsymbol{H}_a \Hat{\Tilde{\boldsymbol{x}}}_i \mod q$, then solve $\boldsymbol{H}_a \Hat{\boldsymbol{x}}_i = \sigma(\hat{\Tilde{\boldsymbol{u}}}'_{i})$
	}
	$\boldsymbol{y}_{a} \leftarrow (\boldsymbol{y}_{a - 1} - \Hat{\boldsymbol{x}}_{a - 1})/q$;\\
	$\Hat{\boldsymbol{x}}_a \leftarrow \lfloor{\boldsymbol{y}_a}\rceil$;\\
	$\Hat{\boldsymbol{x}} \leftarrow \Hat{\boldsymbol{x}}_{0} + q \Hat{\boldsymbol{x}}_{1} + \cdots + q^{a - 1}\Hat{\boldsymbol{x}}_{a - 1} + q^{a}\Hat{\boldsymbol{x}}_a$.
	\label{algo1}
	\caption{Decoding Construction D' Lattices}
\end{algorithm}

\vspace{0.2cm}

Multilevel lattice constructions based on codes have the promise of attain a manageable decoding complexity. On the other hand, it is worth emphasizing that the decoding algorithms for a code $\mathcal{C}_i$ in each interaction must be an efficient one. In a practical view, some nearest-neighbor lattice decoding schemes may not be feasible to implement even for $p$-ary linear codes, where $p$ is prime \cite{silva2020multilevel}. Motivated by the construction of lattices with good performance over AWGN channels and a manageable decoding complexity, several works focus on certain families of nested codes for Construction $D$ and $D'$ over a field. Among these, there are designs and decoding processes for lattices based on $p$-ary linear low-density parity-check (LDPC) codes, which can be decoded by belief propagation (BP) or min-sum algorithms \cite{sadeghi2006low}, generalized low density (GLD) codes, by BP decoding \cite{di2015non} and turbo codes, by using soft-input soft-output (SISO) and soft-input hard-output (SIHO) decoding algorithms \cite{sakzad2010construction}.

Although those classes of codes allow generalizations to codes over $\Z_{q}$, the ring size, as in $\Z_{p}$, with $p$ prime, can affect the decoding complexity. Especially for algorithms based on belief decoding, this leads most works to consider codes over rings that admit a fast Fourier transform, which can provide a reasonable decoding complexity \cite{goupil2006belief}. These classes include nested codes over $\Z_{2k}$ and $\Z_{p^{r}}$, with $p$ prime, with good decoding properties, such as LDPC codes \cite{armand2006decoding, ferrari2012low}, turbo codes \cite{reid2005rate}, low-rank parity-check codes (LRPC) \cite{renner2020low}, BCH, Reed-Solomon \cite{interlando1997decoding}, generalized Reed-Muller codes \cite{paterson2000efficient} over $\Z_{2k}$, and Reed-Solomon codes over $\Z_{p^{r}}$ \cite{mook2021lattice}. It is expected that for families of codes belonging to these classes, $Dec_{i}$ chosen as the proper mentioned decoder
to be applied at each level $i$ in the Decoding Construction $D'$ lattice algorithm above could provide efficient decoding.

\section{Conclusion} \label{SecConclusao}

The volume and $L_{\mathrm{P}}$-distances of Construction $D$ and $D'$ are investigated here considering generator matrices for these constructions. An upper bound for the volume by using a generator and a check matrix, respectively, is presented. We also provide an expression for $L_{\mathrm{P}}$-distances of Construction $\overline{D}$ in terms of the minimum distance of underlying codes and derive some bounds for $L_{\mathrm{P}}$-distances of Construction D and D', under certain conditions. In addition, it is established bounds for the coding gain and a sufficient condition for achieving it. A multistage decoding method with re-encoding applied to Construction D' from $q$-ary linear codes under specific conditions is adapted from \cite{zhou2022construction}. Further work in the directions presented here includes the discussion of efficient decoding for Construction D' for $q$-ary lattices considered in a more general context and possible dependency of the decoding complexity and coding gain on certain lattice parameters, such as the choice of generators.

\section*{Acknowledgements}

The authors wish to thank the Editors of this Special Issue and the referees for their important comments which have mindfully improved the original manuscript. They also thank Juliana G. F. Souza for very fruitful discussions. This work is partially supported by Brazilian foundations Coordination for the Improvement of Higher Education Personnel (CAPES - Financial Code 001), CNPq (32441/2021-2), FAPESP (2020/09838-0).

%%% REFERENCES %%%
{\small\bibliography{commat}}
% Please, do not change the above line and do not insert your references
% into this file.  Instead, insert your references into the commat.bib file.
% See commat.bib for further instructions.
\EditInfo{April 1, 2023}{August 30, 2023}{Camilla Hollanti and Lenny Fukshansky}
\end{document}